\documentclass[10pt]{article}
\pdfoutput=1
%
%
%
%
%
%
%
%

%
%
%
%
%
%

%
\usepackage[round,comma]{natbib}
\bibliographystyle{plainnat}

\usepackage{geometry}
\geometry{verbose,a4paper,tmargin=2cm,bmargin=3cm,lmargin=3cm,rmargin=3cm}

\parindent0cm

\usepackage{subfig}

%



\typeout{Using Ule's latex defaults!}



\usepackage{amssymb}
\usepackage{amsmath}
\allowdisplaybreaks[3] 
\usepackage{bbold} 


\PassOptionsToPackage{pdftex}{graphicx} 

\usepackage[pdftex]{graphicx}



 \usepackage{ifthen} 
 \usepackage{layout} 
\usepackage{xspace}
\usepackage{afterpage}
\usepackage{url}
\usepackage{textcomp} 

\usepackage{wasysym} 



\usepackage[latin1]{inputenc} 

\def\ba#1\ea{\begin{align*}#1\end{align*}} 
\def\banum#1\eanum{\begin{align}#1\end{align}} 


\newcommand{\bi}{\begin{itemize}}
\newcommand{\ei}{\end{itemize}}
\newcommand{\be}{\begin{enumerate}}
\newcommand{\ee}{\end{enumerate}}
\newcommand{\bc}{\begin{center}}
\newcommand{\ec}{\end{center}}




 
\newtheorem{theorem}{Theorem} 
\newtheorem{lemma}[theorem]{Lemma} 
\newtheorem{proposition}[theorem]{Proposition} 

\newtheorem{corollary}[theorem]{Corollary}
\newtheorem{definition}[theorem]{Definition}





\newcommand{\ulesqed}{\hfill\smiley}

\ifx\proof\undefined
\typeout{ULE: PROOF IS UNDEF, THUS DEF IT:}
\newenvironment{proof}{\par\noindent{\em Proof. }}{\ulesqed\\[2mm]}
\else
\typeout{ULE: PROOF IS ALREADY DEFINED. }
\fi








\newcommand{\RR}{\mathbb{R}}


\let\R\undefined 
\newcommand{\R}{\RR}


\newcommand{\Xcal}{\mathcal{X}}





\DeclareMathOperator{\supp}{supp} 
\DeclareMathOperator{\diam}{diam}

\DeclareMathOperator{\vol}{vol} 



\newcommand{\norm}[1]{\|#1\|}


\newcommand{\pmin}{p_{\min}}
\newcommand{\pmax}{p_{\max}}
\newcommand{\dmin}{d_{\min}}

\newcommand{\wmin}{w_{\min}}
\newcommand{\wmax}{w_{\max}}
\newcommand{\rkmin}{R_{k,\min}}
\newcommand{\rkmax}{R_{k,\max}}

\newcommand{\Lsym}{L_{\text{sym}}}

\newcommand{\lsym}{\Lsym}


%

%



\newcommand{\knn}{\text{kNN}}

\typeout{ULE: HI ULE1 }



\newcommand{\eps}{\ensuremath{\varepsilon}}
\renewcommand{\epsilon}{\ensuremath{\varepsilon}}
\renewcommand{\phi}{\ensuremath{\varphi}}




\newcommand{\condon}{\; \big| \;}
\let\Pr\undefined 
\DeclareMathOperator{\Pr}{Pr}

\newcommand{\hd}{\hdots}





\newcommand{\inv}{^{-1}} 
\newcommand{\pinv}{^{\dagger}} 




\newcommand{\ER}{Erd\H{o}s-R\'{e}nyi\xspace} 
\newcommand{\er}{Erd\H{o}s-R\'{e}nyi\xspace}

\typeout{DONE WITH ULES LATEX DEFAULTS}

\let\star\undefined
\newcommand{\star}{$(\bigstar)$}

\newcommand{\ip}[2]{\ensuremath{\left\langle #1, #2 \right\rangle}}

\newcommand{\inner}[1]{\left\langle#1\right\rangle}

\newcommand{\nmin}{N_{\min}}

\DeclareMathOperator{\neigh}{Neigh}
\def\Exp{\mathbb{E}}

\newcommand{\pbetween}{p_{\text{between}}}
\newcommand{\pwithin}{p_{\text{within}}}

\begin{document}

\title{Hitting and commute times in large graphs are often misleading}
\author{Ulrike von Luxburg\\
Max Planck Institute for Biological Cybernetics, T{\"u}bingen, Germany
\and
Agnes Radl\\
Max Planck Institute for Biological Cybernetics, T{\"u}bingen, Germany
\and
 Matthias Hein\\
Saarland University, Saarbr{\"u}cken, Germany
}

\maketitle

\begin{abstract} 
Next to the shortest path distance, the second most popular distance
function between vertices in a graph is the commute distance
(resistance distance).   For two vertices $u$ and $v$, the hitting time $H_{uv}$
  is the expected time it takes a random walk to travel from $u$ to
  $v$. The commute time is its symmetrized version $C_{uv} = H_{uv} +
  H_{vu}$. In our paper we study the behavior of hitting times and
  commute distances when the number $n$ of vertices in the graph is
  very large. We   prove that as $n \to \infty$,  under mild assumptions, hitting times and commute
  distances converge to expressions that do not take into account the
  global structure of the graph at all. Namely, the hitting time
  $H_{uv}$ converges to $1/d_v$ and the commute time to $1/d_u +
  1/d_v$ where $d_u$ and $d_v$ denote the degrees of vertices $u$ and
  $v$. In these cases, the hitting and commute times are misleading in
  the sense that they do not provide information about the 
  structure of the graph. We focus on two major classes of random graphs: random
  geometric graphs (\knn-graphs, $\eps$-graphs, Gaussian similarity
  graphs) and random graphs with given expected degrees (in
  particular, \er graphs with and without planted partitions). %
\end{abstract}

\section{Introduction} \label{sec-intro}

Given an undirected, weighted graph $G = (V,E)$ with $n$ vertices, the
commute distance between two vertices $u$ and $v$ is defined as the
expected time it takes the natural random walk starting in vertex $u$ to travel
to vertex $v$ and back to $u$. It is equivalent (up to a constant) to
the resistance distance, which interprets the graph as an electrical
network and defines the distance between vertices $u$ and $v$ as the
effective resistance between these vertices. See below for exact
definitions and 
\citet{DoySne84,KleRan93,XiaGut03, FouPirRenSae06} 
for background reading. 
The commute distance is very popular in many
different fields of computer science and beyond. As examples consider 
the tasks of graph embedding
\citep{Guattery98,SaeFouYenDup04,QiuHan06_spr,WittmannEtal09}, 
graph sparsification \citep{SpiSri08}, 
social network analysis \citep{LibKle03},
proximity search \citep{SarMooPra08}, 
collaborative filtering \citep{FouPirRenSae06},
clustering \citep{YenEtal05},
semi-supervised learning \citep{ZhoSch04},
dimensionality reduction \citep{HamEtal04}, 
image processing \citep{QiuHan05}, 
graph labeling \citep{HerPon06,CesGenVit09}, 
theoretical computer science 
(\citealp{%
AleliunasEtal79,
ChandraEtal89,
AviErc07,
CooFri03,
CooFri05,
CooFri07,
CooFri09}), 
and various applications in chemometrics and bioinformatics 
\citep{KleRan93,Ivanciuc00,Fowler02,Roy04,GuillotEtal09}. \\

The commute distance has many nice properties, both from a theoretical
and a practical point of view.   It is a Euclidean distance function and can be computed in
closed form. As opposed to the shortest path distance, it takes into
account all paths between $u$ and $v$, not just the shortest one. As a
rule of thumb, the more paths connect $u$ with $v$, the smaller their
commute distance becomes. Hence it supposedly satisfies
the following, highly desirable property:

\begin{quotation} \noindent{\bf Property \star: }
Vertices in the same ``cluster'' of the graph have a
small commute distance, whereas vertices in different clusters
of the graph have a large commute distance to each other. 
\end{quotation}

Consequently, the commute distance is considered a convenient tool
to encode the cluster structure of the graph. \\

In this paper we study how the commute distance behaves when the size
of the graph increases. Our main result is that if the graph is large
enough, then in many graphs the hitting times and commute
distances can be approximated by an extremely simple formula with very
high accuracy. Namely, denoting by $H_{uv}$ the expected hitting time
and by $C_{uv}$ the commute distance between two vertices $u$ and $v$,
by $d_u$ the degree of vertex $u$, and by $\vol(G)$ the volume of the
graph, we show that if the graph gets large enough, for all vertices
$u \neq v$,
\ba %
\frac{1}{\vol(G)}H_{uv}\approx \frac{1}{d_v}  
&& \text{ and } &&
\frac{1}{\vol(G)}C_{uv}\approx \frac{1}{d_u} + \frac{1}{d_v}. 
\ea
On the one hand, we prove these results for arbitrary fixed, large graphs
(Proposition \ref{prop-approx}). Here the quality of the approximation
depends on geometric quantities describing the graph (such as minimal
and maximal degrees, the spectral gap, and so on). The main part of
the paper prove that results hold with probability tending to 1, as $n
\to \infty$, in all major classes of random graphs: random geometric
graphs ($k$-nearest neighbor graphs, $\eps$-graphs,
and Gaussian similarity graphs)  %
and for random graphs with given expected degrees (in particular, also
\er
graphs with and without planted partitions). As a rule of thumb, our
approximation results hold whenever
 the minimal degree in the graph
increases with $n$ (for example, as $\log (n)$ in random geometric
graphs or as $\log^2(n)$ in random graphs with given expected degrees). \\

In order to make our results as accessible as possible to a wide range
of computer scientists, we present two
different strategies to prove our results: one based on flow arguments
on electrical networks and another based on spectral
arguments. While the former approach leads to tighter bounds, the
latter is more general. An important step on the way is that we prove
bounds on the spectral gap in all classes of random geometric
graphs. This is interesting by itself as the spectral gap governs many
important properties and processes on graphs. In this generality, the
bounds on the spectral
gaps are new. \\

Our results have important consequences. \\

{\bf Hitting and commute times in large graphs are often misleading.}  On the negative side, our approximation result shows
that contrary to popular belief, the commute distance does not take
into account any global properties of the data, at least if the graph
is ``large enough''. It just considers the local density (the degree
of the vertex) at the two vertices, nothing else. The resulting large
sample commute distance $dist(u,v) = 1/d_u + 1/d_v$ is completely
meaningless as a distance on a graph.  For example, all data points
have the same nearest neighbor (namely, the vertex with the largest
degree), the same second-nearest neighbor (the vertex with the
second-largest degree), and so on. In particular, one of the main motivations
to use the commute distance, Property \star, no longer holds when the
graph becomes large enough.  Even more disappointingly, computer
simulations show that $n$ does not even need to be very large before
\star\ breaks down. Often, $n$ in the order of 1000 is already enough
to make the commute distance very close to its approximation
expression.  This effect is even stronger if the dimensionality of the
underlying data space is large.  Consequently, even on moderate-sized
graphs, the use of the raw commute distance should be discouraged. \\

{\bf Efficient computation of approximate commute distances.} 
In some applications the commute distance is not used as
a distance function, but as a tool to encode the connectivity
properties of a graph, for example in graph sparsification \citep{SpiSri08}
or when computing bounds on mixing or cover times \citep{%
  AleliunasEtal79, ChandraEtal89, AviErc07,CooFri09} or graph labeling
\citep{HerPon06,CesGenVit09}. To obtain the commute distance between all points in a graph one has to
compute the pseudo-inverse of the graph Laplacian matrix, an operation
of time complexity $O(n^3)$. This is prohibitive in large graphs. To
circumvent the matrix inversion, several approximations of the commute
distance have been suggested in the literature
\citep{SpiSri08,SarMoo07,Brand05}. Our results lead to a
much simpler and well-justified way of approximating the commute
distance on large random geometric graphs. \\

We start our paper with Section \ref{sec-intuition} 
that tries to convey our main results and techniques on a very high
level. Then, after introducing general definitions
and notation (Section \ref{sec-notation}), we present our main results in
Section \ref{sec-main}. This section is divided into two parts (flow
based part and spectral part). All  proofs are
presented in Sections \ref{sec-proofs-flow} and
\ref{sec-proofs-spectral}. A final discussion can be found in Section
\ref{sec-discussion}. For the convenience of the reader, some basic facts on random geometric graphs are
presented in the appendix. Parts of this work is built on our conference
paper \citet{LuxRadHei10}. \\

\section{Intuition about our results and proofs} \label{sec-intuition}
\begin{figure}[t]
\bc
\includegraphics[width=0.7\textwidth]{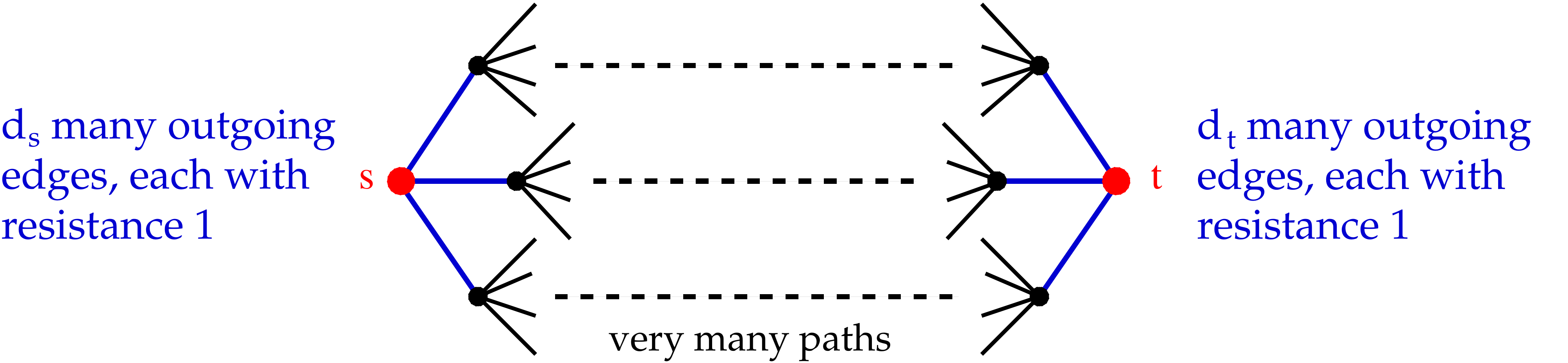}
\ec
\caption{\em Electrical network intuition: The effective resistance
  between $s$ and $t$ is dominated by the edges adjacent to $s$ and $t$.}
\label{fig-electrical}
\end{figure}

Before diving into technicalities, we would like to present our
results in an intuitive, non-technical way. Readers interested in
crisp theorems are encouraged to skip this section right away. \\

Informally the main result of our paper is the following: \\

{\bf Main result: } {\em 
Consider a ``large'' graph that is ``reasonably strongly'  connected. In
such a graph, the hitting times and commute distances between any two
vertices $u$ and $v$ can be
approximated by the simple expressions 
\ba %
\frac{1}{\vol(G)}H_{uv}\approx \frac{1}{d_v}  
&& \text{ and } &&
\frac{1}{\vol(G)}C_{uv}\approx \frac{1}{d_u} + \frac{1}{d_v}. 
\ea
}

In this section we want to present some intuitive arguments to
understand why  this makes sense. In order to 
show a broad picture and to make our results accessible to a general audience, we are going to present two completely different
approaches in our paper.

\subsection{Electrical network intuition}

Consider an unweighted graph as an electrical network where each edge has
resistance 1. We want to compute the effective resistance between two
fixed vertices $s$ and $t$ by exploiting the electrical laws. Resistances
in series add up, that is for two resistances $R_1$ and $R_2$ in
series we get the overall resistance $R=R_1 + R_2$. Resistances in
parallel lines satisfy $1/R = 1/R_1 + 1/R_2$. Now consult the
unweighted electrical network in Figure \ref{fig-electrical}. Consider the vertex $s$ and
all edges from $s$ to its $d_s$ neighbors. The resistance ``spanned''
by these $d_s$ parallel edges satisfies $1/R = \sum_{i=1}^{d_s} 1 = d_s$,
that is $R = 1/d_s$. Similarly for $t$. Between the neighbors of $s$
and the ones of $t$ there are very many paths. It turns out that the
contribution of these paths to the resistance is negligible
(essentially, we have so many wires between the two neighborhoods that
electricity can flow nearly freely). So the overall effective
resistance between $s$ and $t$ is dominated by the edges adjacent to
$i$ and $j$ with contributions $1/d_s + 1/d_t$. \\

The main theorems derived from the electrical network approach are
Theorems \ref{th-main-eps} and \ref{th-main-knn}.  In order to prove
them, we bound the electrical resistance between two vertices using
flow arguments. The overall idea is that we construct a unit flow
between $s$ and $t$ that uses as many paths as possible. From the
technical side, this approach has the advantage that we can throw away
irrelevant parts of the graph --- we can concentrate on a ``valid
region'' that contains $s$, $t$, and many paths between $s$ and
$t$. For this reason, we need less assumptions on the geometry of the
underlying space ``close to its boundary''. We explicitly construct
such flows for random geometric graphs. The idea is to place a grid on
the underlying space and control the flow between different
cells of the grid. \\

As far as we can see, this technique can only be used to bound the resistance distance
$R_{ij}$, it does not work for the individual hitting times $H_{ij}$ or $H_{ji}$. \\

\subsection{Random walk intuition}

\begin{figure}[t]
\bc
\includegraphics[width=0.3\textwidth]{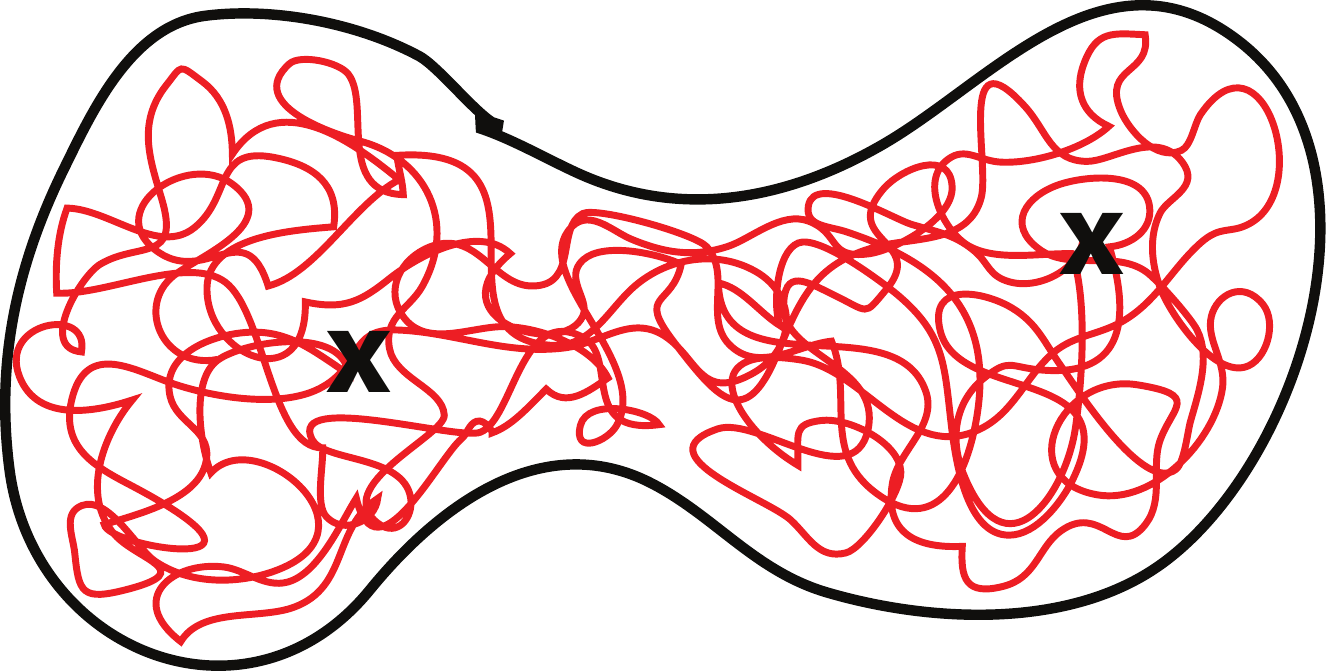}
\ec
\caption{\em Random walk intuition: Between its start and target vertex
  (black crosses), the random walk wanders around so long that by the time
it finally arrives at its target it has already ``forgotten'' where it started from. }
\label{fig-rw}
\end{figure}

Another approach to understand our convergence results is based on
random walks. Essentially, our results for the hitting times $H_{uv}$ 
say that regardless at which vertex $u$ we start, the time to hit
vertex $v$ just depends on the degree of $v$. What happens is that as
the graph gets large, the random walk can explore so many paths that by the
time it is close to $v$ it ``has forgotten'' where it came from
(cf. Figure \ref{fig-rw}). This
is why the hitting time does not depend on $u$. Once the
random walk is in the vicinity of $v$, the question is just whether it
exactly hits $v$ or whether it passes close to $v$ without hitting
it. Intuitively, the likelihood to hit $v$ is inversely proportional to the density of the
graph close to $v$: if there are many edges in the neighborhood of $v$, then it is
easier to hit $v$ than if there are only few edges. This is how
the inverse degree comes into play. \\

Stated slightly differently, the random walk has already mixed before
it hits $v$. For this reason, the hitting time does not depend on
$u$. All that is left is some component depending on $v$. Notably, this component exactly
coincides with the mean return time of $v$ (the expected time it takes
a random walk that starts at $v$ to return to $v$), which is given as
$\vol(G) / d_v$. \\

In the light of our explanation it is reasonable to expect that the quality of our
approximation depends on the mixing time of the random walk, and the
latter is known to be governed by the size of the spectral gap, in particular the
quantity $1 - \lambda_2$ (see below for exact definitions). Indeed, we will
see in our Key Proposition
\ref{prop-approx} that $1-\lambda_2$ is exactly the quantity that governs the
deviation bound for the hitting and commute times. If $1-\lambda_2$ is small, then the graph is
too well-clustered, has a large mixing time, and our approximation
guarantee gets worse. \\

The spectral approach leads to the main theorems in Section
\ref{sec-main-spectral}. We first have to express 
the commute time in terms of a spectral representation of the
graph (Proposition \ref{prop-approx}). 
To make use of this proposition 
we need a lower bound on the
spectral gap $1-\lambda_2$ of the graph. \\
To bound the spectral
gap in random geometric graphs we use path-based arguments as well, namely we use the canonical
path technique of \citet{DiaStr91}. Here one has to construct a set
of ``canonical paths'' between each pair of vertices in the graph. The
goal is to distribute these paths ``as well as possible'' over the
graph.  As in the case above we use a grid to control the paths
between different cells of this grid. This is very reminiscent of the
technique described above. However, an important difference is that we
now need to consider paths between all pairs of points (we have to
bound the spectral gap of the whole graph) instead of just paths
between $s$ and $t$. In the language of flows, instead of looking at a
unit flow from $s$ to $t$ we would have to use
multi-commodity flows between all pairs of vertices instead of a
single flow from $s$ to $t$ (cf.\ \citealp{Sinclair92,DiaSal93}). 
For this reason, we need stronger assumptions on
the geometry of the underlying space.\\

For the case of random graphs with expected degrees, we build on
results about the spectral gap from the literature. As in the
electrical network approach, we need to ensure that these graphs are
``strongly enough'' connected. This will be achieved by requiring that
the minimal vertex degree in the graph is ``large enough'' (with
respect to the number $n$ of vertices). Our results hold for any
arbitrary degree distribution, as soon as the minimal degree grows
slowly with $n$.

The advantage of the spectral approach is that it is very general. It
works for any kind of graph, and as opposed to the electrical network
approach can also be used to treat the hitting times directly. The
technical disadvantage is that we cannot ``throw away'' irrelevant
parts of the graph as in the electrical network approach (because no
part of the graph is irrelevant to
the gap), leading to slightly worse bounds. \\

\subsection{General limitations}

There are two major limitations to our results: 

\bi 
\item 
Our approximation results only hold if the graph is ``reasonably
strongly'' connected and does not have too
large a bottleneck. This ensures that the overall behavior of the
commute distance cannot be dominated by a single edge. We can see this in both approaches. In the
electrical network approach, the argument that 
``electricity can nearly flow without resistance'' on the ``many paths'' breaks
down if there is a strong bottleneck between $u$ and $v$ which all electricity has to
pass. In the spectral approach, a strong bottleneck leads to a small
spectral gap, and then the bounds become meaningless as well. 

\item Our results only hold if the minimal degree in the graph is
``reasonably large'', compared to the number $n$ of vertices. For
example, in the
random graph models the minimal degree has to grow slowly with $n$,
say as $\log n$. This is to ensure that there are no single
vertices that can have extremely high influence on the commute
distance. \\

The downside  
of this
condition is that our results do not hold for power law graphs in
which the smallest degree is constant. 

\ei

As presented in this intuitive section, it nearly sounds as if our
results were obvious. Indeed, in hindsight they seem to be obvious,
and this is part of why we like our results so much: they were very
surprising when we found them, but can be made plausible to a wide
range of people. We would like to stress that all these results were
not known before our work, and that the ``intuitive explanations''
have to be seen as the succession of our technical work. In
particular, the technical work presented in the rest of this paper
makes explicit all the sloppy terms like ``reasonably connected'' and
``large enough''.

\section{General setup, definitions and notation}\label{sec-notation}

We consider undirected graphs $G = (V,E)$ that are connected and not bipartite. By $n$ we denote the
number of vertices. The adjacency matrix is denoted by $W :=
(w_{ij})_{i,j=1, \hd, n}$. In case the graph is weighted, this matrix
is also called the weight matrix. All weights are assumed to be
non-negative. The minimal and maximal weights in the graph are
denoted by $w_{\min}$ and $w_{\max}$. By $d_i := 
\sum_{j=1}^n w_{ij}$ we denote the degree of vertex $v_i$. The
diagonal matrix $D$ with diagonal entries
$d_1, \hd, d_n$ is called the {\em degree matrix}, the minimal
and maximal degrees are denoted $d_{\min}$ and $d_{\max}$.  The {\em unnormalized graph
  Laplacian} is given as $L:=D-W$, the normalized one as $\Lsym =
D^{-1/2}LD^{-1/2}$. 
Consider the natural random walk on $G$. Its transition matrix is
given as $P = D^{-1}W$.   It is well-known that $\lambda$ is an eigenvalue
of $\lsym$ if and only if $1-\lambda$ is an eigenvalue of $P$. By $1 =
\lambda_1  \geq \lambda_2 \geq \hd \geq \lambda_n>-1$ we denote the
eigenvalues of $P$. The quantity 
$1 - \max\{ \lambda_2, |\lambda_{n}|\}$ is called the {\em spectral
  gap} of $P$. \\

The {\em hitting time} $H_{uv}$ is defined as the expected
time it takes a random walk starting in vertex $u$ to travel to vertex
$v$ (where $H_{uu} = 0$ by definition).  The {\em commute distance}
({\em commute time}) between
$u$ and $v$ is defined as $C_{uv}  :=H_{uv} + H_{vu}$. 
Recall that for a symmetric, non-invertible 
matrix $A$ its Moore-Penrose inverse is 
defined as $A\pinv := (A + U)\inv - U$ where $U$ is the projection on
the eigenspace corresponding to eigenvalue 0. 
It is well known that commute times can be expressed in
terms of the Moore-Penrose inverse $L\pinv$ of the unnormalized graph
Laplacian (e.g., \citealp{KleRan93,XiaGut03,FouPirRenSae06}): 
\ba
R_{ij} = \inner{ e_i - e_j, L\pinv (e_i - e_j) } , 
\ea
where $e_i$ is the $i$-th unit vector in $\R^n$. 
The following representations for commute and hitting times
involving the pseudo-inverse $\lsym\pinv$ of the normalized
graph Laplacian are less well known. 

\begin{proposition}[Closed form expression for hitting and commute times]
\label{prop-commute} Let $G$ be a connected, undirected 
graph with $n$ vertices. 
The hitting times $H_{ij}$, $i \neq j$,   can be computed by 
\ba
H_{ij} = \vol(G) \Big\langle 
\frac{1}{\sqrt{d_j}} e_j \; , \;
\lsym\pinv
\Big( \frac{1}{\sqrt{d_j}}  e_j
-
\frac{1}{\sqrt{d_i}} e_i
\Big) 
\Big\rangle, 
\ea
and the commute times satisfy 
\ba
C_{ij} = \vol(G) \Big\langle 
\frac{1}{\sqrt{d_i}} e_i  - \frac{1}{\sqrt{d_j}} e_j \; , \;
\lsym\pinv
\Big( \frac{1}{\sqrt{d_j}}  e_j
-
\frac{1}{\sqrt{d_i}} e_i
\Big) 
\Big\rangle. 
\ea
\end{proposition}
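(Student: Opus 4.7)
My plan is to derive the closed form for $H_{ij}$ by spectrally representing the fundamental matrix of the natural random walk, and then read off the commute-time formula ``for free'' by noting that $C_{ij} = H_{ij} + H_{ji}$ symmetrizes the one-sided bracket into the claimed symmetric quadratic form.

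The first step is the classical identity
\begin{align*}
H_{ij} = \frac{Z_{jj} - Z_{ij}}{\pi_j}, \qquad Z := \sum_{t=0}^\infty (P^t - \Pi),
\end{align*}
where $\Pi$ is the rank-one matrix whose every row is the stationary distribution $\pi_k = d_k/\vol(G)$; convergence of this series is guaranteed by the hypothesis that $G$ is connected and non-bipartite, so that all non-trivial eigenvalues of $P$ lie strictly inside $(-1,1)$. The second step is to transfer the spectrum of $\Lsym$ to $P$ through the similarity $P = D^{-1/2}(I - \Lsym) D^{1/2}$: letting $\phi_1,\ldots,\phi_n$ be an orthonormal eigenbasis of $\Lsym$ with eigenvalues $0 = \mu_1 < \mu_2 \leq \cdots \leq \mu_n$, the vectors $D^{-1/2}\phi_k$ and $D^{1/2}\phi_k$ are respectively right and left eigenvectors of $P$ for the eigenvalue $1 - \mu_k$, and these two families are biorthonormal with respect to the standard inner product. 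Since $\phi_1 = D^{1/2}\mathbf{1}/\sqrt{\vol(G)}$, the $k=1$ term reproduces exactly $\Pi$, so $P^t - \Pi = \sum_{k \geq 2}(1-\mu_k)^t (D^{-1/2}\phi_k)(D^{1/2}\phi_k)^T$. Summing the resulting geometric series term by term yields
\begin{align*}
Z_{ij} \;=\; \sum_{k \geq 2}\frac{1}{\mu_k}\sqrt{\frac{d_j}{d_i}}\,\phi_k(i)\phi_k(j) \;=\; \sqrt{\frac{d_j}{d_i}}\,(\Lsym\pinv)_{ij},
\end{align*}
and plugging this into the expression for $H_{ij}$ together with $1/\pi_j = \vol(G)/d_j$ gives the first claim directly. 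The formula for $C_{ij}$ then follows by adding $H_{ij} + H_{ji}$ and collecting terms, with the asymmetric first factors combining into the symmetric difference $\tfrac{1}{\sqrt{d_i}}e_i - \tfrac{1}{\sqrt{d_j}}e_j$.

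The only genuine technical hurdle is the bookkeeping forced by the non-symmetry of $P$: one must keep the left and right eigenvectors of $P$ straight and verify that $\{(D^{-1/2}\phi_k, D^{1/2}\phi_k)\}_{k=1}^n$ is a biorthonormal system, which is what legitimizes the resolution of identity $I = \sum_k (D^{-1/2}\phi_k)(D^{1/2}\phi_k)^T$ and hence the above spectral expansion of $P^t$. Once this is in place, everything else is mechanical algebra. An alternative route would be to bypass the fundamental matrix and verify the formula directly, by checking that the claimed right-hand side satisfies the Dirichlet-type system $h_j = 0$ and $h_i = 1 + \sum_k p_{ik} h_k$ for $i \neq j$ which uniquely characterises the column $H_{\cdot j}$ of hitting times to $j$; this would reduce the argument to the identity $\Lsym\pinv \Lsym = I - \phi_1\phi_1^T$, at the cost of losing the transparent link with the mixing behaviour of the walk.
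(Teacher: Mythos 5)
Your proof is correct and arrives at the same formula, but you work one level lower than the paper does. The paper's proof simply cites, as a known result from Lov\'asz (1993), the spectral expansion
$H_{ij} = \vol(G)\sum_{k\geq 2}\tfrac{1}{1-\lambda_k}\bigl(\tfrac{u_{kj}^2}{d_j} - \tfrac{u_{ki}u_{kj}}{\sqrt{d_id_j}}\bigr)$
(with $u_k$ an orthonormal eigenbasis of $\Lsym$), and then observes in a single line that the sum is exactly the inner product involving $\Lsym\pinv$. You instead start from the fundamental-matrix identity $H_{ij} = (Z_{jj}-Z_{ij})/\pi_j$, construct the biorthogonal eigensystem of $P$ out of the $\Lsym$-eigenbasis (noting that $D^{-1/2}\phi_k$ and $D^{1/2}\phi_k$ are the right and left eigenvectors and that the $k=1$ term reproduces $\Pi$), sum the geometric series to obtain $Z_{ij} = \sqrt{d_j/d_i}\,(\Lsym\pinv)_{ij}$, and substitute. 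In effect you re-derive Lov\'asz's formula rather than invoke it; this makes the argument self-contained and also renders explicit where non-bipartiteness enters (you need $|1-\mu_k|<1$ for the geometric series to converge, which is precisely why the paper's global assumptions exclude $\lambda_n=-1$), whereas the paper's shorter route outsources exactly that step. Both are the same spectral computation at heart, and nothing in your proposal is off; note only that your passing claim that the non-symmetry of $P$ is the main ``technical hurdle'' is a little overstated, since the similarity $P = D^{-1/2}(I-\Lsym)D^{1/2}$ immediately hands you the biorthonormal pair, and this is also implicitly the content of the Lov\'asz formula the paper cites.
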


Closely related to the 
commute distance is the {\em resistance distance}. Here one interprets
the graph as an electrical network where the edges represent resistors. 
The conductance of a resistor is given by the corresponding edge weight. 
The resistance distance $R_{uv}$ between two vertices $u$ and $v$ is defined as the effective
resistance between $u$ and $v$ in the network. It is well known that
the resistance distance coincides with the commute distance up to a
constant: $C_{uv}= \vol(G) R_{uv}$. 
For background reading on resistance and commute distances see
\citet{DoySne84,KleRan93,XiaGut03, FouPirRenSae06}. \\

Our main focus in this paper is the class of {\em geometric
  graphs}. For a {\em deterministic} (fixed) geometric graph we
consider a fixed set of points $X_1, \hd, X_n \in \R^d$. These points form the
vertices $v_1,\hd, v_n$ of the graph. The edges in the graph are defined such that
``neighboring points'' are connected. We consider the most
popular types of random geometric graphs. 
In the {\em $\eps$-graph} we connect two
points whenever their Euclidean distance is less than or equal to
$\eps$. In the undirected, {\em symmetric $k$-nearest neighbor graph} we connect
$v_i$ to $v_j$ if $X_i$ is among the $k$ nearest neighbors of $X_j$
{\em or} 
vice versa. In the {\em mutual $k$-nearest neighbor graph} we connect
$v_i$ to $v_j$ if $X_i$ is among the $k$ nearest neighbors of $X_j$
{\em and} 
vice versa. Note that by default, the terms $\eps$- and $\knn$-graph
refer to unweighted graphs in our paper. When we treat weighted
graphs, we always make it explicit. 
For a general {\em similarity graph} we build a weight matrix between
all points based on a similarity function $k: \R^d \times \R^d \to
\R_{\geq 0}$, that is we define the weight matrix $W$ with entries
$w_{ij} = k(X_i, X_j)$ and consider the fully connected graph with
weight matrix $W$. The most popular weight function in
applications is the Gaussian similarity function $w_{ij} = \exp(-
\|X_i - X_j\|^2 / h^2)$, where $h >0$ is a bandwidth parameter. \\ 

While these definitions make sense with any fixed set of vertices, we
are most interested in the case of {\em random} geometric graphs. Here
we assume that the underlying set of vertices $X_1, ..., X_n$ has been
drawn i.i.d. according to some probability density $p$ on $\R^d$. Once
the vertices are known, the edges in the graphs are constructed as
described above. In the random setting it is convenient to make
regularity assumptions in order to be able to control quantities such
as the minimal and maximal degrees. Sometimes we need to make these
assumptions about the whole underlying space, sometimes just for a
selected subset of it. Thus we introduce the following general
definition.

\begin{definition}[Valid region]\label{def-valid-region} Let $p$ be
  any density on $\R^d$. We call a connected subset $\Xcal
  \subset \R^d$ a
  {\em valid region} if the following
  properties are satisfied: 

\be 

\item The density on $\Xcal$ is bounded away from 0, that is for all
  $x \in \Xcal$ we have that $p(x) \geq p_{\min} > 0$ for some constant
  $p_{\min}$. %

\item $\Xcal$ has ``bottleneck'' larger than some
 value $h > 0$: the set $\{x\in\mathcal{X} \;:\; dist(x,\partial
 \mathcal{X}) > h/2\}$ is connected (here $\partial \Xcal$ denotes the
 topological boundary of $\Xcal$).

\item The boundary of $\Xcal$ is regular in the following sense. We
  assume that there exist positive constants $\alpha > 0$ and
  $\eps_0>0$ such that if $\eps < \eps_0$, then for all points 
$x \in \partial \Xcal$ we have $\vol( B_\eps(x) \cap \Xcal) \geq
  \alpha \vol(B_\eps(x))$ (where $\vol$ denotes the Lebesgue volume). Essentially this condition just excludes
  the situation where the boundary has arbitrarily thin spikes.  \ee

Sometimes we consider a {\em valid region with respect to two
points $s$, $t$}. Here we additionally assume that $s$ and $t$ are interior points of $\Xcal$. 
\end{definition}

In the spectral part of our paper, we always have to make a couple of
assumptions that will be summarized by the term  {\bf general
  assumptions}. They are as follows: 
First we assume that $\Xcal := \supp(p)$ is a valid region according to
Definition \ref{def-valid-region}. Second, we assume that $\Xcal$ does not contain any
holes and does not become arbitrarily narrow: there exists a
homeomorphism $ h: \Xcal \to [0,1]^d$ and constants $0 < L_{\min} < L_{\max} <
\infty$ 
such that for all $x, y \in \Xcal$ we have
\ba
L_{\min} \| x-y \| \; \leq \; 
\|h(x) - h(y)\| 
\;\leq\;
L_{\max} \| x-y \|. 
\ea
This condition restricts $\Xcal$ to be topologically 
equivalent to the cube. In applications this is not a strong
assumption, as the occurrence of ``holes'' with vanishing probability
density is unrealistic due to the presence of noise in the data
generating process. More generally we believe 
that our results
can be generalized to other homeomorphism classes, but refrain from
doing so as it would substantially increase the amount of
technicalities.\\

In the following we denote the volume of the unit ball in $\R^d$ by $\eta_d$. %
For readability reasons, we are going to state our main results using constants
$c_i > 0$. These constants are independent of $n$ and the graph
connectivity parameter ($\eps$ or $k$ or $h$, respectively) but depend on the
dimension, the geometry of $\Xcal$, and $p$. The values of all
constants are determined explicitly in the proofs. They are not the
same in different propositions. %
\section{Main results} \label{sec-main}

Our paper comprises two different approaches. In the first approach 
we analyze the resistance distance by flow based
arguments. This technique is somewhat restrictive in the sense that it
only works for the resistance distance itself (not the hitting times)
and we only apply it to random geometric graphs. The advantage is that in
this setting we obtain good convergence conditions and rates. 
The second approach is based on spectral arguments and is more
general. It works for various kinds of graphs and can treat hitting
times as well. This comes at the price of slightly stronger
assumptions and worse convergence rates. \\

\subsection{Results based on flow
  arguments} 

\begin{theorem}[Commute distance on $\eps$-graphs] \label{th-main-eps}
  Let $\Xcal$ be a valid region with bottleneck $h$ and minimal
  density $\pmin$. %
For $\eps \leq h$, consider an unweighted $\eps$-graph built from  the
sequence $X_1, \hd, X_n$ that has been drawn i.i.d. from the density
$p$. 
Fix $i$ and $j$. Assume that $X_i$ and $X_j$ have distance at
  least $h$ from the boundary of $\Xcal$, and that the distance
  between $X_i$ and $X_j$ is at least $8 \eps$. 
Then there exist
  constants $c_1, \hd, c_7 > 0$ (depending on the dimension and
  geometry of $\Xcal$) such that with probability at least $1
  - c_1 n \exp(-c_2 n \eps^d) - c_3 \exp(-c_4 n \eps^d) / \eps^d$ the
  commute distance on the $\eps$-graph satisfies 
  \ba %
  \left|\frac{ n \eps^d}{\vol(G)} C_{ij} - \left(\frac{n \eps^d}{d_i} + \frac{n
        \eps^d}{d_j}\right) \right| & \leq
\begin{cases} 
 c_5  / { n \eps^d}
& \hspace{1cm} \text{ if } d > 3\\
c_6 \cdot {\log(1/\eps) } / {n \eps^3 }
& \hspace{1cm} \text{ if } d = 3\\
c_7 / {n \eps^3 }
& \hspace{1cm} \text{ if } d = 2\\\end{cases}
\ea
The probability converges to 1 if $n \to \infty$ and $n \eps^d / \log(n) \to \infty$. The
right hand side  of the deviation bound converges to 0 as $n\to\infty$, if 
\ba
\begin{cases} 
n \eps^d \to \infty  &  \text{ if } d > 3\\
n \eps^3 / \log(1/\eps) \to \infty & \text{ if } d = 3\\
n \eps^3 = n \eps^{d+1} \to \infty & \text{ if } d = 2. 
\end{cases}
\ea
Under these conditions, if the density $p$ is continuous and if $\eps\to 0$, then 
\ba
\frac{ n \eps^d}{\vol(G)} C_{ij}  \to \frac{1}{\eta_d p(X_i)} + \frac{1}{\eta_d p(X_j)}
\;\;\;\;\text{ a.s.} 
\ea
\end{theorem}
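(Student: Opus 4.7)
The plan is to convert the commute-time statement into a statement about the resistance distance via the identity $C_{ij} = \vol(G) R_{ij}$, and then to establish matching upper and lower bounds on $R_{ij}$ via Thomson's principle: $R_{ij}$ equals the minimum energy $\sum_{e} \theta(e)^2$ over all unit flows $\theta$ from $i$ to $j$ in the unweighted $\eps$-graph. The target bound has the form $R_{ij} = 1/d_i + 1/d_j + (\text{bulk})$, with the bulk piece carrying the dimension-dependent rate. Multiplying by $\vol(G)$ and combining with high-probability control of $d_i$, $d_j$ and $\vol(G)$ (all of order $n \eps^d$) then yields the claim.

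The lower bound is the easy direction. Since $\|X_i - X_j\| \geq 8\eps$, the edges incident to $i$ are disjoint from those incident to $j$. For any unit flow $\theta$ out of $i$ we have $\sum_{e \ni i} \theta(e) = 1$, so Cauchy--Schwarz gives $\sum_{e \ni i} \theta(e)^2 \geq 1/d_i$, and likewise at $j$. Summing the two local contributions already yields $R_{ij} \geq 1/d_i + 1/d_j$, with no bulk term needed.

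For the upper bound, I would construct an explicit unit flow. Near $i$, split the unit of outgoing flow equally across the $d_i$ incident edges (energy $1/d_i$); symmetrically at $j$. The work is in routing the bulk flow from the one-hop neighbourhood of $X_i$ to that of $X_j$ with small energy. Since $X_i$, $X_j$ and the straight segment between them lie inside the valid region $\Xcal$ at distance at least $h$ from $\partial \Xcal$, I would place an axis-aligned grid of cubes of side length $c\eps$ (with $c$ small enough that every pair of vertices in adjacent cubes is linked by an edge in the $\eps$-graph) covering a tube around this segment. A Bernstein/Chernoff bound together with the density lower bound $\pmin$ shows that each cube contains $\Theta(n\eps^d)$ vertices and each pair of adjacent cubes shares $\Theta(n^{2}\eps^{2d})$ edges with high probability. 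The bulk flow is then defined cube-by-cube: split the incoming flow uniformly across the cube's vertices and equalise it again across all edges crossing the next face. The squared-flow energy summed over the $\Theta(\eps^{-1})$ slices along the tube and the $\Theta(\eps^{-(d-1)})$ parallel cubes in each slice produces $O(1/(n\eps^d))$ for $d \geq 4$; for $d = 3$ the same computation yields an extra $\log(1/\eps)$ factor (the discrete analogue of the logarithmic conductance of a thick cylinder in $\R^{3}$), and for $d=2$ it yields $O(1/(n\eps^{3})) = O(1/(n\eps^{d+1}))$.

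The probability bound then follows by a union bound: concentration of $d_i$, $d_j$ and $\vol(G)$ accounts for the $n \exp(-c n \eps^{d})$ term, while concentration of the cube counts across the $O(\eps^{-d})$ grid cells accounts for the $\exp(-c n\eps^{d})/\eps^{d}$ term. The almost-sure convergence comes from the law of large numbers $d_i/(n\eps^{d}) \to \eta_d \, p(X_i)$ (using continuity of $p$ at $X_i$) together with $\vol(G)/(n\, \Exp d_1) \to 1$ and the vanishing of the bulk error under the stated scaling of $n\eps^{d}$. The main obstacle is tuning the grid-based flow so that (a) every cube in the tube is simultaneously non-empty and well-connected to its neighbours with the required probability, (b) every bulk edge carries flow of order $1/(\text{number of parallel routes})$, and (c) the resulting bulk energy matches the sharp dimension-dependent rate, in particular reproducing the $\log(1/\eps)$ factor in $d = 3$ and correctly handling the transition cubes where the tube meets the local fans at $X_i$ and $X_j$.
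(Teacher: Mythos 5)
Your proposal takes essentially the same route as the paper: reduce $C_{ij}$ to the resistance $R_{ij}$, bound $R_{ij}$ from below by $1/d_i + 1/d_j$ (paper uses Rayleigh monotonicity by shorting all intermediate vertices; your Cauchy--Schwarz argument at the two endpoints is an equivalent, slightly more elementary variant, valid here since $\|X_i-X_j\|\geq 8\eps>\eps$ guarantees $i$ and $j$ are non-adjacent), and bound $R_{ij}$ from above by Thomson's principle using an explicit grid-based flow through a tube of cubes around the segment from $X_i$ to $X_j$, with a fan-out phase at each end producing the dimension dependence. The one place where the paper does nontrivial extra bookkeeping that you flag only as obstacle (c) is the fan-out: there the paper expands through concentric layers of a \emph{3-dimensional} cross-section for all $d>3$ (so the layer-wise energy is $\sum_i z_{i-1}^{-1}N_{\min}^{-2}$ with $z_i\sim i^2$, an over-harmonic sum giving the clean $O(1/N_{\min}^2)=O(1/(n\eps^d)^2)$), turns harmonic for $d=3$ (yielding the $\log(1/\eps)$), and linear for $d=2$; your full $(d-1)$-dimensional cross-section would also work but is unnecessary. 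The paper also explicitly pulls back the flow that leaves the starting cell (its Steps 2 and 5), a technicality you omit, before concluding exactly as you describe via degree and cell-count concentration plus the density estimate $d_i/(n\eps^d)\to\eta_d\,p(X_i)$.
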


\begin{theorem}[Commute distance  on
  $\knn$-graphs] \label{th-main-knn} Let $\Xcal$ be a valid region with bottleneck $h$ and density
  bounds $\pmin$ and $\pmax$. 
Consider an unweighted $\knn$-graph (either symmetric or mutual) such
that $(k/n)^{1/d}/2\pmax \leq h$, built from  the sequence $X_1, \hd,
X_n$ that has been drawn i.i.d. from the density $p$. 

Fix $i$ and $j$. Assume that $X_i$ and $X_j$ have distance at
  least $h$ from the boundary of $\Xcal$, and that the distance
  between $X_i$ and $X_j$ is at least $4 (k/n)^{1/d}/\pmax$. 
Then
  there 
exist constants 
$c_1, \hd, c_5 > 0$ 
such that with probability at least 
$1 - c_1  n \exp(-c_2 k)$ 
the commute distance on both the symmetric and the mutual $\knn$-graph
satisfies 
\ba %
\left|
\frac{k}{\vol(G)} C_{ij} - \left(\frac{k}{d_i} + \frac{k}{d_j} \right)\right|
& \leq  
\begin{cases} 
c_4 / {k } 
& \hspace{1cm} \text{ if } d > 3\\
c_5 \cdot { \log(n/k)} / {k} 
& \hspace{1cm} \text{ if } d = 3\\
c_6 {n^{1/2}} / {k^{3/2}}
& \hspace{1cm} \text{ if } d = 2\\
\end{cases}
\ea
The probability converges to 1 if $n \to \infty$ and $k
/ \log(n) \to \infty$. In case $d > 3$, the right hand side  of the deviation bound
converges to 0 if $k \to \infty$ (and under slightly worse conditions
in cases $d=3$ and $d=2$). 
Under these conditions, if the density $p$ is
continuous and if additionally $k/n\to 0$,
then $
\frac{k}{\vol(G)} C_{ij} \to 2$ almost surely. 
\end{theorem}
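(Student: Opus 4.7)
The plan is to follow the same flow-based strategy used for the $\eps$-graph case in Theorem \ref{th-main-eps}, suitably adapted to the $\knn$-setting. Using $C_{ij} = \vol(G) R_{ij}$, the goal reduces to showing that the effective resistance of the $\knn$-graph satisfies $R_{ij} = 1/d_i + 1/d_j + (\text{small error})$, where the error exhibits the dimension-dependent scaling stated in the theorem. The rescaling by $k/\vol(G)$ and the degree concentration ($d_i \approx k$ in both the symmetric and the mutual $\knn$-graph, up to factors controlled by $\pmin,\pmax$) then yield the claimed deviation bound.

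For the upper bound on $R_{ij}$, I would construct an explicit unit flow from $X_i$ to $X_j$ and invoke Thomson's principle. The construction places a regular grid of axis-aligned cubes of side length $\rho \asymp (k/n)^{1/d}$ on $\Xcal$, chosen small enough (using $\pmin, \pmax$) that the $\knn$-radius at every point in $\Xcal$ exceeds the cube diameter; then any two points in a pair of adjacent cubes are $\knn$-connected with high probability. A Chernoff bound gives that every cube contains $\Theta(k)$ points and every pair of adjacent cubes spans $\Theta(k)$ edges, uniformly over the $\Theta(n/k)$ cubes, with failure probability at most $c_1 n \exp(-c_2 k)$. The flow leaves $i$ uniformly through its $d_i$ incident edges (contributing energy $1/d_i$), routes from $i$'s cube to $j$'s cube by a balanced inter-cube flow, and enters $j$ uniformly (contributing $1/d_j$). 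Because each pair of adjacent cubes carries $\Theta(k)$ parallel edges, the per-edge energy in the bulk scales like $1/k$ times the squared lattice flux. Summing this over the grid reduces to the effective resistance of a $d$-dimensional lattice of $N \asymp (n/k)^{1/d}$ cubes per side, which by classical lattice Green's function estimates is $\Theta(1)$ in $d\geq 4$, $\Theta(\log N) = \Theta(\log(n/k))$ in $d=3$, and $\Theta(N) = \Theta((n/k)^{1/2})$ in $d=2$. Multiplied by the $1/k$ per-edge factor, this produces exactly the three regimes in the stated bound.

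For the lower bound on $R_{ij}$ I would use Nash-Williams-type cut arguments, isolating the cut sets consisting of the edges incident to $i$ and to $j$: these two disjoint cuts already force $R_{ij} \geq 1/d_i + 1/d_j$ up to lower-order corrections coming from the fact that the two cuts need not be edge-disjoint when $X_i$ and $X_j$ are too close (excluded here by the assumption $\|X_i - X_j\| \geq 4(k/n)^{1/d}/\pmax$). Combining the matching upper and lower bounds on $R_{ij}$ and then multiplying by the concentrated quantity $\vol(G)/k$ delivers the deviation bound in the statement. The final almost-sure limit $k C_{ij}/\vol(G) \to 2$ follows from $d_i/k \to 1$ a.s.\ under $k/n \to 0$ and the a.s.\ concentration of $\vol(G)/n$ to a constant.

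The main obstacle is step (ii), the flow construction, and in particular two $\knn$-specific complications: first, the connection radius varies with the local density, so a single grid scale must be chosen to work uniformly, forcing us to lose factors between $\pmin$ and $\pmax$ when calibrating $\rho$; second, the symmetric and mutual variants require separate handling of edge-count concentration between adjacent cubes (in the mutual case one must check that enough reciprocal $k$-nearest-neighbor relations survive). The $d = 2$ regime will be the most delicate since the lattice contribution itself grows with $n$, leaving less slack to absorb the fluctuations coming from the random cube populations.
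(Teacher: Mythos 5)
Your overall strategy --- rescale to the effective resistance, build a grid of cubes at scale $(k/n)^{1/d}$, show each cube holds $\Theta(k)$ points and adjacent cubes span $\Theta(k)$ edges with failure probability $n\exp(-\Theta(k))$, then bound $R_{ij}$ above by an explicit flow and below by a cut/contraction argument --- is precisely the paper's route (Propositions \ref{prop-resistance-fixed-graph}, \ref{th-lower}, \ref{prop-degrees-knn}, with grid width $g = \rkmin/(2\sqrt{d-1})$). The lower bound via two cut-sets around $i$ and $j$ is the same as the paper's Rayleigh contraction to a $3$-node network. So far so good.

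The problem is the step where you estimate the bulk contribution of the flow by ``classical lattice Green's function estimates'' for a $d$-dimensional lattice of side $N \asymp (n/k)^{1/d}$. The values you quote --- $\Theta(1)$ for $d \geq 4$, $\Theta(\log N)$ for $d=3$, $\Theta(N)$ for $d=2$ --- are not the lattice Green's function rates. For a free $d$-dimensional lattice, the point-to-point resistance is $\Theta(1)$ already for every $d \geq 3$ (transient walk) and $\Theta(\log N)$ for $d=2$, not $\Theta(N)$. Your figures happen to reproduce the theorem, but only because they are the resistance profile of a lattice of dimension $d-1$, i.e.\ off by one dimension from what you invoked. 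This discrepancy is not an accident: in the paper's Proposition~\ref{prop-resistance-fixed-graph}, the bottleneck term is not a $d$-dimensional lattice resistance at all. It arises from Step~3, where the unit flow is fanned out from a single cell into a $\min(d-1,3)$-dimensional ``hypercube'' $H(s)$ of side $\Theta(h)$ (transverse to the $s$--$t$ axis) before being sent down a parallel tube, and then re-collected at $H(t)$. Fanning out from a point to a $1$-dimensional pillar costs $\Theta(a)/\nmin^2$ with $a \asymp (n/k)^{1/d}$ (this is the $d=2$ rate); to a $2$-dimensional face it costs $\Theta(\log a)/\nmin^2$ ($d=3$ rate); to a $3$-dimensional cube it costs $\Theta(1)/\nmin^2$ ($d>3$ rate). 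The genuine $d$-dimensional lattice resistance would give a \emph{better} upper bound than the stated theorem for $d=2$ and $d=3$ (the paper explicitly wonders whether its own $\log$ in $d=3$ is an artifact of the method), but to use it you would have to route an untruncated spreading flow, and that requires the flow to actually have $d$ transverse dimensions of space to spread into, which the bottleneck width $h$ does not freely allow. As written, your argument either overclaims (if you trust the correct lattice Green's function, it proves a stronger inequality than the theorem, without verifying the fan stays inside $\Xcal$), or the quoted lattice rates are simply wrong. To make the proof solid, replace the lattice Green's function appeal with the explicit Step 1--6 grid flow of Proposition~\ref{prop-resistance-fixed-graph}, or at least state and justify the $(d-1)$-dimensional fan-from-a-point estimate, which is what actually produces the three regimes.

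One smaller remark: you correctly flag that the symmetric and mutual $\knn$-graphs need to be handled together. The paper does this cleanly by constructing the grid at scale $\rkmin$ (the \emph{minimal} $k$-NN distance), since any two points within distance $\rkmin$ of each other are adjacent in both graph variants, and $\rkmin \asymp (k/n)^{1/d}$ with the claimed probability by Proposition~\ref{prop-degrees-knn}. Your proposal to use $\rho \asymp (k/n)^{1/d}$ is consistent with this but needs the $\rkmin$ intermediate step to cover the mutual case without separately re-proving edge concentration.
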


Let us make a couple of technical remarks about these theorems. \\

To achieve the convergence of the commute distance we
have to rescale it appropriately (for example, in the $\eps$-graph we
scale by a factor of $n \eps^d$). Our rescaling is exactly chosen such
that the limit expressions are finite, positive values. 
Scaling by any other factor in terms of $n$, $\eps$ or $k$ either
leads to divergence or to convergence to zero. \\

In case $d>3$, all convergence conditions on $n$ and $\eps$ (or $k$, respectively)
are the ones to be expected for random geometric graphs. They are
satisfied as soon as the degrees grow faster than $\log(n)$ (for
degrees of order smaller than $\log(n)$, the graphs are not connected anyway, see
e.g. \citealp{Penrose99}). Hence, our results hold for sparse as well
as for dense connected random geometric graphs. In dimensions 3 and 2,
our rates are not of the same flavor as in the higher dimensions. For example, in dimension
2 we need $n \eps^3 \to \infty$ instead of $n \eps^2 \to
\infty$. On the one hand we are not too surprised to get systematic
differences between the lowest few dimensions. The same happens in
many situations, just consider the example of Polya's
theorem about the recurrence/ transience of random walks on grids. On
the other hand, these differences might as well be an artifact of our
proof methods (and we suspect so at least for the case $d=3$; but even
though we tried, we did not get rid of the log factor in this
case). It is a matter of future work to clarify this. \\

The valid region $\Xcal$ has been introduced for technical reasons. We need to
operate in such a region in order to be able to control the behavior
of the graph, e.g. the minimal and maximal degrees. The assumptions on $\Xcal$ are
the standard assumptions used regularly in the random geometric graph
literature. In our setting, we have the freedom of choosing $\Xcal
\subset \R^d$ as we want. In order to
obtain the tightest bounds one should aim for a valid $\Xcal$ that has 
a wide bottleneck $h$ and a high minimal density $\pmin$. 
In general this freedom of choosing $\Xcal$ shows that if two points are in the same
high-density region of the space, the convergence of the commute
distance is very fast, while it gets slower if the two points are in
different regions of high density separated by a bottleneck.\\

We stated the theorems above for a fixed pair $i,j$. However, they
also hold uniformly over all pairs $i,j$ that satisfy the conditions
in the theorem (with exactly the same statement). The reason is that
the main probabilistic quantities that enter the proofs are bound on the minimal and
maximal degrees, which of course hold uniformly.

\subsection{Results based on spectral arguments} \label{sec-main-spectral}

The representation of the hitting and commute
times in terms of the Moore-Penrose inverse of the normalized graph
Laplacian (Proposition \ref{prop-commute}) can be used to derive the following key proposition that is the
basis for all further results in this section.  

\begin{proposition}[Absolute and relative bounds in any fixed graph]
\label{prop-approx} \sloppy

Let $G$ be a finite, connected, undirected, possibly weighted graph
that is not bipartite.

\be 
\item For $i\neq j$
\banum \label{eq-approx-hitting}
\left|
\frac{1}{\vol(G)} H_{ij} 
- \frac{1}{d_j}
\right|
\;\;\leq \;\;
2 \left( \frac{1}{1 - \lambda_2} + 1\right) 
\frac{w_{\max}}{d_{\min}^2}.
\eanum

\item For $i\neq j$
\banum \label{eq-approx-commute}
\left|
\frac{1}{\vol(G)} C_{ij} 
- \bigg(\frac{1}{d_i} + \frac{1}{d_j}\bigg)
\right|
\;\;\leq \;\;
\frac{\wmax}{\dmin} \bigg(\frac{1}{1-\lambda_2} +
2\bigg)\bigg(\frac{1}{d_i} + \frac{1}{d_j}\bigg)
\;\;\leq \;\;
2 \left( \frac{1}{1 - \lambda_2} + 2\right) 
\frac{w_{\max}}{d_{\min}^2}.
\eanum
\ee
\end{proposition}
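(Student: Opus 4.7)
The plan is to work directly from Proposition~\ref{prop-commute}, which writes the hitting and commute times as inner products involving $\lsym\pinv$ applied to the vectors $f_j := e_j/\sqrt{d_j}$. Let $v_1 := D^{1/2}\constantfct/\sqrt{\vol(G)}$ span $\ker\lsym$ and let $\Pi^\perp := I - v_1 v_1^T$. The first observation is that the target values $1/d_j$ and $1/d_i + 1/d_j$ are exactly what one obtains by replacing $\lsym\pinv$ with $\Pi^\perp$: since $f_j - f_i \perp v_1$ and $\langle f_i, f_j\rangle = 0$ for $i \neq j$, one computes $\langle f_j, \Pi^\perp(f_j - f_i)\rangle = 1/d_j$ and $\langle f_j - f_i, \Pi^\perp(f_j - f_i)\rangle = 1/d_i + 1/d_j$. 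So the error to control is precisely the deviation of $\lsym\pinv$ from $\Pi^\perp$ when tested against these two specific vectors.

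The main algebraic tool is the identity $(I - \lsym) f_j = r_j/d_j$, where $r_j := D^{-1/2} W e_j$ has entries $w_{jl}/\sqrt{d_l}$ and satisfies $\|r_j\|^2 \leq w_{\max} d_j / d_{\min}$. Taking differences yields $(I - \lsym)(f_j - f_i) = s$ with $s := r_j/d_j - r_i/d_i$; a short check gives $s \perp v_1$ and $\|s\|^2 \leq (w_{\max}/d_{\min})(1/d_i + 1/d_j)$. Using $\lsym\pinv\lsym = \Pi^\perp$ then gives $\lsym\pinv(f_j - f_i) = (f_j - f_i) + \lsym\pinv s$, and substitution into Proposition~\ref{prop-commute} isolates both errors:
\ba
\frac{1}{\vol(G)} H_{ij} - \frac{1}{d_j} &= \langle f_j, \lsym\pinv s\rangle, \\
\frac{1}{\vol(G)} C_{ij} - \Big(\frac{1}{d_i} + \frac{1}{d_j}\Big) &= \langle s, \lsym\pinv s\rangle - \frac{2 w_{ij}}{d_i d_j},
\ea
where the commute expression uses the direct evaluation $\langle f_j - f_i, s\rangle = -2 w_{ij}/(d_i d_j)$.

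For the commute-time bound the remaining estimates are routine. Since the operator norm of $\lsym\pinv$ on $v_1^\perp$ is $1/(1-\lambda_2)$, we get $|\langle s, \lsym\pinv s\rangle| \leq \|s\|^2/(1-\lambda_2) \leq (w_{\max}/d_{\min})(1/d_i + 1/d_j)/(1-\lambda_2)$, and the cross term satisfies $2 w_{ij}/(d_i d_j) \leq (w_{\max}/d_{\min})(1/d_i + 1/d_j)$ via the elementary inequality $2/(d_i d_j) \leq (1/d_i + 1/d_j)/d_{\min}$. Summing gives the relative form of the bound, and the absolute form follows from $1/d_i + 1/d_j \leq 2/d_{\min}$.

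The main obstacle is the hitting-time bound. A naive Cauchy-Schwarz on $\langle f_j, \lsym\pinv s\rangle$ only produces $\|f_j\|\|s\|/(1-\lambda_2) = O(\sqrt{w_{\max}/d_{\min}^3})$, which is strictly weaker than the claimed $O(w_{\max}/d_{\min}^2)$. The fix is to apply the identity $(I - \lsym) f_j = r_j/d_j$ a \emph{second} time inside the inner product: writing $f_j = \lsym f_j + r_j/d_j$ and using $\lsym\lsym\pinv s = s$,
\ba
\langle f_j, \lsym\pinv s\rangle = \langle f_j, s\rangle + \frac{1}{d_j}\langle r_j, \lsym\pinv s\rangle = -\frac{w_{ij}}{d_i d_j} + \frac{1}{d_j}\langle r_j, \lsym\pinv s\rangle.
\ea
The first term supplies the ``gap-free'' contribution bounded by $w_{\max}/d_{\min}^2$, while the second, controlled by $\|r_j\|\|s\|/((1-\lambda_2) d_j)$, supplies the $(1-\lambda_2)^{-1}$ piece with the correct $w_{\max}/d_{\min}^2$ scaling (after using $d_j \geq d_{\min}$ and the bounds on $\|r_j\|,\|s\|$). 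Combining the two pieces yields the stated bound up to the constants $2((1-\lambda_2)^{-1}+1)$.
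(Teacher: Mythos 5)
Your argument is correct and is in substance the same as the paper's: writing $f_j = \lsym f_j + r_j/d_j$ (i.e., $f_j = (I-A)f_j + Af_j$) and applying $\lsym\pinv\lsym = \Pi^\perp$ twice is exactly the content of the paper's Lemma~\ref{prop-lsympinv}, where $\lsym\pinv = I - P_1 + M$ with $M=\lsym\pinv(A-P_1)$ and $\langle u, Mv\rangle = \langle u,(A-P_1)v\rangle + \langle (A-P_1)u, \lsym\pinv(A-P_1)v\rangle$; your $s$ is $(A-P_1)(f_j-f_i)$ and your $r_j/d_j$ contributes $(A-P_1)f_j$ after killing the $v_1$-component. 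One small inaccuracy: when the graph has self-loops, $\langle f_j, s\rangle = w_{jj}/d_j^2 - w_{ij}/(d_id_j)$ and $\langle f_j-f_i, s\rangle = w_{ii}/d_i^2 + w_{jj}/d_j^2 - 2w_{ij}/(d_id_j)$, not just $-w_{ij}/(d_id_j)$ and $-2w_{ij}/(d_id_j)$; the paper keeps the $w_{jj}$ term explicitly, and since each piece is still $\le w_{\max}/d_{\min}^2$ (resp.\ $\le (w_{\max}/d_{\min})(1/d_i+1/d_j)$) the stated constants are unaffected.
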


\vspace{1cm}

We would like to point out that even though the bound in Part 3 of the
proposition is reminiscent
to statements in the literature, it is much tighter. Consider the
following formula from \citet{Lovasz93}
\ba
\frac{1}{2} \left(\frac{1}{d_i} + \frac{1}{d_j} \right)
\;\; \leq \;\; \frac{1}{\vol(G)}C_{ij} \;\; \leq \;\; 
\frac{1}{1 - \lambda_2} \left(\frac{1}{d_i} + \frac{1}{d_j} \right) 
\ea
that can easily be rearranged to the following bound: 
\banum \label{eq-lovasz}
\left| \frac{1}{\vol(G)}C_{ij} - \bigg(\frac{1}{d_i} + \frac{1}{d_j}\bigg) \right|
\leq 
\frac{1}{1 - \lambda_2}  \frac{2}{\dmin}.
\eanum
The major difference between our bound \eqref{eq-approx-commute} and Lovasz'
bound \eqref{eq-lovasz} is that while the latter has the term $\dmin$ in the denominator, our bound
has the term $\dmin^2$ in the denominator. This makes all of a
difference: in the graphs under considerations our
bound converges to 0 whereas Lovasz'
bound diverges.

\subsubsection{Application to unweighted random geometric graphs}

In the following we are going to apply Proposition \ref{prop-approx} to various
random geometric graphs. Next to some standard results
about the degrees and number of edges in random geometric graphs, the
main ingredients are the following
bounds on the spectral gap in random geometric graphs. These bounds
are of independent interest because the spectral gap governs many important
properties and processes on graphs. \\

\begin{theorem}[Spectral gap of the $\eps$-graph]
\label{th-gap-eps}
Suppose that the general assumptions hold. 
Then there exist constants
$c_1, \hd, c_6 > 0$ such that with probability at
least $1 - c_1 n \exp( -c_2 n \eps^d)  - c_3 \exp(- c_4 n \eps^d) /
\eps^d$ 
\ba
1 - \lambda_2 \;\geq\; c_5 \cdot \eps^2 && 
\text{ and } && 
1 - |\lambda_{n}| \;\geq\; c_6  \cdot \eps^{d+1} / n. 
\ea
If $n \eps^d / \log n \to \infty$, then this probability converges to
1. 
\end{theorem}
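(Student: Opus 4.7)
The plan is to apply the canonical-path technique of Diaconis--Stroock for both eigenvalues: the bound on $1-\lambda_2$ comes from routing paths between every pair of vertices, and the bound on $1-|\lambda_n|$ comes from assigning an odd self-cycle to every vertex. In both cases the inverse gap is controlled by
\[
\kappa \;=\; \max_e \frac{1}{Q(e)} \sum_{\text{path }\ni e} \pi(\cdot)\,\pi(\cdot)\,|\cdot|,
\]
with $Q(e)=\pi(u)P(u,v)$ for $e=(u,v)$; the task therefore reduces to routing flow through the $\eps$-graph without creating bottlenecks.

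First I would push the problem to $[0,1]^d$ via the bi-Lipschitz homeomorphism $h$ from the general assumptions: up to constants, the image is still an $\eps$-graph sampled from a density bounded below. Then I would lay down an axis-aligned grid on $[0,1]^d$ of side $c\eps$, chosen small enough that any two points in adjacent cells are joined by an edge. A Bernstein bound on the occupancy of each of the $O(\eps^{-d})$ cells, combined with a union bound, shows on an event of probability at least $1-c_1 n\exp(-c_2 n\eps^d)-c_3\eps^{-d}\exp(-c_4 n\eps^d)$ that every cell contains $\Theta(n\eps^d)$ sample points; consequently every vertex degree is $\Theta(n\eps^d)$ and $\vol(G)=\Theta(n^2\eps^d)$. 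All of the subsequent geometric construction is carried out on this event.

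For $1-\lambda_2$ I would build canonical paths as follows. For each ordered pair $(X_i,X_j)$, pick a deterministic axis-by-axis grid route $C_0,\ldots,C_L$ of length $L=O(1/\eps)$; then lift the route to a vertex path by balancing the $\Theta((n\eps^d)^2)$ source-destination assignments across the $\Theta((n\eps^d)^2)$ available edges between each pair of adjacent cells, using a bijection-based matching so that every inter-cell edge carries only $O(1)$ cell-level pairs. Plugging $\pi(x)=\Theta(1/n)$, $Q(e)=\Theta(1/(n^2\eps^d))$, $|\gamma_{xy}|=O(1/\eps)$ into the Diaconis--Stroock formula gives $\kappa=O(1/\eps^2)$ and hence $1-\lambda_2\geq c_5\eps^2$. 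For $1-|\lambda_n|$ I would apply the odd-cycle variant of the same inequality (Sinclair 1992): at every vertex $x$ choose a triangle through $x$ (which exists in abundance since any neighbour of $x$ shares $\Theta(n\eps^d)$ common neighbours with it) and spread the $n$ triangles across the $\Theta((n\eps^d)^2)$ in-cell edges by the same balanced-matching device; the resulting congestion delivers $1-|\lambda_n|\geq c_6\eps^{d+1}/n$.

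The main obstacle is the balanced routing at the cell level: one has to prove that the $\Theta(n^2)$ cross-cell vertex pairs, and separately the $n$ triangles, can be routed so that no individual $\eps$-graph edge carries disproportionately more traffic than its fair share. This is precisely where the uniform occupancy estimate from step two is essential, since without $\Theta((n\eps^d)^2)$ parallel edges between every pair of adjacent cells, the bijection argument would collapse. Once that ingredient is in hand, the remaining work is the routine substitution of the degree and volume estimates into the Diaconis--Stroock bounds.
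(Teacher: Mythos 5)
Your proposal uses the same high-level strategy as the paper: reduce to the cube via the bi-Lipschitz map $h$, lay down an $\eps$-scale grid, use a union bound to get $\Theta(n\eps^d)$ occupancy per cell on the stated event, route Hamming-style canonical cell paths, count how many cell routes cross a fixed cell, and feed the resulting congestion into the Diaconis--Stroock inequality. The two places where you genuinely diverge are the following.

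First, to lift cell routes to vertex paths you propose a deterministic, bijection-based matching of vertex pairs to inter-cell edges so that the load is balanced. The paper instead uses the \emph{randomized-path} version of the Poincar\'e inequality (Proposition~\ref{prop-boyd}, following \citet{BoydEtal05}): in each intermediate cell it picks a uniformly random vertex, and it bounds the \emph{expected} edge load. This sidesteps the need to exhibit an explicit balanced assignment, which is the part of your scheme you yourself flag as the main obstacle. Your bijection argument can be made to work, but you would have to spell out a combinatorial matching lemma (for each adjacent cell pair, a balanced assignment of the $N^2$ source--target combinations to the $\geq N_{\min}^2$ available edges, consistent across consecutive hops of the same route). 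The randomization removes this bookkeeping.

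Second, for $1-|\lambda_n|$ you invoke the odd-cycle variant of the inequality: route a short odd cycle (a triangle) through each vertex and bound the congestion. The paper instead forces its canonical paths between pairs to have odd length and derives the bound on $1-|\lambda_n|$ from the same path family and the same congestion estimate $b$ (the two displayed inequalities in Proposition~\ref{prop-boyd}). Both routes are standard; yours is arguably more direct for $\lambda_n$ and, as you note, triangles are plentiful since any $\eps/2$-ball contains $\Theta(n\eps^d)$ mutually adjacent points. A quick substitution shows your triangle route in fact gives $1-|\lambda_n|=\Omega\bigl(1/(n\eps^d)\bigr)$ under even mild congestion control, which is \emph{stronger} than the $c_6\,\eps^{d+1}/n$ claimed in the theorem, so you lose nothing and gain a sharper bound on that side.

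One small inaccuracy: you write that after the balanced matching ``every inter-cell edge carries only $O(1)$ cell-level pairs.'' What the Hamming-path count (Part~1 of Proposition~\ref{prop-load-cube}) actually gives is that $O(1/\eps^{d+1})$ cell-level source--destination pairs route through a fixed cell; the matching then ensures each vertex-level edge carries $O(1)$ vertex pairs \emph{per} cell-level pair, for a total per-edge load of $O(1/\eps^{d+1})$. That total load is what produces $\kappa=O(1/\eps^2)$, so your bottom-line rate is correct, but the intermediate statement as phrased understates the per-edge load by a factor of $1/\eps^{d+1}$.

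In short: correct plan, same scaffolding as the paper, but you trade the paper's randomized-path device for an explicit matching (more work to make rigorous) and you handle $\lambda_n$ with a separate odd-cycle argument (cleaner, and slightly stronger) rather than reusing the odd-length pair paths.
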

\begin{theorem}[Spectral gap of the $\knn$-graph]
\label{th-gap-knn}
Suppose that the general assumptions hold. Then for both the symmetric
and the mutual $\knn$-graph there exist constants
$c_1, \hd, c_4 >0$ such that with probability at
least $1 - c_1 n \exp( - c_2 k )$, 
\ba
1 - \lambda_2 \;\geq \;c_3 \cdot (k/n)^{2/d} 
&& \text{ and } &&
1 - |\lambda_{n}| 
\;\geq \;c_4 \cdot k^{2/d} / n^{ (d+2)/d}
\ea
If $k / \log n \to \infty$, then the probability converges to 1. 
\end{theorem}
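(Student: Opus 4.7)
The plan is to invoke the canonical-path technique of Diaconis and Stroock. For the natural random walk on $G$, with stationary measure $\pi(v) = d_v/\vol(G)$ and edge measure $Q(u,v) = w_{uv}/\vol(G)$, any family of paths $\{\gamma_{xy}\}_{x\neq y}$ gives
\begin{align*}
1 - \lambda_2 \;\geq\; \frac{1}{\kappa}, \qquad \kappa \;:=\; \max_{e}\; \frac{1}{Q(e)}\sum_{\gamma_{xy}\ni e} \pi(x)\pi(y)\,|\gamma_{xy}|,
\end{align*}
and a parallel estimate $1 + \lambda_n \geq 2/\iota$ holds when one chooses an odd closed walk $\sigma_v$ through each vertex $v$ and replaces the sum by $\sum_{\sigma_v \ni e} \pi(v)|\sigma_v|$. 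It is enough to produce families of (odd) paths whose congestions match the two claimed rates.

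\textbf{Grid setup and uniform control of degrees.} Using the bi-Lipschitz homeomorphism $h:\Xcal \to [0,1]^d$ provided by the general assumptions, I would pull back the standard axis-aligned grid of side length $\ell \asymp (k/n)^{1/d}$ to obtain a partition of $\Xcal$ into $\Theta(n/k)$ cells whose diameters are comparable to the typical $\knn$-radius $r_k \asymp (k/(n\pmin))^{1/d}$. Standard Chernoff/VC estimates for empirical counts, together with the lower bound $\pmin$ on the density, yield that with probability at least $1 - c_1 n e^{-c_2 k}$, every cell simultaneously contains $\Theta(k)$ sample points; this forces $d_v \asymp k$ and $\pi(v) \asymp 1/n$ uniformly, and $Q(e) \asymp 1/(nk)$ for every edge. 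The radius $r_k$ is also large enough that any two points lying in the same or in neighboring cells are connected in the $\knn$-graph, and (thanks to the bound $\pmax$) this works identically for the mutual and symmetric variants.

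\textbf{Construction for $\lambda_2$.} For each pair $(x,y)$ with $x\in C_x, y\in C_y$, I would route $\gamma_{xy}$ along a shortest chain of grid-adjacent cells $C_x = C_0, C_1,\ldots,C_m = C_y$ with $m = O((n/k)^{1/d})$, distributing the edges used inside each intermediate cell as evenly as possible over the $\Theta(k^2)$ intra- and inter-cell edges available. A projection argument shows that at most $O((n/k)^{(d+1)/d})$ cell-pairs have grid trajectories passing through any fixed cell $C$, and each such cell-pair accounts for $k^2$ vertex-pairs, each contributing $\pi(x)\pi(y)|\gamma_{xy}| = O((n/k)^{1/d}/n^2)$; the even distribution over $\Theta(k^2)$ edges of $C$ then gives
\begin{align*}
\sum_{\gamma_{xy} \ni e} \pi(x)\pi(y)|\gamma_{xy}| \;=\; O\!\left((n/k)^{(d+2)/d}/n^{2}\right),
\end{align*}
so $\kappa = O((n/k)^{2/d})$ and $1-\lambda_2 \geq c_3 (k/n)^{2/d}$, matching the claim.

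\textbf{Construction for $|\lambda_n|$ and main obstacle.} For the smallest eigenvalue I would use the odd-cycle variant of Diaconis–Stroock: since each vertex has $\Theta(k^2)$ triangles with high probability, one can select one triangle $\sigma_v$ per vertex so that the resulting load on each edge is controlled, and feed the resulting cycles (possibly concatenated with a short detour of the $\lambda_2$ paths to reach the designated triangle) into the odd-walk bound; book-keeping of the loads, together with the loss of a factor $\pi_{\min}^{-1} \asymp n$ compared with the $\lambda_2$ congestion, yields $\iota = O(n \cdot (n/k)^{2/d})$ and hence $1-|\lambda_n| \geq c_4\, k^{2/d}/n^{(d+2)/d}$. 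The main obstacle is the congestion estimate in the previous paragraph: getting a clean $O((n/k)^{(d+1)/d})$ count of cell-paths through a given cell requires the partition to be genuinely axis-aligned and uniform, which is exactly what the bi-Lipschitz homeomorphism buys us, and simultaneously controlling cell occupancies over all $\Theta(n/k)$ cells requires a union bound that forces the probability threshold $1 - c_1 n e^{-c_2 k}$ and the scaling $k/\log n \to \infty$. Once these uniform geometric estimates are in place, the mutual-vs-symmetric distinction evaporates, because both variants contain all intra- and inter-cell edges that the constructed paths actually use.
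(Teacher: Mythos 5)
Your plan is essentially the paper's route: Diaconis--Stroock canonical paths along a grid pulled back through the bi-Lipschitz homeomorphism $h$, uniform Chernoff control of cell occupancies (hence of degrees, $\pi$, $Q(e)$, and of the $\knn$-radius that validates the grid), and a coordinate-wise Hamming count showing $O((n/k)^{(d+1)/d})$ cell-pairs route through any fixed cell, which gives $\kappa = O((n/k)^{2/d})$ and the claimed $\lambda_2$ bound; this is exactly the chain of Propositions \ref{prop-boyd}, \ref{prop-load-cube}, and \ref{prop-load-general} specialised to the $\knn$-graph. Where you genuinely diverge is in the $|\lambda_n|$ bound: the paper constrains its canonical paths to have \emph{odd} length and reads both eigenvalue inequalities off the same congestion $b$ (the $1-|\lambda_n|$ inequality losing a factor $\asymp d_{\max}/\vol(G) \asymp 1/n$ relative to $1-\lambda_2$), whereas you reach for the separate odd-closed-walk bound with a triangle $\sigma_v$ at each vertex. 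That variant is legitimate and in fact sharper, but your stated congestion $\iota = O(n(n/k)^{2/d})$ does not match it: a per-vertex triangle is a length-$3$ closed walk, there are $n$ of them and $\Theta(nk)$ edges, and since triangles are purely local they can be assigned so that each edge carries $O(1)$ of them, giving $\iota = O(k)$ and $1 + \lambda_n = \Omega(1/k)$ --- much better than $k^{2/d}/n^{(d+2)/d}$. The figure $O(n(n/k)^{2/d})$ is what one gets from odd-length paths between pairs (the paper's construction), not from triangles; likewise, no ``detour'' concatenation is needed because a triangle at $v$ already is a closed odd walk through $v$. Pick one construction for $\lambda_n$ and compute $\iota$ for it consistently; either yields the theorem, and the paper's single odd-length-path family has the virtue of producing both bounds in one stroke.
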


At first glance it seems surprising that the geometry of the
underlying space $\Xcal$ does not affect the order of magnitude of the
spectral gap, these quantities only enter the bound in terms of the
constants (as can be seen in the proofs below).  In particular, for large
$n$ the spectral gap does not depend on whether $\Xcal$ has a
``bottleneck'' or not. Intuitively this is the case because if the
sample size is large, even a bottleneck with very small
diameter contains many sample points and ``appears wide'' to the random walk. \\

The following theorems characterize the
hitting and commute times for $\eps$-and $\knn$-graphs. They are
direct consequences of plugging the results about the spectral gap
into Proposition \ref{prop-approx}. 

\begin{corollary}[Hitting and commute times on $\eps$-graphs]
\label{th-hitting-eps}
Assume that the general assumptions hold. 
Consider an unweighted $\eps$-graph built from  the sequence $X_1,
\hd, X_n$ drawn i.i.d. from the density $p$. 
Then there exist constants
$c_1, \hd, c_5 > 0$ such that with probability at
least $1 - c_1 n \exp(- c_2 n \eps^d) -c_3 \exp( - c_4 n \eps^d) /
\eps^d$, we have uniformly for all $i \neq j$ that 
\banum %
& \left|
\frac{n \eps^d}{\vol(G)} H_{uv}
- 
\frac{n \eps^d}{d_v} 
\right| \;\; 
\leq \;\; 
\frac{c_5}{n \eps^{d+2}}. 
\eanum
If  %
the density $p$ is continuous
and $n
\to \infty, \eps \to 0$ and $n \eps^{d+2}
\to \infty$, then 
\ba
\frac{n \eps^d}{\vol(G)} H_{ij} 
\longrightarrow 
\frac{1}{\eta_d \cdot p(X_j)} \;\;\text{almost surely.}
\ea
For the commute times, the analogous results hold due to $C_{ij} = H_{ij} + H_{ji}$. 
\end{corollary}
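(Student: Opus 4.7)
\begin{proofof}{Corollary \ref{th-hitting-eps} (proof sketch)}
The plan is to invoke Proposition \ref{prop-approx}, instantiate it with quantitative estimates available for unweighted $\eps$-graphs, and then multiply through by $n\eps^d$.  For an unweighted $\eps$-graph we have $w_{\max} = 1$, so the right-hand side of \eqref{eq-approx-hitting} reduces to $2\bigl(1/(1-\lambda_2) + 1\bigr)/d_{\min}^2$.  Two ingredients control this quantity in the $\eps$-graph setting: first, Theorem \ref{th-gap-eps} gives $1-\lambda_2 \geq c\,\eps^2$ with probability tending to $1$; second, a standard Chernoff/VC argument for $\eps$-graphs (together with the valid-region assumption) gives a uniform lower bound $d_{\min} \geq c'\,n\eps^d$ on the same high-probability event (this is where $\pmin$ enters, and this is exactly the type of estimate folded into the constants in Theorem \ref{th-gap-eps}).

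Combining these two estimates and multiplying \eqref{eq-approx-hitting} by $n\eps^d$ yields
\ba
\left|\frac{n\eps^d}{\vol(G)} H_{ij} - \frac{n\eps^d}{d_j}\right|
\;\leq\; \frac{2\,n\eps^d}{d_{\min}^2}\left(\frac{1}{1-\lambda_2}+1\right)
\;\leq\; \frac{C_1}{n\eps^{d+2}} + \frac{C_2}{n\eps^{d}},
\ea
where the first term dominates as $\eps\to 0$, giving the claimed bound $c_5/(n\eps^{d+2})$.  The probability bound in the corollary is obtained by a union bound over the failure events for the spectral gap and for the uniform degree lower bound, both of which are of the stated form $1 - c_1 n\exp(-c_2 n\eps^d) - c_3 \exp(-c_4 n\eps^d)/\eps^d$; since both estimates hold uniformly over vertices, the resulting bound is uniform in $i\neq j$ as well.

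For the almost-sure limit, under the assumption $n\eps^{d+2}\to\infty$ (which strengthens $n\eps^d/\log n \to \infty$), the deviation bound goes to $0$, so it suffices to show $n\eps^d/d_j \to 1/(\eta_d\,p(X_j))$ a.s.\ for fixed $j$.  This follows from the usual concentration of the binomial count $d_j = |\{i\neq j : \|X_i - X_j\|\leq\eps\}|$ around its mean $(n-1)\,P(B_\eps(X_j))$, together with $P(B_\eps(X_j))/\eps^d \to \eta_d\,p(X_j)$ as $\eps\to 0$ by continuity of $p$.  Finally, for the commute times, writing $C_{ij} = H_{ij}+H_{ji}$ and applying the hitting-time bound twice (with the same high-probability event) yields the analogous statement with limit $1/(\eta_d p(X_i)) + 1/(\eta_d p(X_j))$.

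The only nontrivial bookkeeping is ensuring that the same high-probability event simultaneously controls $d_{\min}$, the spectral gap, and (for the limiting identification) the individual degrees $d_i, d_j$; but since all three are established by concentration inequalities of the same Chernoff type on $n\eps^d$, a single union bound handles everything.  I do not expect a genuine obstacle here: the substantive work has already been done in Proposition \ref{prop-approx} and Theorem \ref{th-gap-eps}, and the corollary is essentially a plug-in computation.
\end{proofof}
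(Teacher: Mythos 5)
Your proposal is correct and follows essentially the same route as the paper's proof: plug the spectral gap bound from Theorem~\ref{th-gap-eps} and the minimal degree bound (Proposition~\ref{prop-degrees-eps} in the paper, your ``standard Chernoff argument'') into Proposition~\ref{prop-approx}, rescale by $n\eps^d$, and union bound the two failure events; the a.s.\ limit then follows from the consistency of the degree as a density estimator. Nothing substantive differs.
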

\begin{corollary}[Hitting and commute times on $\knn$-graphs]
\label{th-hitting-knn}
 Assume that the general assumptions hold. 
Consider an unweighted $\knn$-graph built from  the sequence $X_1, \hd, X_n$ drawn i.i.d. from the density $p$. 
Then for both the symmetric
 and mutual $\knn$-graph there exist constants
 $c_1, c_2, c_3 > 0$ such that  with probability at
 least \sloppy \mbox{$1 - c_1 \cdot n \cdot \exp(- k c_2)$},  we have uniformly for all $i \neq j$ that 
\banum \label{eq-dev-knn} 
& \left|
\frac{k}{\vol(G)} H_{ij}
- 
\frac{k}{d_j} 
\right| \;\; 
\leq \;\;
c_3 \cdot \frac{n^{2/d}}{k^{1 + 2/d}}. 
\eanum
If  
the density $p$ is
continuous and $n \to \infty$, ${k} / {n}\rightarrow 0$ and $
k\big(k / n\big)^{2/d} \to \infty$, then 
\ba
\frac{k}{\vol(G)} H_{ij} 
\longrightarrow 
1 \;\;\text{almost surely.}
\ea
For the commute times, the analogous results hold due to $C_{ij} = H_{ij} + H_{ji}$. 
\end{corollary}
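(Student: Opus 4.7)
The plan is to obtain the deviation bound by applying Proposition \ref{prop-approx}(1) to the $\knn$-graph and controlling the three quantities that appear on its right-hand side, namely $w_{\max}$, $d_{\min}$ and $1-\lambda_2$. Since we are dealing with unweighted graphs, $w_{\max} = 1$, so Proposition \ref{prop-approx}(1) reduces to
\[
\left|\frac{1}{\vol(G)} H_{ij} - \frac{1}{d_j}\right| \;\leq\; 2\Big(\frac{1}{1-\lambda_2} + 1\Big)\frac{1}{d_{\min}^2}
\]
uniformly over pairs $i \neq j$.

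Next I would invoke the standard degree concentration results for random geometric $\knn$-graphs (summarized in the appendix): for both the symmetric and the mutual version, there exist constants $0 < a < A$ depending only on $\Xcal$ and $p$ such that $ak \leq d_{\min} \leq d_{\max} \leq Ak$ on an event of probability at least $1 - c_1' n \exp(-c_2' k)$. Theorem \ref{th-gap-knn} simultaneously yields $1-\lambda_2 \geq c_3 (k/n)^{2/d}$ with a comparable probability bound. After a union bound, both estimates hold on a single event of probability at least $1 - c_1 n \exp(-c_2 k)$. Substituting into the inequality above and multiplying through by $k$ gives
\[
\left|\frac{k}{\vol(G)} H_{ij} - \frac{k}{d_j}\right| \;\leq\; \frac{2}{a^2 k}\Big(\tfrac{1}{c_3}(n/k)^{2/d} + 1\Big) \;\leq\; c_3\, \frac{n^{2/d}}{k^{1+2/d}},
\]
since in the regime of interest the $(n/k)^{2/d}$ term dominates the constant. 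This establishes \eqref{eq-dev-knn}. Uniformity over all pairs $i \neq j$ is automatic because the right-hand side depends only on the global quantities $d_{\min}$ and $\lambda_2$.

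For the almost sure convergence, I would first note that under the assumptions $k/n \to 0$ and $k(k/n)^{2/d}\to\infty$ the displayed upper bound tends to zero. It then suffices to show $k/d_j \to 1$ almost surely. This follows from a sharper version of the degree concentration, namely $|d_j - k| \leq \delta k$ holding with exponentially small failure probability for any fixed $\delta > 0$, combined with a Borel–Cantelli argument along the subsequence in $n$ (legitimate since the failure probabilities are summable). Combined with the deviation bound, this yields $\frac{k}{\vol(G)} H_{ij} \to 1$ almost surely; the commute time statement follows immediately from $C_{ij} = H_{ij} + H_{ji}$.

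The main obstacle I anticipate is not any deep technical step but rather the bookkeeping involved in synthesizing the constants coming from Proposition \ref{prop-approx}, Theorem \ref{th-gap-knn} and the degree concentration estimates into a single clean constant, while simultaneously keeping track of the different failure events so that the final probability bound is still of the form $1 - c_1 n \exp(-c_2 k)$. A minor subtlety is the symmetric case, where degrees can exceed $k$ due to vertices being chosen as nearest neighbors by many others; however the standard bound $d_{\max} \leq A k$ handles this, and the almost sure refinement $d_j/k \to 1$ still holds because the extra edges concentrate in the natural way.
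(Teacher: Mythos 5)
Your proposal is correct and follows essentially the same route as the paper: apply Proposition~\ref{prop-approx} (part 1), plug in the degree bounds of Proposition~\ref{prop-degrees-knn} and the spectral gap bound of Theorem~\ref{th-gap-knn}, and use the degree convergence $k/d_j\to 1$ from Proposition~\ref{prop-degrees-knn} (part 3) for the almost-sure statement. The only point worth noting is that the almost-sure conclusion needs Borel--Cantelli applied not only to the degree concentration but also to the high-probability event on which the deviation bound holds; you mention Borel--Cantelli and the summability of the failure probabilities, so this is covered, and your argument synthesizes the constants in the same way the paper does.
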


\subsubsection{Application to weighted graphs}

In several applications, $\eps$-graphs or $\knn$ graphs are not used
as unweighted graphs, but additionally endowed with edge weights. For
example, in the field of machine learning it is common to use 
Gaussian weights  $w_{ij} = \exp(-
\|X_i - X_j\|^2 / h^2)$, where $h >0$ is a bandwidth parameter. \\

We can use standard spectral results to prove approximation theorems
in such cases. \\

\begin{theorem}[Results on fully connected weighted
  graphs]\label{th:Hitting-fully-connected} \sloppy
Consider a fixed, fully connected weighted graph with weight matrix
$W$. Assume that its entries are upper and lower bounded by some
constants $\wmin, \wmax$, that is $0 <
\wmin \leq w_{ij} \leq \wmax$ for all $i,j$.  Then,
uniformly for all $i,j \in \{1, ..., n\}$, $i \neq j$,  
\begin{align*}
\left| \frac{n}{\vol(G)} H_{ij}  - \frac{n}{d_j} \right| \;\leq \; 
4 n \left( \frac{w_{\max}}{w_{\min}} \right)
\frac{w_{\max}}{d_{\min}^2} 
\;\;\leq\;\;
 4 \frac{\wmax^2}{\wmin^3}\frac{1}{n}.
\end{align*}
\end{theorem}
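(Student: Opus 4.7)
The plan is to apply Proposition \ref{prop-approx}(1) directly to the given fully connected weighted graph. This reduces the theorem to two ingredients: a lower bound on the minimum degree $\dmin$, and a lower bound on the spectral gap $1-\lambda_2$.

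The degree bound is trivial. Since every pair of distinct vertices is joined by an edge of weight at least $\wmin$, every degree satisfies $d_i = \sum_{j\neq i} w_{ij} \geq (n-1)\wmin$, so $\dmin \geq (n-1)\wmin$ and in particular $\dmin^2 \geq (n-1)^2 \wmin^2$.

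The spectral gap bound is the heart of the argument. I claim $1-\lambda_2 \geq n\wmin/((n-1)\wmax) > \wmin/\wmax$, which I would prove by a Rayleigh quotient comparison against the complete graph. Let $L^{(0)} := \wmin(nI - J)$ (with $J$ the all-ones matrix) be the Laplacian of the complete graph on $n$ vertices with uniform edge weight $\wmin$. The difference $L - L^{(0)}$ is itself a valid graph Laplacian, namely the one associated with the non-negative residual edge weights $w_{ij}-\wmin$, hence $L \succeq L^{(0)}$. Now for any test vector $f$ satisfying $\sum_i d_i f_i = 0$, decompose $f = \bar f\cdot \mathbf{1} + f_0$ with $\sum_i f_{0,i} = 0$. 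Using $L\mathbf{1}=0$, the numerator satisfies $f^T L f = f_0^T L f_0 \geq f_0^T L^{(0)} f_0 = n\wmin \|f_0\|^2$. For the denominator, expanding $f^T D f$ and substituting the identity $\sum_i d_i f_{0,i} = -\bar f\, \vol(G)$ (immediate from the constraint) gives the clean cancellation $f^T D f = -\bar f^2 \vol(G) + \sum_i d_i f_{0,i}^2 \leq (n-1)\wmax\|f_0\|^2$. Using $\mu_2(\Lsym) = \min_{\sum_i d_i f_i = 0} f^T L f / f^T D f$ and taking the ratio yields the claim.

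Combining both ingredients in Proposition \ref{prop-approx}(1) and multiplying by $n$ gives $\big|n H_{ij}/\vol(G) - n/d_j\big| \leq 2n(1/(1-\lambda_2)+1)\wmax/\dmin^2 \leq 2n(\wmax/\wmin + 1)\wmax/\dmin^2 \leq 4n(\wmax/\wmin)\wmax/\dmin^2$, which is the first stated inequality. Substituting $\dmin^2 \geq (n-1)^2\wmin^2$ produces the second, $4\wmax^2/(\wmin^3 n)$ (up to the harmless $n/(n-1)^2 \sim 1/n$ factor). Uniformity over $i\neq j$ is inherited directly from Proposition \ref{prop-approx}. The main obstacle is obtaining the sharp power of $\wmax/\wmin$ in the spectral gap: a routine Cheeger bound only gives $1-\lambda_2 \gtrsim (\wmin/\wmax)^2$, which spoils the stated constant by one extra factor of $\wmax/\wmin$. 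The sharp gap $\wmin/\wmax$ really does require the Rayleigh quotient decomposition above, and the crucial subtlety is that the constraint $\sum_i d_i f_i = 0$ contributes a \emph{negative} term $-\bar f^2 \vol(G)$ to $f^T D f$, which is exactly what keeps the denominator bounded by $(n-1)\wmax\|f_0\|^2$ rather than by an expression involving the mean $\bar f$ blowing up with $\dmax$.
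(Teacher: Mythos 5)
Your proof is correct and reaches the same final numerics, but you arrive at the spectral gap bound by a genuinely different route than the paper. The paper's proof is a one-liner: it invokes Part 1 of its Proposition~\ref{prop-spectral-weighted}, namely Zenger's Dobrushin-coefficient bound $\lambda_2 \le 1 - n\min_{i,j}(w_{ij}/d_i) \le 1 - \wmin/\wmax$ for row-stochastic matrices, citing \citet{Zen72}, and then plugs the resulting $1-\lambda_2 \geq \wmin/\wmax$ together with $\dmin \geq n\wmin$ into Proposition~\ref{prop-approx}(1). You instead reprove a (marginally sharper) spectral gap bound $1-\lambda_2 \geq n\wmin/((n-1)\wmax)$ from scratch via a Rayleigh-quotient comparison: lower-bounding $L \succeq \wmin(nI - J)$ by peeling off a complete-graph Laplacian, and exploiting the $D$-orthogonality constraint $\sum_i d_i f_i = 0$ to show that the cross-term in $f^T D f$ is exactly $-\bar f^2 \vol(G) \leq 0$, so the denominator collapses to $\sum_i d_i f_{0,i}^2 \leq \dmax\|f_0\|^2$. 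Both routes thus avoid the quadratic loss you correctly identify in a naive Cheeger bound. Your approach is self-contained and slightly sharper; the paper's is shorter by citation. One small mismatch: you compute $\dmin$ from $\sum_{j\neq i} w_{ij}$, i.e.\ $(n-1)\wmin$, but the theorem's hypothesis bounds $w_{ij}$ for \emph{all} $i,j$ and the paper's degree is $d_i = \sum_{j=1}^n w_{ij}$, so in fact $\dmin \geq n\wmin$ and the $n/(n-1)^2$ fudge you flag disappears, giving the stated $4\wmax^2/(\wmin^3 n)$ exactly.
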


For example, this result can be applied directly to a Gaussian
similarity graph (for fixed bandwidth $h$). \\

The next theorem treats the case of Gaussian similarity graphs with
adapted bandwidth $h$. The technique we use to prove  this theorem is very
general. Using the Rayleigh principle, we reduce the case of the fully connected Gaussian graph to
a truncated graph where edges beyond a certain length are
removed. Bounds for this truncated graph, in turn, can be reduced to
bounds  of the unweighted $\eps$-graph. 
With this technique it is possible to treat
very general classes of graphs. 

\begin{theorem}[Results on Gaussian graphs with
  adapted bandwidth] \label{co:Gaussian}
Let $\mathcal{X}\subseteq\mathbb{R}^d$ be a compact set and  $p$ a continuous, strictly positive density on $\Xcal$. 
Consider a fully connected, weighted similarity graph built from the points $X_1, \hd, X_n$ drawn i.i.d. from $\Pr$ with density $p$. 
As weight function use the Gaussian similarity function 
$k_{h}(x,y) =
\tfrac{1}{(2\pi h^2)^\frac{d}{2}}\exp\left(-\frac{\norm{x-y}^2}{2h^2}\right)$. 
If  %
the density $p$ is continuous
and $n \to \infty, h \to 0$ and $n h^{d+2}/\log(n) \to \infty$, then 
\ba
\frac{n }{\vol(G)} C_{ij} 
\longrightarrow 
\frac{1}{p(X_i)} + \frac{1}{p(X_j)} \;\;\text{almost surely.}
\ea
\end{theorem}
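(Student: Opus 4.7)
The plan is to apply Proposition~\ref{prop-approx} to the fully connected Gaussian graph $G$. This requires three ingredients: (i) uniform degree concentration $d_i/n \to p(X_i)$; (ii) an upper bound $\wmax/\dmin^2 \leq C/(n^2 h^d)$; and (iii) a lower bound $1-\lambda_2 \geq c h^2$. Granting these, the right-hand side of \eqref{eq-approx-commute} is of order $1/(n^2 h^{d+2})$; multiplying by $n$, the deviation $|n C_{ij}/\vol(G) - n(1/d_i + 1/d_j)|$ is $O(1/(n h^{d+2}))$, which vanishes under the hypothesis $nh^{d+2}/\log(n) \to \infty$. The leading term $n(1/d_i + 1/d_j)$ then converges to $1/p(X_i) + 1/p(X_j)$ by~(i), which proves the claim.

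Ingredients (i) and (ii) follow from classical uniform kernel density estimation. Since $k_h$ is a bona fide smoothing kernel and $p$ is continuous on the compact set $\Xcal$, the conditional mean $\mathbb{E}[k_h(X_i,X_j)\mid X_i]$ converges to $p(X_i)$ uniformly, and a Bernstein-type concentration argument (with range $\wmax = (2\pi h^2)^{-d/2}$ and conditional variance of order $h^{-d}$) yields $\max_i |d_i/n - p(X_i)| \to 0$ almost surely whenever $nh^d/\log(n) \to \infty$, which is implied by our hypothesis. Strict positivity of $p$ on $\Xcal$ then gives $\dmin \geq c n$ on the favourable event, and therefore $\wmax/\dmin^2 \leq C/(n^2 h^d)$.

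Ingredient (iii) is the main technical work and proceeds exactly as the text suggests. Truncate the weight matrix at radius $\eps = h$: set $w_{ij}^\eps = k_h(X_i,X_j)\charfct\{\|X_i-X_j\|\leq h\}$, and let $G^\eps$ denote the resulting graph. On the support of $W^\eps$ the weights are pinched between $c_1 h^{-d}$ and $c_2 h^{-d}$, so $G^\eps$ differs from $h^{-d}$ times the unweighted $h$-graph $\tilde G^h$ by at most a constant multiplicative factor in every edge. Such a constant-factor comparison transfers to a constant-factor comparison of $1-\lambda_2$ via the Rayleigh characterization $1-\lambda_2 = \min_{f\perp D\mathbb{1}} \sum_{ij} w_{ij}(f_i-f_j)^2 / (2\sum_i d_i f_i^2)$, so Theorem~\ref{th-gap-eps} applied to $\tilde G^h$ with $\eps = h$ delivers $1-\lambda_2(G^\eps) \geq c h^2$ with high probability.

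The most delicate step, and the one I expect to be the main obstacle, is transferring the spectral gap from $G^\eps$ back to the full Gaussian graph $G$. Since the Gaussian tails decay but do not vanish, I would sandwich the Rayleigh quotient of $G$ between constant multiples of that of $G^\eps$ by showing that both the Dirichlet form $\sum_{ij} w_{ij}(f_i-f_j)^2$ and the degree form $\sum_i d_i f_i^2$ inflate by at most a constant factor when the tail weights are reinstated. For the degree form this is immediate from kernel density estimation, since the tail contribution $\sum_{\|X_j-X_i\|>h} k_h(X_i,X_j)$ to $d_i$ is of the same order $\Theta(n p(X_i))$ as the bulk part; for the Dirichlet form one uses $(f_i-f_j)^2 \leq 2(f_i^2 + f_j^2)$ together with the same kernel estimate to bound the tail contribution to the numerator by a constant multiple of the denominator. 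This yields $1-\lambda_2(G) \geq c' h^2$ with high probability, and inserting (i)-(iii) into Proposition~\ref{prop-approx} produces the deviation bound above. A naive Cheeger-type comparison loses too much here, so careful bookkeeping of the tail contributions to both forms is what makes this last step go through.
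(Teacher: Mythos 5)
Your proposal is correct, but it takes a genuinely different route from the paper's own proof, and in fact a somewhat cleaner one. The paper truncates the Gaussian weights at an \emph{adaptively chosen} radius $\eps$ satisfying $h^2=\eps^2/\log(n\eps^{d+2})$, so that $\eps/h\to\infty$ and the truncated degrees $d^\eps_i$ converge to the full degrees $d_i$; it then applies Proposition~\ref{prop-approx} to the \emph{truncated} graph $G^\eps$ (not to $G$), uses Rayleigh monotonicity $R_{ij}\leq R^\eps_{ij}$ to get one side of the deviation bound, and treats the other side separately via the elementary resistance lower bound of Proposition~\ref{th-lower}. The point of this detour is that the paper never needs to bound the spectral gap of the full Gaussian graph $G$; the cost is a polynomially large weight ratio $\wmax^\eps/\wmin^\eps=(n\eps^{d+2})^{1/2}$ in $G^\eps$, which the adaptive scaling is precisely engineered to absorb. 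You instead truncate at $\eps=h$, so the surviving weights are pinched between constant multiples of $h^{-d}$; via Proposition~\ref{prop-spectral-weighted} and Theorem~\ref{th-gap-eps} this gives $1-\lambda_2(G^\eps)\gtrsim h^2$, and you transfer this to $G$ by a constant-factor Rayleigh comparison: the Dirichlet form only grows when the tail weights are reinstated, and the degree form grows by at most the constant factor $\bigl(\int_{B(0,1)}(2\pi)^{-d/2}e^{-\|z\|^2/2}\,dz\bigr)^{-1}$, which is controlled by kernel density concentration. Applying Proposition~\ref{prop-approx} directly to $G$ then yields the two-sided bound $|nC_{ij}/\vol(G)-n/d_i-n/d_j|=O(1/(nh^{d+2}))$ in a single step, with no adaptive radius and no separate upper/lower arguments. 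One minor remark: for the required \emph{lower} bound on $1-\lambda_2(G)$, the tail analysis of the Dirichlet form with $(f_i-f_j)^2\leq 2(f_i^2+f_j^2)$ is superfluous --- the numerator of the Rayleigh quotient can only increase when tail weights return, so only the degree form needs control; that extra step would be needed for an upper bound on the gap, which the theorem does not require.
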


Note that in this theorem, we
introduced the scaling factor $1/h^d$ already in
the definition of the Gaussian similarity function to obtain the
correct density estimate $p(X_j)$ in the limit. For this reason, the resistance results
are rescaled with factor $n$ instead of $n h^d$.

\subsubsection{Application to random graphs with given expected
  degrees and \ER graphs}

Consider the general random
graph model where the edge between vertices $i$ and $j$ is chosen independently with
a certain probability $p_{ij}$ that is allowed to depend on $i$ and
$j$. This model contains very popular random graph models such as 
the \er random graph, planted partition graphs, and random
graphs with given expected degrees. 
For this class of random graphs, the following result has been proved
recently by \citet{ChuRad11}. 

\begin{theorem}[\citealp{ChuRad11}] 
\label{th-chungradcliffe}Let $G$ be a random graph where
  edges between vertices $i$ and $j$ are put independently with probabilities $p_{ij}$.Consider the normalized
  Laplacian $\Lsym$, and define the expected normalized Laplacian as
  the matrix $\overline{Lsym} : = I - \overline{D^{-1/2}}  \overline{A} 
    \overline{D^{-1/2}}$ where $\overline{A}_{ij} = E(A_{ij}) =
    p_{ij}$ and $\overline{D} = E(D)$.  Let $\overline{\dmin}$ be the
  minimal expected degree. Denote the eigenvalues of $\lsym$ by $\mu$, the ones of
  $\overline \lsym$ by $\overline\mu$. Choose $\eps >0$. Then there exists a constant $k =
  k(\eps)$ such that if $\overline\dmin > k \log(n)$, then with
  probability at least $1 - \eps$, 
\ba
\forall {j =1, ..., n}: \;\;\;| \mu_j - \overline{\mu_j} | \leq 2 \sqrt{\frac{3 \log(4n / \eps)}{\overline{\dmin}}}\\
\ea
\end{theorem}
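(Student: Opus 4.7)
The plan is to deduce this statement from Weyl's inequality together with a matrix Bernstein bound for symmetric random matrices. Since $\lsym$ and $\overline{\lsym}$ are both real symmetric, Weyl gives $|\mu_j - \overline{\mu}_j| \leq \|\lsym - \overline{\lsym}\|$ for every $j$, so the whole statement reduces to a single high-probability bound on $\|\lsym - \overline{\lsym}\|$.

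I would split the difference as $\lsym - \overline{\lsym} = T_1 + T_2$, where
\[
T_1 \;=\; \overline{D}^{-1/2}\bigl(\overline{A}-A\bigr)\overline{D}^{-1/2}, \qquad
T_2 \;=\; \overline{D}^{-1/2} A\, \overline{D}^{-1/2} - D^{-1/2} A D^{-1/2}.
\]
The term $T_1$ is a sum, over the independent edge indicators, of the mean-zero symmetric rank-$2$ matrices $(a_{ij}-p_{ij})(\overline{d}_i\overline{d}_j)^{-1/2}(e_ie_j^\top + e_je_i^\top)$. A short calculation shows each summand has norm at most $1/\overline{\dmin}$ and the matrix variance is at most $I/\overline{\dmin}$, after which the matrix Bernstein inequality yields
\[
\|T_1\| \;\leq\; c_1\sqrt{\log(n/\eps)/\overline{\dmin}} + c_2\log(n/\eps)/\overline{\dmin}
\]
with probability at least $1-\eps/2$. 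The hypothesis $\overline{\dmin}\geq k(\eps)\log n$ is exactly what absorbs the second term into the first and allows the constants to be tuned to match the $2\sqrt{3\log(4n/\eps)/\overline{\dmin}}$ shape in the conclusion.

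For $T_2$ I would apply a scalar Chernoff bound to each random degree $d_i = \sum_j a_{ij}$ to obtain $|d_i-\overline{d}_i|\leq c_3\sqrt{\overline{d}_i\log(n/\eps)}$ simultaneously for all $i$, with probability at least $1-\eps/2$. This translates into a uniform bound on $|(d_i)^{-1/2} - (\overline{d}_i)^{-1/2}|$ of order $(\overline{d}_i)^{-1}\sqrt{\log(n/\eps)/\overline{d}_i}$. Using the trivial a~priori bounds $\|D^{-1/2}AD^{-1/2}\|=\|I-\lsym\|\leq 2$ and $\|\overline{D}^{-1/2}A\overline{D}^{-1/2}\|\leq 2$, a telescoping of the two normalizers gives $\|T_2\|$ of the same order $\sqrt{\log(n/\eps)/\overline{\dmin}}$ as $\|T_1\|$. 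A triangle inequality and a union bound over the two events then close the argument.

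The main technical obstacle is the dependence between the random factor $D^{-1/2}$ and the random adjacency $A$ inside $T_2$: the matrix Bernstein inequality cannot be applied to $D^{-1/2} A D^{-1/2}$ directly, because its summands in the edge variables are not independent. The standard remedy, which I would follow, is to condition on the high-probability Chernoff event on which $D$ is close to $\overline{D}$ and then apply the matrix concentration step only to $T_1$, where the normalizers are deterministic. Matching the explicit prefactor $2\sqrt{3}$ is then a bookkeeping exercise of choosing the hidden constant $k(\eps)$ large enough that the subleading $\log(n/\eps)/\overline{\dmin}$ terms and the Chernoff tail are fully absorbed into the leading $\sqrt{\log(n/\eps)/\overline{\dmin}}$ contribution.
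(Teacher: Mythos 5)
The paper does not prove Theorem~\ref{th-chungradcliffe}; it imports it verbatim from \citet{ChuRad11} and builds on it. So there is no internal proof to compare against. That said, your reconstruction is essentially the argument that Chung and Radcliffe themselves give: Weyl's inequality reduces everything to $\|\lsym - \overline{\lsym}\|$, the decomposition into the ``deterministic normalizer'' piece $T_1 = \overline{D}^{-1/2}(\overline{A}-A)\overline{D}^{-1/2}$ and the ``normalizer swap'' piece $T_2 = \overline{D}^{-1/2}A\overline{D}^{-1/2} - D^{-1/2}AD^{-1/2}$ is the right one, and the division of labor (a matrix Bernstein/Chernoff bound for $T_1$, where the summands $(a_{ij}-p_{ij})(\overline{d}_i\overline{d}_j)^{-1/2}(e_ie_j^\top + e_je_i^\top)$ are independent and mean zero, versus a scalar Chernoff on the degrees plus operator-norm telescoping for $T_2$) is exactly how they separate the dependence problem you correctly identify.

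One small inaccuracy worth flagging: the bound $\|\overline{D}^{-1/2}A\overline{D}^{-1/2}\| \le 2$ is not an \emph{a priori} bound; unlike $\|D^{-1/2}AD^{-1/2}\| = \|I-\lsym\|\le 1$, the matrix $\overline{D}^{-1/2}A\overline{D}^{-1/2}$ is not the off-diagonal part of any graph's normalized Laplacian, so its norm can in principle be large if some realized degree wildly exceeds its expectation. It does hold on the Chernoff event where every $d_i$ is within a constant factor of $\overline{d}_i$ (conjugate by $D^{1/2}\overline{D}^{-1/2}$), which is exactly the event you already planned to condition on for $T_2$, so the fix is purely a reordering of the argument: establish the degree concentration first, then derive both the norm bound on $\overline{D}^{-1/2}A\overline{D}^{-1/2}$ and the telescoping estimate on that event. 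With that reordering, the proof is sound and matches the cited reference.
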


{\em Application to \er graphs. }
Here all edges have constant probabilities $p_{ij} =
p$ (for simplicity, we also allow for self-edges). 

\begin{corollary}[Results  on \ER graphs] \label{th-er}
Let $n \to \infty$, $p = \omega( \log n / n)$. Then the rescaled
hitting times 
on the \ER graph converge to a constant: for all
vertices $u,v$ in the graph we have 
\ba
\left|\frac{1}{n} \cdot H_{uv} - 1 \right|  = O\left(\frac{1}{n
    p}\right) \to 0 \;\;\;\text{ in probability.} \\
\ea
\end{corollary}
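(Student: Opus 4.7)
The plan is to apply Proposition \ref{prop-approx} to the \ER random graph and then control each of the ingredients on the right hand side, namely the spectral gap $1-\lambda_2$, the maximum weight $w_{\max}$, the minimum degree $d_{\min}$, and the volume $\vol(G)$, using standard concentration arguments for binomial random variables together with Theorem \ref{th-chungradcliffe}.

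First I would compute the expected normalized Laplacian. Since $p_{ij}=p$ for all $i,j$, we have $\overline{A}=pJ$ (where $J$ is the all-ones matrix) and $\overline{D}=np\cdot I$, so $\overline{\lsym}=I-\tfrac{1}{n}J$. This matrix has eigenvalues $0$ (with multiplicity one, on the constant vector) and $1$ (with multiplicity $n-1$); in particular $\overline{\mu_2}=1$. Since $p=\omega(\log n / n)$, we have $\overline{\dmin}=(n-1)p\gg \log n$, so Theorem \ref{th-chungradcliffe} (with $\eps=1/n$, say) gives $|\mu_2-1|\leq 2\sqrt{3\log(4n^2)/\overline{\dmin}}=o(1)$ with probability $1-1/n$. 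Translating from $\lsym$ to $P$ via $\mu_j=1-\lambda_j$, this yields $1-\lambda_2\geq 1/2$ w.h.p., so the spectral-gap factor $1/(1-\lambda_2)$ in Proposition \ref{prop-approx} is a bounded constant.

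Next I would control the degrees and the volume by Chernoff bounds. Each $d_v$ is Binomial$(n-1,p)$ with mean $(n-1)p$, and a standard Chernoff/union bound shows that uniformly $d_v=np(1+o(1))$ for all $v$, provided $np/\log n\to\infty$. In particular $d_{\min}\geq np/2$ w.h.p., so $w_{\max}/d_{\min}^2\leq 4/(np)^2$. Similarly $\vol(G)=2|E|$ concentrates around $n(n-1)p$, giving $\vol(G)=n^2 p(1+o(1))$ w.h.p., and more precisely $\vol(G)/d_v=n(1+O(1/(np)))$ uniformly in $v$ (the relative fluctuation of a single degree is of order $1/\sqrt{np}$, but the systematic term $\vol(G)/d_v-n$ is in fact of lower order because $\vol(G)=\sum_u d_u$ and the sum is much better concentrated than an individual summand).

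Plugging everything into Proposition \ref{prop-approx},
\begin{align*}
\left|\frac{H_{uv}}{\vol(G)}-\frac{1}{d_v}\right|\;\leq\; 2\Bigl(\tfrac{1}{1-\lambda_2}+1\Bigr)\frac{w_{\max}}{d_{\min}^2}\;=\;O\!\left(\frac{1}{(np)^2}\right)
\end{align*}
uniformly in $u\neq v$. Multiplying by $\vol(G)=\Theta(n^2 p)$ gives $|H_{uv}-\vol(G)/d_v|=O(1/p)$, and combining with the estimate $\vol(G)/d_v=n+O(1/p)$ and dividing by $n$ yields $|H_{uv}/n-1|=O(1/(np))$ w.h.p., as claimed. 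The main technical point, and the step I would take most care with, is the last refinement: naively bounding $\vol(G)/(n d_v)$ by CLT would only give a rate $1/\sqrt{np}$, so one has to exploit that $d_v$ appears both in the numerator (via $\vol(G)=\sum_u d_u$) and in the denominator, so the $\Theta(\sqrt{np})$ fluctuations cancel and the residual error is of the order $1/p$, matching the error already incurred from Proposition \ref{prop-approx}.
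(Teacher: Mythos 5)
Your overall strategy---applying Theorem~\ref{th-chungradcliffe} to pin down the spectral gap, concentrating degrees and volume via Chernoff bounds, and plugging into Proposition~\ref{prop-approx}---matches the paper's own proof, and the spectral gap part ($\overline{\mu_2}=1$ for the expected normalized Laplacian, hence $1-\lambda_2\geq 1/2$ with high probability whenever $np/\log n\to\infty$) is handled correctly.

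However, the ``cancellation'' argument you use in the last step is wrong. You claim $\vol(G)/d_v = n(1+O(1/(np)))$ because $d_v$ appears both in the numerator (as one summand of $\vol(G)=\sum_u d_u$) and in the denominator. This is not so. Computing directly, $\Var(\vol(G)-nd_v) = \Var(\vol(G)) + n^2\Var(d_v) - 2n\Cov(\vol(G),d_v) \approx 2n^2 p + n^3 p - 4n^2 p \sim n^3 p$: the term $n^2\Var(d_v)\sim n^3 p$, coming from the single degree sitting in the denominator, swamps $\Var(\vol(G))\sim n^2 p$ by a factor of $n$. Consequently $\vol(G)/(nd_v)-1$ has typical magnitude $\Theta(1/\sqrt{np})$, and your intermediate claim $\vol(G)/d_v - n = O(1/p)$ overstates the concentration by a factor of $\sqrt{np}$. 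For the $\Theta(\sqrt{np})$ fluctuation of $d_v$ to cancel, $\vol(G)$ would have to equal $n d_v$ up to lower-order terms; but $\vol(G)-nd_v=\sum_{u\neq v}(d_u - d_v)$ is in fact the dominant source of randomness here, not a small correction.

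Your instinct to flag this final conversion was right---this really is where the naive degree estimate only delivers a $1/\sqrt{np}$ rate. What Proposition~\ref{prop-approx} together with the spectral and degree bounds genuinely gives is $np\,|H_{uv}/\vol(G) - 1/d_v| = O(1/(np))$ uniformly in $u\neq v$, which is precisely the displayed inequality in the paper's proof; turning that into a bound on $|H_{uv}/n - 1|$ additionally requires $\vol(G)\approx n^2 p$ and $d_v\approx np$, and the second of these costs an extra $O(1/\sqrt{np})$. You should state that limitation honestly rather than assert a cancellation that is not there. (The paper leaves this last translation step implicit, so the same subtlety silently carries over to the stated $O(1/(np))$ rate for $|H_{uv}/n - 1|$.)
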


{\em Application to planted partition graphs.\\} 
Next we consider a simple model of an \er-graph with planted
partitions, the planted bisection case. Assume that the $n$ vertices are split into two
``clusters'' of equal size. We put an edge between two vertices $u$ and $v$ with
probability $p_{within}$ if they are in the same cluster and with
probability $p_{between} < p_{within}$ if they are in different
clusters. For simplicity we allow self-loops.

\begin{corollary}[Random graph with planted partitions]
\label{th-er-planted}
Consider an \er graph with planted bisection. Assume that 
$p_{within} = \omega(\log(n)/n)$ and $p_{between}$ such that
$n p_{between} \to \infty$ (arbitrarily slow). Then, for all vertices $u,v$ in
the graph
\ba
\left| \frac{1}{n} 
\cdot H_{ij} -1 \right| = O\left(\frac{1}{n \;\pbetween}\right) \to 0 \;\;\;\text{ in probability}. \\
\ea
\end{corollary}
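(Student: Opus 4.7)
The strategy is to apply the hitting-time inequality \eqref{eq-approx-hitting} of Proposition \ref{prop-approx}. The two quantities one needs to control are the spectral gap $1-\lambda_2$ of $\Lsym$ and the minimum degree $\dmin$; both come from Theorem \ref{th-chungradcliffe} combined with a direct computation on the expected Laplacian and with standard Chernoff bounds. Non-bipartiteness and connectedness, needed to apply Proposition \ref{prop-approx}, hold almost surely under the assumed density because self-loops are allowed and $n\pwithin=\omega(\log n)$.

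\emph{Spectrum of the expected Laplacian.} Every vertex has the same expected degree $\overline{d} = n(\pwithin+\pbetween)/2$, so $\overline{D} = \overline{d}\,I$ and $\overline{\Lsym} = I - \overline{A}/\overline{d}$. The expected adjacency matrix $\overline{A}$ has rank two: it has eigenvalue $\overline{d}$ on the all-ones vector and eigenvalue $n(\pwithin-\pbetween)/2$ on the signed cluster indicator, with the remaining eigenvalues equal to zero. Consequently the eigenvalues of $\overline{\Lsym}$ are $0$, $\overline{\mu}_2 = 2\pbetween/(\pwithin+\pbetween) \geq \pbetween/\pwithin$, and $1$ with multiplicity $n-2$.

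\emph{Concentration and assembly.} Applying Theorem \ref{th-chungradcliffe} with $\overline{\dmin} = \overline{d} = \Theta(n\pwithin)$ yields $|\mu_j - \overline{\mu}_j| = O(\sqrt{\log n/(n\pwithin)})$ uniformly in $j$ with high probability; under the stated assumptions this fluctuation is $o(1)$, and in the regime where it is dominated by $\overline{\mu}_2$ we obtain $1-\lambda_2 = \Omega(\pbetween/\pwithin)$. A Chernoff bound together with a union bound over the $n$ vertices gives $d_u = (1+o(1))\overline{d}$ uniformly, so $\dmin = \Theta(n\pwithin)$ and $\vol(G)/d_j = n(1+o(1))$. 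Substituting $\wmax=1$, this spectral gap, and these degree estimates into \eqref{eq-approx-hitting} gives
\[
\Bigl|\tfrac{1}{\vol(G)}H_{ij} - \tfrac{1}{d_j}\Bigr|
\;=\; O\!\left(\tfrac{\pwithin/\pbetween}{(n\pwithin)^2}\right)
\;=\; O\!\left(\tfrac{1}{n^2\pwithin\pbetween}\right).
\]
Multiplying through by $\vol(G)=\Theta(n^2\pwithin)$ and then dividing by $n$, while using $\vol(G)/(n d_j) = 1 + o(1)$ to replace the leading term, produces the desired bound $|H_{ij}/n - 1| = O(1/(n\pbetween))$.

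\emph{Main obstacle.} The delicate step is making the spectral-gap lower bound effective: the Chung--Radcliffe fluctuation $O(\sqrt{\log n/(n\pwithin)})$ has to be dominated by $\overline{\mu}_2 = \Theta(\pbetween/\pwithin)$, which amounts to the condition $\pbetween^2 \gg \pwithin \log n / n$. This is automatic in the near-symmetric case (where the argument collapses to the proof of Corollary \ref{th-er}), but in the strongly asymmetric regime where $\pbetween$ sits just above the connectivity threshold while $\pwithin$ is large, a black-box spectral concentration bound is too weak. In that corner of the parameter space one would have to exploit the explicit rank-two structure of $\overline{A}$ via a Davis--Kahan style argument, showing that the second eigenvector of the random $\Lsym$ is close to the cluster indicator and thereby obtaining a sharper estimate on $\mu_2$ directly.
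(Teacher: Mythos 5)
Your proof plan follows essentially the same path as the paper's: compute the rank-two expected normalized adjacency matrix, read off $\overline{\mu}_2 = 2\pbetween/(\pwithin+\pbetween)$, invoke Theorem~\ref{th-chungradcliffe} for spectral concentration, and feed the resulting gap and degree estimates into inequality~\eqref{eq-approx-hitting}. The intermediate computation $O\!\left(\frac{\pwithin/\pbetween}{(n\pwithin)^2}\right)$ and the final rescaling match the paper up to constants.

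The \emph{main obstacle} you flag is real, and --- importantly --- the paper's own proof does not resolve it either. The paper simply states ``Under the assumption that $\pwithin = \omega(\log(n)/n)$, the deviations in Theorem~\ref{th-chungradcliffe} converge to~$0$" and then plugs in the \emph{expected} spectral gap. But convergence of the deviation to zero is insufficient: one needs the Chung--Radcliffe fluctuation $O(\sqrt{\log n/(n\pwithin)})$ to be \emph{small relative to} $\overline{\mu}_2 = \Theta(\pbetween/\pwithin)$, which only holds when $\pbetween^2 \gtrsim \pwithin\log n/n$. Taking $\pwithin$ constant and $\pbetween = \log^2(n)/n$ satisfies all hypotheses of the corollary, yet falls outside this regime: the fluctuation $\Theta(\sqrt{\log n/n})$ swamps $\overline{\mu}_2 = \Theta(\log^2 n/n)$ and the eigenvalue bound becomes vacuous. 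You have therefore identified a genuine gap in the published argument, not just in your own. The corollary is nonetheless plausibly true across the full parameter range: in the sub-threshold regime where the planted structure is spectrally invisible, the empirical normalized adjacency behaves like that of a homogeneous \ER graph and the actual gap $1-\lambda_2$ is $\Theta(1)$, i.e.\ \emph{larger} than $2\pbetween/(\pwithin+\pbetween)$, so inequality~\eqref{eq-approx-hitting} only improves. But turning this into a proof requires a finer argument (separating bulk from outlier eigenvalues, or the Davis--Kahan route you suggest) rather than the single black-box Weyl-type perturbation bound. One further small caveat shared with the paper: replacing $\vol(G)/(n d_j)$ by $1+o(1)$ introduces degree-concentration error of order $\sqrt{\log n/(n\pwithin)}$, which need not be $O(1/(n\pbetween))$; to be pedantic, the conclusion should be read as $|H_{ij}/n - 1| = o_P(1)$ with the stated rate applying to the auxiliary quantity $\bigl|\tfrac{1}{\vol(G)}H_{ij}-\tfrac{1}{d_j}\bigr|$ multiplied by $\overline{d}$.
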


This result is a prime example to show how that even though there is a
strong cluster structure in the graph, the hitting times and commute
distances cannot see this cluster structure any more, once the graph
gets too large. Note that the corollary even holds if $p_{between}$ grows much
slower than $p_{within}$. That is, the larger our graph, the more
pronounced is the cluster structure. Nevertheless, the commute
distance converges to a trivial result. On the other hand, we also see
that the speed of convergence is $O(n \pbetween)$, that is, if
$\pbetween = g(n) /n$ with a very slow growing function $g$, then
convergence can be very slow. We might need
very large graphs before the degeneracy of the commute time will be
visible. \\

{\em Application to random graphs with given expected degrees.} 
For a graph of $n$ vertices we have $n$
parameters $\bar d_1,
..., \bar d_n > 0$. For each pair of vertices $v_i$ and $v_j$, we independently place an
edge between these two vertices with probability $\bar d_i \bar d_j /
\sum_{k=1}^n \bar d_k$. It is easy to see that in this model, vertex
$v_i$ has expected degree $\bar d_i$ (cf.\ Section 5.3. in
\citealp{ChuLu06} for background reading). \\

\begin{corollary}[Results on random graphs with expected degrees] \label{th-expected-degrees}
Consider any sequence of random graphs with expected degrees such that
$\overline{d_{\min}} = \omega(\log n)$. Then the commute distances satisfy for all $i \neq j$,
\ba
\frac{ \bigg|\frac{1}{\vol(G)} C_{ij} -
  \bigg(\frac{1}{d_i}+\frac{1}{d_j}\bigg)\bigg|}{\frac{1}{d_i}+\frac{1}{d_j}} 
= O\left(\frac{1}{\log(2n)}\right) \longrightarrow 0,  \text{ almost surely.} \\
\ea
\end{corollary}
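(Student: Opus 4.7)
The strategy is to apply Proposition~\ref{prop-approx}(2) to each realization of the random graph and then control its right-hand side probabilistically. Dividing that bound by $\tfrac{1}{d_i}+\tfrac{1}{d_j}$ and using $w_{\max}=1$ for the unweighted graph, the relative error we must bound is at most
\[
\frac{1}{d_{\min}}\left(\frac{1}{1-\lambda_2} + 2\right).
\]
Hence it suffices to prove two claims with probability $1-O(1/n^2)$: (a) $d_{\min} \geq \tfrac{1}{2}\overline{d}_{\min}$, and (b) $1-\lambda_2 \geq \tfrac{1}{2}$. Once both hold, the right-hand side becomes $O(1/\overline{d}_{\min})$, which under the assumption $\overline{d}_{\min}=\omega(\log n)$ is $o(1/\log n)$ and in particular $O(1/\log(2n))$.

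\textbf{Degree concentration.} Each degree $d_i$ is a sum of independent Bernoulli variables with mean $\overline{d}_i$. A multiplicative Chernoff bound yields $\Pr(d_i < \overline{d}_i/2) \leq \exp(-c\,\overline{d}_i)$ for an absolute constant $c>0$, so a union bound over vertices gives $\Pr(d_{\min} < \overline{d}_{\min}/2) \leq n\exp(-c\,\overline{d}_{\min})$. Under $\overline{d}_{\min}=\omega(\log n)$ this is $o(n^{-k})$ for every fixed $k$, so in particular summable.

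\textbf{Spectral gap via rank-one structure.} The crux is to identify $\overline{\mu}_2$ for the expected normalized Laplacian $\overline{\lsym}$. Because $p_{ij} = \overline{d}_i\overline{d}_j/\sum_k \overline{d}_k$, the expected adjacency matrix factors as $\overline{A} = (\sum_k \overline{d}_k)^{-1}\,\overline{d}\,\overline{d}^{\top}$, so it has rank one. Consequently $\overline{D}^{-1/2}\overline{A}\,\overline{D}^{-1/2} = (\sum_k \overline{d}_k)^{-1}\sqrt{\overline{d}}\sqrt{\overline{d}}^{\top}$ is also rank one, with trace (and hence unique nonzero eigenvalue) equal to $1$. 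Therefore $\overline{\lsym}$ has spectrum $\{0,1,1,\ldots,1\}$ and $\overline{\mu}_2 = 1$. Applying Theorem~\ref{th-chungradcliffe} with $\eps = 1/n^2$ gives
\[
\mu_2 \;\geq\; 1 - 2\sqrt{\tfrac{3\log(4n^3)}{\overline{d}_{\min}}} \;\longrightarrow\; 1
\]
since $\overline{d}_{\min}=\omega(\log n)$. Using $1-\lambda_2 = \mu_2$ (from the correspondence $\lambda(P) = 1-\mu(\lsym)$ stated in the setup), this yields $1-\lambda_2 \geq 1/2$ for all sufficiently large $n$, with failure probability at most $1/n^2$.

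\textbf{Combining.} On the intersection of the two good events, the relative error is bounded by $4/\overline{d}_{\min} = O(1/\log(2n))$ uniformly in $i \neq j$. Since the total failure probability is summable, Borel--Cantelli upgrades this to the almost-sure statement in the corollary. The main subtle point is the rank-one observation, which reduces the analysis of the expected spectral gap to a one-line computation and makes Theorem~\ref{th-chungradcliffe} immediately applicable; everything else is routine concentration. A minor technicality is verifying that the constant $k(\eps)$ appearing in the hypothesis of Theorem~\ref{th-chungradcliffe} is eventually dominated by $\overline{d}_{\min}/\log n$, which is immediate from $\overline{d}_{\min}=\omega(\log n)$.
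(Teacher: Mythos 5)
Your proof is correct and follows essentially the same strategy as the paper: bound the relative error via Proposition~\ref{prop-approx}(2), then control $d_{\min}$ by concentration and $1-\lambda_2$ by the Chung--Radcliffe eigenvalue perturbation bound. The one genuine difference is in how the spectral gap is obtained. The paper's proof cites a separate result (Theorem~4 of Chung--Radcliffe) that directly asserts a bound on the spectral gap of order $O(\log(2n)/\overline{d_{\min}})$, whereas you instead observe that $\overline{A}=(\sum_k\overline{d}_k)^{-1}\,\overline{d}\,\overline{d}^{\top}$ is rank one, so that $\overline{\lsym}$ has spectrum $\{0,1,\ldots,1\}$, and then apply the already-stated Theorem~\ref{th-chungradcliffe}. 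This is exactly the same pattern the paper uses for the \er and planted-partition corollaries, so your route has the advantage of being more uniform and self-contained. It also delivers a strictly stronger bound: you get $1-\lambda_2\geq 1/2$ eventually, hence a relative error of $O(1/\overline{d_{\min}})=o(1/\log n)$, while the paper's displayed bound is only $O(1/\log(2n))$. Since the corollary claims only $O(1/\log(2n))$, this is perfectly acceptable (indeed, better). Two very minor points you might flag but which do not affect correctness: the model implicitly requires $p_{ij}=\overline{d}_i\overline{d}_j/\sum_k\overline{d}_k\leq 1$, a regularity condition neither you nor the paper states explicitly; and, as you note, the Borel--Cantelli step implicitly assumes the sequence of random graphs lives on a common probability space, which is the standard convention for such ``almost surely'' statements in this literature.
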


\section{Proofs for the flow-based approach} \label{sec-proofs-flow}

For notational convenience, in this section we work with the
resistance distance $R_{uv} = C_{uv} / \vol(G)$ instead of
the commute distance $C_{uv}$, then we do not have to carry the factor
$1/\vol(G)$ everywhere. 

\subsection{Lower bound}

It is easy to prove that the resistance distance between two points is
lower bounded by the sum of the inverse degrees. 

\begin{proposition}[Lower bound] \label{th-lower}
Let $G$ be a weighted, undirected, connected graph and consider two
vertices $s$ and $t$, $s\neq t$. Assume that $G$  remains connected if we
remove $s$ and $t$. Then the effective resistance between $s$ and $t$
is bounded by 
\ba
R_{st} \geq \frac{ Q_{st}}{1 + w_{st} Q_{st} }
\ea
where $Q_{st} = 1/(d_s - w_{st}) + 1/(d_t - w_{st})$. Note that if $s$
and $t$ are not connected by a direct edge (that is, $w_{st}=0$), then
the right hand side simplifies to
$1/d_s + 1/d_t$. 
\end{proposition}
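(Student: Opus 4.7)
The plan is to use Rayleigh's monotonicity principle in its shorting form: identifying (shorting) a set of vertices can only decrease the effective resistance between any two other vertices, equivalently only increase the effective conductance. The key observation is that after shorting all vertices other than $s$ and $t$ into a single ``merged'' vertex $M$, the resulting network has an explicit series-parallel structure whose resistance can be written down in closed form.

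More concretely, I would proceed as follows. First, I would replace the subgraph induced by $V \setminus \{s,t\}$ by a short circuit: identify all vertices in $V \setminus \{s,t\}$ into a single node $M$, keeping the edges to $s$ and to $t$ (parallel edges collapse by adding conductances). This produces a graph on three nodes $\{s,t,M\}$ with three (multi)edges: the direct edge $st$ of conductance $w_{st}$; the bundle of edges from $s$ into $V\setminus\{s,t\}$ of total conductance $\sum_{k \neq s,t} w_{sk} = d_s - w_{st}$; and the bundle of edges from $t$ into $V\setminus\{s,t\}$ of total conductance $d_t - w_{st}$. Since $G$ remains connected after removing $s$ and $t$, the vertex $M$ is indeed a single node and both bundles are nonempty (if $d_s > w_{st}$ and $d_t > w_{st}$; the degenerate cases are trivial and can be handled separately).

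Next, I would compute the effective resistance in this three-node network by elementary series-parallel simplification. The two bundles $s$--$M$ and $M$--$t$ are in series, yielding a resistance of
\[
\frac{1}{d_s - w_{st}} + \frac{1}{d_t - w_{st}} = Q_{st},
\]
and this series combination is in parallel with the direct edge $st$ of resistance $1/w_{st}$ (when $w_{st}=0$ this branch is simply absent). The parallel combination has conductance $1/Q_{st} + w_{st}$, hence resistance
\[
\frac{Q_{st}}{1 + w_{st} Q_{st}}.
\]
Finally, I would invoke Rayleigh's monotonicity (shorting) principle to conclude that the effective resistance in the original graph $G$ is at least the effective resistance of the shorted graph, giving exactly the desired inequality. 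The case $w_{st}=0$ specializes to the stated $1/d_s + 1/d_t$.

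The only mild subtlety, rather than a real obstacle, is justifying the shorting step rigorously: this is standard (e.g., via Thomson's variational principle $R_{st} = \inf_{\theta} \mathcal{E}(\theta)$ over unit flows $\theta$ from $s$ to $t$, noting that any flow in the shorted graph lifts to a flow in $G$ with the same energy, so the infimum in $G$ is at least the infimum in the shorted graph). I would state this in one line and cite Doyle--Snell for the full argument, rather than reprove it.
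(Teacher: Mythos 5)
Your proof is correct and is essentially the same as the paper's: you short all vertices other than $s$ and $t$ into one super-node (the paper phrases this as sending the weights on the interior edges to infinity and then merging), obtain the same three-node series--parallel network with conductances $w_{st}$, $d_s - w_{st}$, $d_t - w_{st}$, and invoke Rayleigh's monotonicity principle. The only cosmetic difference is the reference used to justify monotonicity (you suggest Doyle--Snell or Thomson's principle; the paper cites Bollob\'as and Lyons--Peres).
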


\begin{proof}
  The proof is based on Rayleigh's monotonicity principle that states
  that increasing edge weights in the graph can never increase the
  effective resistance between two vertices (cf. Corollary 7 in
  Section IX.2 of \citealp{Bollobas98}). Given our original graph $G$,
  we build a new graph $G'$ by setting the weight of all edges to
  infinity, except the edges that are adjacent to $s$ or $t$ (setting
  the weight of an edge to infinity means that this edge has no
  resistance any more). This can
  also be interpreted as taking all vertices except $s$ and $t$ and
  merging them to one super-node $a$. Now our graph $G'$ consists of
  three vertices $s, a, t$ with several parallel edges from $s$ to
  $a$, several parallel edges from $a$ to $t$, and potentially the
  original edge between $s$ and $t$ (if it existed in $G$). 
  Exploiting
  the laws in electrical networks (resistances add along edges in
  series, conductances add along edges in parallel; see Section 2.3 in
  \citet{LyoPer10} for detailed instructions and examples) leads to
  the desired result.
\end{proof}

\subsection{Upper bound}

This is the part that requires the hard work. 
Our proof is based on a theorem that shows how the resistance between
two points in the graph can be computed in terms of flows on the
graph. The following result is taken from Corollary 6 in Section IX.2 of
\citet{Bollobas98}.

\begin{theorem}[Resistance in terms of flows,
  cf. \citealp{Bollobas98}]
\label{th-bollobas}
Let $G=(V,E)$ be a weighted graph with edge weights $w_e$ $(e \in
E)$. 
The effective resistance $R_{st}$ between two fixed vertices $s$ and
$t$ can
be expressed as 

\banum \label{eq-bollobas}
R_{st} = 
\inf \left\{ 
\sum_{e \in E} \frac{u_e^2 }{w_e}
\; \Big| \;
u = (u_e)_{e \in E} \text{ unit flow from } s \text{ to } t
\right\}.
\eanum

\end{theorem}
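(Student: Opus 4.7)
The plan is to prove this via Thomson's principle from the physics of electrical networks. First I would set the stage: fix an arbitrary orientation of the edges, view $w_e$ as conductance (so $1/w_e$ is resistance), and inject one unit of current at $s$ and extract it at $t$. Physical laws (Kirchhoff's current law plus Ohm's law) produce a unique current flow $i^* = (i_e^*)_{e \in E}$ and an associated potential $\phi : V \to \R$ such that $i_e^* = w_e(\phi(x) - \phi(y))$ for $e = (x,y)$, with $\phi(s) - \phi(t) = R_{st}$ by the definition of effective resistance. In particular, $i^*$ is itself a unit flow from $s$ to $t$, and its dissipated energy is
\[
\sum_{e \in E} \frac{(i_e^*)^2}{w_e} \;=\; \sum_{e \in E} i_e^*\,(\phi(x)-\phi(y)) \;=\; \phi(s) - \phi(t) \;=\; R_{st},
\]
where the middle step is a discrete integration by parts (Abel summation) using that $i^*$ has net flow $1$ out of $s$ and $-1$ out of $t$ and $0$ at every other vertex.

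Next I would prove the variational inequality. Let $u = (u_e)$ be any unit flow from $s$ to $t$ and write $\delta := u - i^*$. Since both $u$ and $i^*$ are unit $s$-$t$ flows, $\delta$ is a circulation, i.e.\ it has zero net flow at every vertex. Expanding,
\[
\sum_{e \in E}\frac{u_e^2}{w_e} \;=\; \sum_{e \in E}\frac{(i_e^*)^2}{w_e} \;+\; 2\sum_{e \in E}\frac{i_e^* \,\delta_e}{w_e} \;+\; \sum_{e \in E}\frac{\delta_e^2}{w_e}.
\]
The cross term vanishes: by Ohm's law $i_e^*/w_e$ is a potential difference, and the same Abel summation as before converts $\sum_e (\phi(x)-\phi(y))\,\delta_e$ into $\sum_v \phi(v) \cdot (\text{net flow of }\delta\text{ at }v)$, which is zero since $\delta$ is a circulation. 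Hence
\[
\sum_{e \in E}\frac{u_e^2}{w_e} \;=\; R_{st} \;+\; \sum_{e \in E}\frac{\delta_e^2}{w_e} \;\geq\; R_{st},
\]
with equality if and only if $u = i^*$. Taking the infimum over unit flows $u$ yields the claim.

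The genuinely delicate step is the vanishing of the cross term, because it is where the structure of the problem lives: the space of currents decomposes orthogonally (with respect to the energy inner product $\langle a,b\rangle_{1/w} = \sum_e a_e b_e / w_e$) into the cycle space and the cut space, and $i^*$ lives in the cut space while $\delta$ lives in the cycle space. Everything else is a bookkeeping Abel summation, but getting the signs right requires a fixed edge orientation throughout the argument.
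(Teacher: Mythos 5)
Your proof is correct and is precisely the standard Thomson's-principle argument: decompose any unit flow as the unique Kirchhoff current plus a circulation, show the cross term vanishes via Abel summation (equivalently, the energy-inner-product orthogonality of the cut space and cycle space), and observe the remainder is nonnegative. The paper does not give its own proof of this statement --- it cites Bollob\'as, Corollary~6 in Section~IX.2, whose proof is this same decomposition --- so your argument matches the referenced one in approach and supplies the details the paper delegates to the reference.
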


Note that evaluating the formula in the above theorem for any fixed
flow leads to an upper bound on the effective resistance. The key to
obtaining a tight bound is to distribute the flow as widely and
uniformly over the graph as possible. \\

For the case of geometric graphs we are going to use a grid on
the underlying space to construct an efficient flow between two
vertices.  
Let $X_1, ..., X_n$ be a fixed set of points in $\R^d$ and consider
a geometric graph $G$ with vertices $X_1, ..., X_n$. Fix any two
of them, say $s:= X_1$ and $t: = X_2$. Let $\Xcal \subset \R^d$ be a connected set
that contains both $s$ and $t$. Consider a  regular grid with grid
width $g$ on $\Xcal$. We say that grid cells are neighbors of each other if they touch
  each other in at least one edge. 

\begin{definition}[Valid grid]\label{def-valid-grid}
We call the grid {\em valid } if the following properties
are satisfied: 
\be
\item The grid width is not too small: Each cell of the grid contains at least one of the points $X_1,
  ..., X_n$. 
\item The grid width $g$ is not too large: Points in the same or neighboring cells of the grid are always
  connected in the graph $G$. 

\item Relation between grid width and geometry of $\Xcal$: Define the bottleneck $h$ of
  the region $\Xcal$ as the
  largest $u$ such that the set $\{x \in \Xcal \condon
  dist(x, \partial \Xcal) > u/2\} $ is connected. 

We require that
  $\sqrt{d}\;g \leq h$ (a cube of side length $g$ should fit in the
  bottleneck).  \ee
\end{definition} 

We now prove the following
general  proposition that gives an upper bound on the resistance
distance between vertices in a fixed geometric graph.

\begin{proposition}[Resistance on a fixed geometric graph]
\label{prop-resistance-fixed-graph}
\sloppy 
Consider a fixed set of points 
$X_1, ..., X_n$ in some connected region $\Xcal \subset \R^d$ 
and a geometric graph on $X_1,..., X_n$. 
Assume that $\Xcal$ has bottleneck not smaller than $h$ (where the bottleneck is defined as in the definition
of a valid grid). Denote
$s=X_1$ and $t=X_2$. Assume that $s$ and $t$ can be connected by a
straight line that stays inside $\Xcal$ and has distance at least
$h/2$ to $\partial \Xcal$. Denote the distance between $s$ and $t$ by
$d(s,t)$. Let  $g$ be the width of a valid
grid on $\Xcal$ and assume that $d(s,t) > 4\sqrt{d}\;g$. By $\nmin$ denote the minimal number of points in
each grid cell, and define $a$ as 
\banum \label{eq-a}
 a \;\;  := \;\;\left\lfloor h/ (2
  g\sqrt{d-1})\right\rfloor . 
\eanum 
Assume that points that are connected in the graph are at most $Q$
grid cells apart from each other (for example, two points in the two grey cells in
Figure \ref{fig-step2} are 5 cells apart from each other). 
Then the effective 
resistance between $s$ and $t$ can be bounded as follows: 

\banum \label{eq-approx-resistance}
& \text{In case } d > 3: &&
R_{st} 
\leq 
\frac{1}{d_s} + \frac{1}{d_t} + 
\left(
\frac{1}{d_s} + \frac{1}{d_t}
\right)
\frac{2}{\nmin}
+ 
\frac{1}{\nmin^2} 
\left(3   + \frac{d(s,t)}{g(2a+1)^3 } + 2Q\right)\\
& \text{In case } d = 3: &&
R_{st} 
\leq 
\frac{1}{d_s} + \frac{1}{d_t} + 
\left(
\frac{1}{d_s} + \frac{1}{d_t}
\right)
\frac{2}{\nmin}
+ 
\frac{1}{\nmin^2} 
\left(\log(a) + 2 + \frac{d(s,t)}{g(2a+1)^2 } + 2Q\right)\\
& \text{In case } d = 2: &&
R_{st} 
\leq 
\frac{1}{d_s} + \frac{1}{d_t} + 
\left(
\frac{1}{d_s} + \frac{1}{d_t}
\right)
\frac{2}{\nmin}
+ 
\frac{1}{\nmin^2} 
\left(2a + 2  + \frac{d(s,t)}{g(2a+1) } +2Q\right)
\eanum
\end{proposition}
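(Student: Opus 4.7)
The plan is to invoke Theorem~\ref{th-bollobas}, which lets us upper bound $R_{st}$ by the energy $\sum_e u_e^2/w_e$ of any unit flow $u$ from $s$ to $t$. Since the geometric graphs considered here are unweighted I take $w_e=1$ and construct an explicit flow that uses the valid grid to spread the current across as many parallel paths between $s$ and $t$ as possible. The flow is decomposed into three stages: an \emph{outflow stage} from $s$, a \emph{transport stage} running through a ``pipe'' of grid cells from $s$'s neighbourhood to $t$'s neighbourhood, and a symmetric \emph{inflow stage} into $t$. Their energies add to give the claimed bound \eqref{eq-approx-resistance}.

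For the outflow stage, I send the unit flow equally along the $d_s$ edges leaving $s$, each carrying $1/d_s$. This already contributes the leading $1/d_s$ to the energy, and symmetrically $1/d_t$ arises at $t$. Next, each neighbour $v$ of $s$ must push its flow of magnitude $1/d_s$ into an adjacent grid cell. Since one endpoint of each such hop is fixed at $v$, the flow can only be spread across the $\nmin$ parallel edges from $v$ to points of the target cell, so one hop costs $(1/d_s)^2/\nmin$ in energy; summed over all $d_s$ neighbours of $s$ and the symmetric $d_t$ neighbours of $t$ this produces the cross term $(1/d_s+1/d_t)\cdot 2/\nmin$.

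Once the flows are aggregated into full cells, I move them through up to $Q$ additional cell-to-cell hops on each side into the main pipe. Here both endpoints of each hop can range over the $\nmin$ points of each cell, so the flow can be uniformly distributed over $\nmin^2$ edges, giving per-hop energy $1/\nmin^2$ and total contribution $2Q/\nmin^2$, which is the $2Q$ term in the last bracket. The pipe itself has cross-section $(2a+1)^{d-1}$ cells chosen so that its width $(2a+1)g\sqrt{d-1}\le h$ fits inside the bottleneck of $\Xcal$ (this is exactly the definition of $a$ in \eqref{eq-a}), and runs for $\approx d(s,t)/g$ cell layers along the straight segment connecting $s$ and $t$, which exists by the distance assumption $d(s,t)>4\sqrt{d}\,g$. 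Splitting the unit flow evenly among the $(2a+1)^{d-1}$ rows of the pipe and spreading each cell-to-cell hop over $\nmin^2$ edges gives per-hop energy $1/((2a+1)^{2(d-1)}\nmin^2)$, and summing over all rows and all $d(s,t)/g$ layers yields the pipe term $d(s,t)/(g(2a+1)^{d-1}\nmin^2)$.

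The main obstacle is the \emph{fan-out/fan-in step} that connects the single cell containing $s$ (resp.\ $t$) to the full cross-section of the pipe. I model it as a $(d-1)$-dimensional diffusion on the cross-section: at radius $r$ from the centre there are $\Theta(r^{d-2})$ cells in the corresponding shell, each carrying flow $\Theta(r^{-(d-2)})$ that is forwarded to the next shell via cell-to-cell hops of cost $r^{-2(d-2)}\nmin^{-2}$, so the shell-to-shell cost is $\Theta(r^{-(d-2)}\nmin^{-2})$. Summing over $r=1,\dots,a$ gives exactly the dimension-dependent term in the bracket: a bounded constant (absorbed into the ``$3$'') when $d\geq 4$ since $\sum_r r^{-(d-2)}$ converges, the harmonic sum of order $\log(a)$ when $d=3$ (each shell has $\Theta(r)$ cells), and a linear sum of order $2a+2$ when $d=2$ (each shell has $\Theta(1)$ cells). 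Adding the outflow, first-hop, pipe, fan-out and symmetric $t$-side contributions and collecting terms produces the three dimension-specific upper bounds in \eqref{eq-approx-resistance}.
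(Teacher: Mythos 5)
Your overall strategy is the same as the paper's: decompose the flow into an outflow stage at $s$, a recall-to-cell and $Q$-hop stage, a fan-out to a large cross-section, a ``pipe'' of parallel cell paths, and the mirror image at $t$, then sum the energies via Theorem~\ref{th-bollobas}. Two points are worth flagging.

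First, for $d>3$ you spread the flow over the full $(d-1)$-dimensional cross-section of $(2a+1)^{d-1}$ cells. The paper deliberately does \emph{not} do this: even when $d$ is large it only expands the flow into a $3$-dimensional hypercube, because $\sum_{r} r^{-2}$ already converges and because the redistribution bookkeeping in higher dimensions is painful. Your variant would indeed give a stronger pipe term $\frac{d(s,t)}{g(2a+1)^{d-1}\nmin^2}$, which implies the stated $\frac{d(s,t)}{g(2a+1)^3\nmin^2}$, so nothing is wrong, but it commits you to carrying out the fan-out in arbitrary dimension, which is substantially more work and not what the proposition's formula reflects.

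Second, and this is a genuine gap in your sketch: forwarding flow from shell $r-1$ to shell $r$ does \emph{not} automatically produce a uniform distribution on shell $r$ (cells near the centres of the faces receive flow from more predecessors than cells near corners and edges). You silently assume each shell carries $\Theta(r^{-(d-2)})$ in every cell, but this is only true after the flow has been rebalanced \emph{within} the shell. The paper handles this with an explicit \emph{redistribution phase} 3b($i$) following each \emph{expansion phase} 3a($i$); each such phase costs at most $1/(z_i\nmin^2)$, and the factor $2$ in $r_3\le \frac{2}{\nmin^2}\sum_i 1/z_{i-1}$ comes precisely from paying for both phases. Your bound would underestimate the energy by roughly that factor~$2$ unless you add an analogous intra-shell rebalancing step. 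Finally, your claim that the first-hop term equals $(\tfrac{1}{d_s}+\tfrac{1}{d_t})\cdot\tfrac{2}{\nmin}$ is not justified by your own calculation (which gives coefficient $1$, not $2$); the stated factor $2$ is simply a slack in the proposition and must not be derived from the hop at $s$ and $t$.
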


The general idea of the proof is to construct a flow from $s$ to $t$ with the help of
the underlying grid. On a high
level, the construction of the proof is not so difficult, but the
details are lengthy and a bit tedious. The  rest of this
section is devoted to it. \\

{\bf Construction of the flow --- overview.} 
Without loss of generality we assume that there exists a straight line
connecting $s$ and $t$ which is
along the first dimension of the space.

{\be 
\addtocounter{enumi}{-1}
\renewcommand{\labelenumi}{Step \arabic{enumi}: }

\item 
We start a unit flow in vertex $s$. 

\item We make a step to all neighbors $\neigh(s)$ of $s$ and distribute the
flow uniformly over all edges. That is, we traverse $d_s$ edges and send
flow $1/d_s$ over each edge (see Figure \ref{fig-step1}).

\item Some of the flow now sits inside $C(s)$, but some of it might
  sit outside of $C(s)$. In this step, we bring back all flow to
  $C(s)$ in order to control it later on (see Figure \ref{fig-step2}). \\ 

\item We now distribute the flow from $C(s)$ to a
larger region, namely to a hypercube $H(s)$ of side length $h$ that is
perpendicular to the linear path from $s$ to $t$ and centered at
$C(s)$ (see the hypercubes in Figure \ref{fig-overview}). This can be achieved in
several substeps that will be defined below. 

\item We now traverse from $H(s)$ to an analogous hypercube $H(t)$
located at $t$ using parallel paths, see Figure \ref{fig-overview}. 

\item From the hypercube $H(t)$ we send the flow to the neighborhood
  $\neigh(t)$ (this is the ``reverse'' of steps 2 and 3). 

\item  From $\neigh(t)$ we finally send the flow to the destination
$t$ (``reverse'' of step 1). \\

\ee
}

\begin{figure}
\bc
\subfloat[Step 1. Distributing the flow from $s$ (black dot) to all its
  neighbors (grey dots). %
]{%
\includegraphics[width=0.2\textwidth]{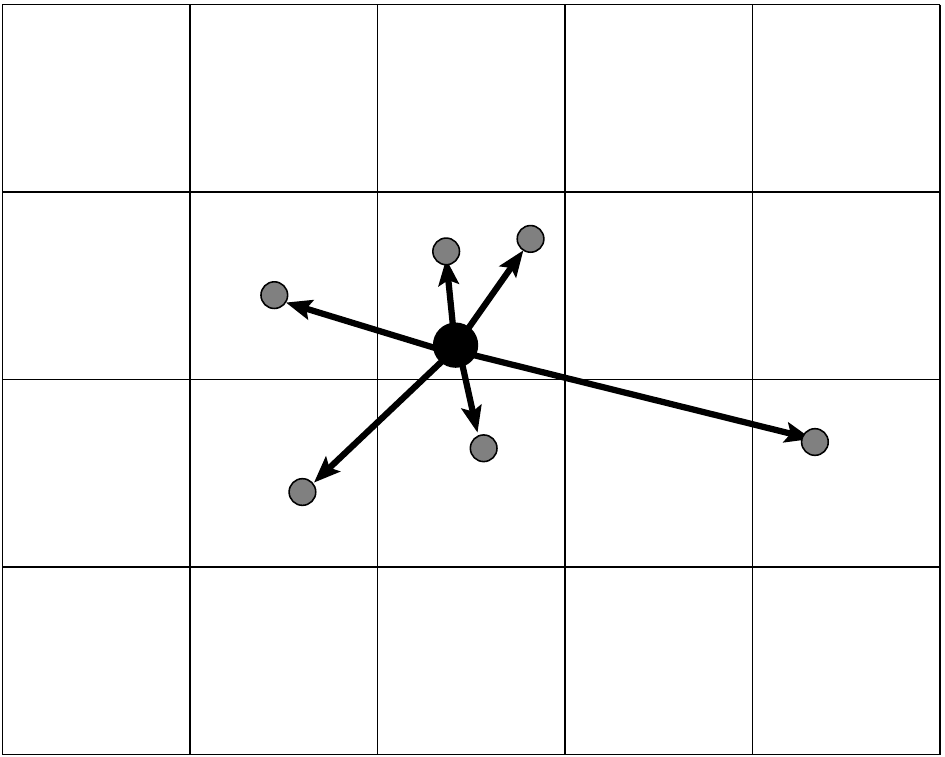} 
\label{fig-step1}}
\hspace{1cm}
\subfloat[Step 2. We bring back all flow from $p$ to $C(s)$. Also
shown in the figure is the hypercube to
  which the flow will be expanded in Step 3.]{%
\includegraphics[width=0.2\textwidth]{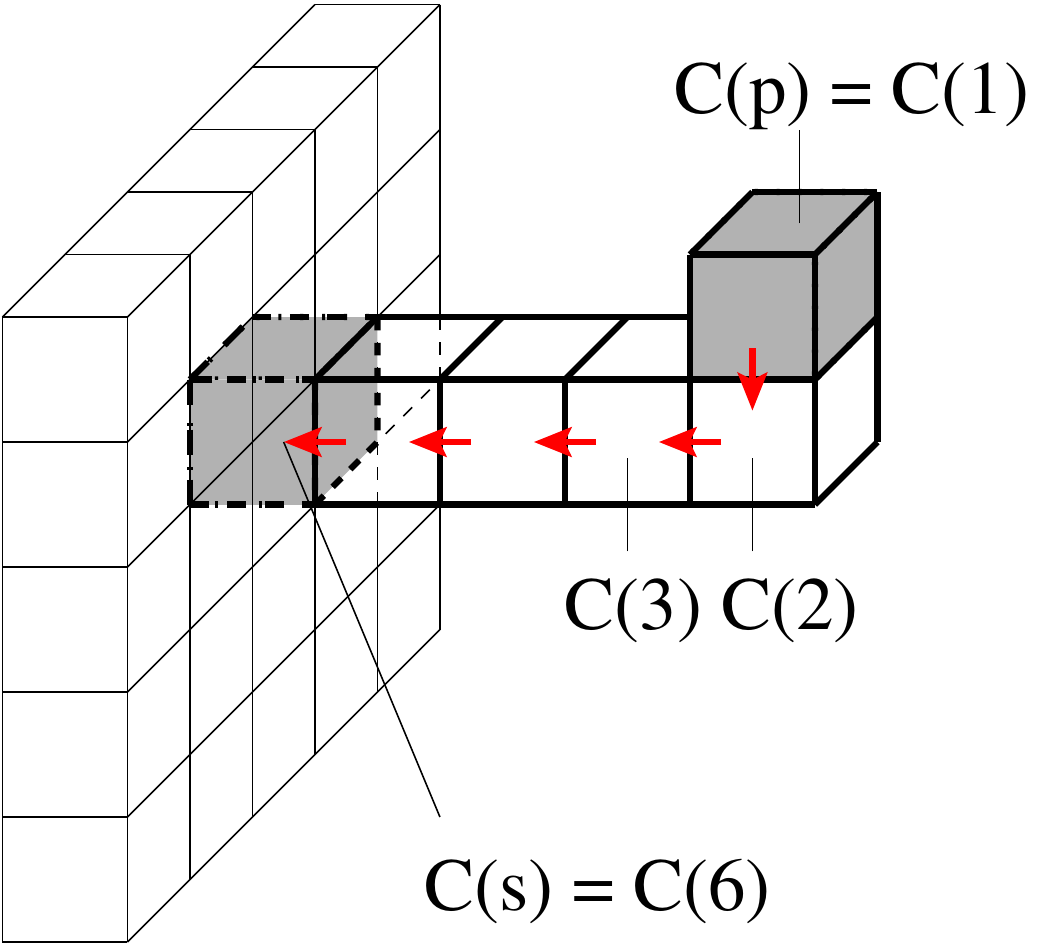} 
\label{fig-step2}}
\hspace{1cm}
\subfloat[Steps 3 and 4 of the flow construction: distribute the flow
  from $C(s)$ to a
  ``hypercube'' $H(s)$, then transmit it to a similar hypercube $H(t)$
  and guide it to $C(t)$.]{\includegraphics[width=0.4\textwidth]{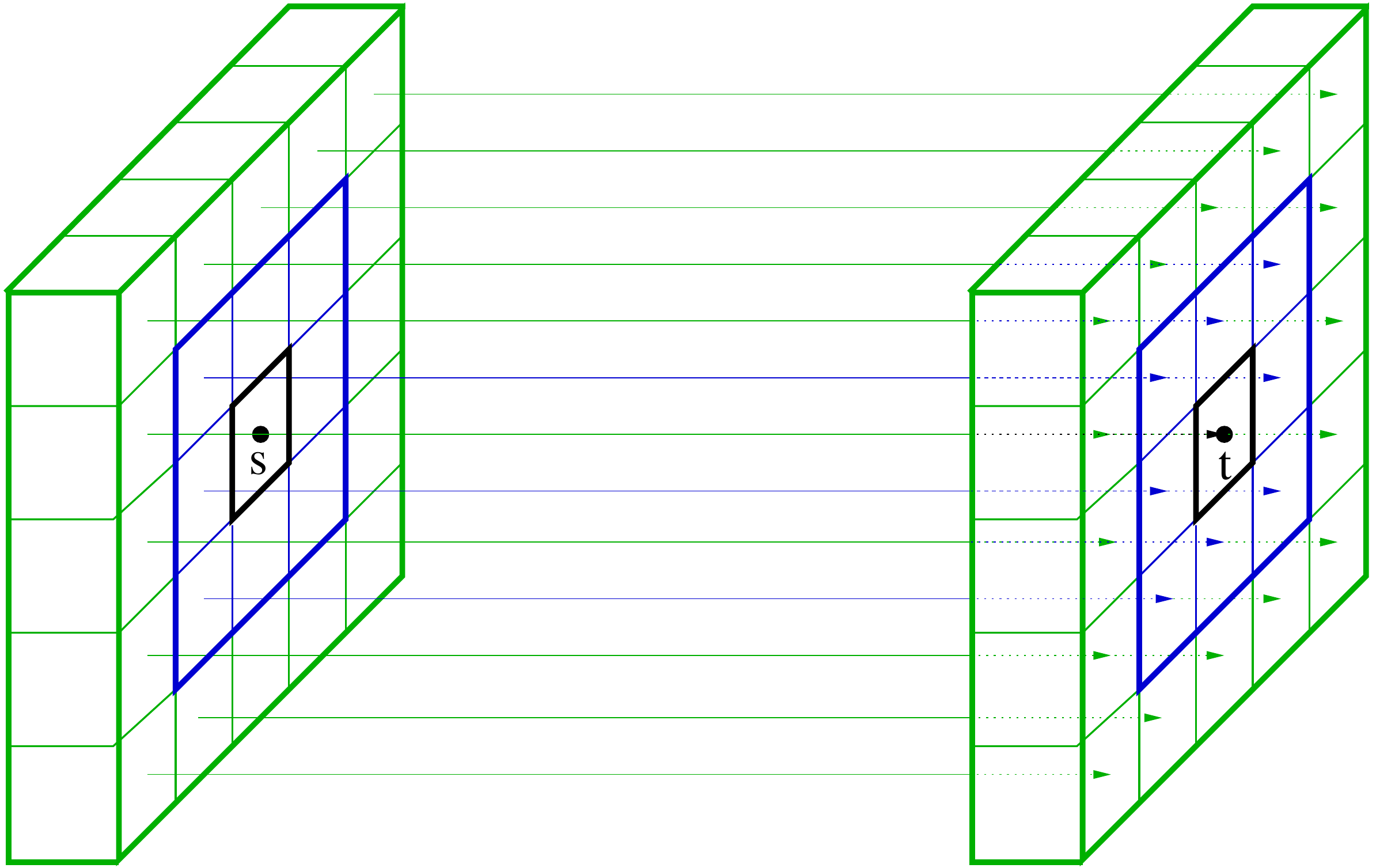}
\label{fig-overview}}
\ec
\caption{The flow construction --- overview. }
\end{figure}
\begin{figure}
\subfloat[Definition of layers.]{%
\includegraphics[width=0.2\textwidth]{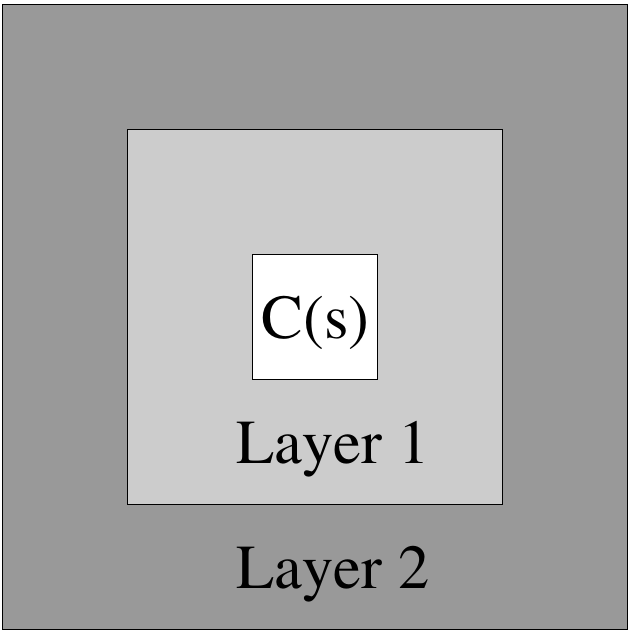}
\label{fig-layers}}
\hspace{0cm}
\subfloat[Before Step 3a starts, all flow is
  uniformly distributed in Layer $i-1$ (dark area). ]{%
\includegraphics[width=0.2\textwidth]{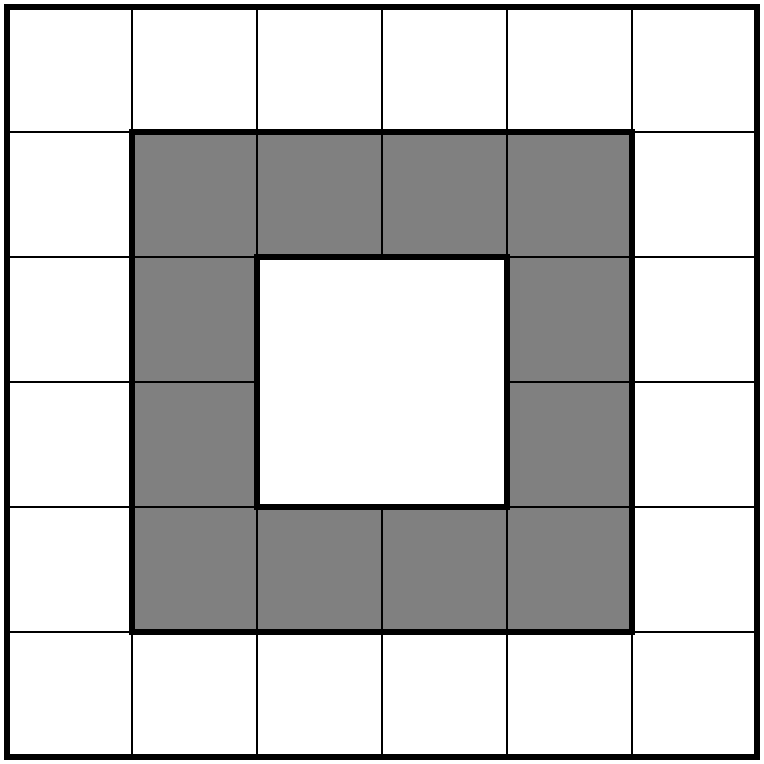}}
\hspace{0.3cm}
\subfloat[Step 3a then
  distributes the flow from Layer $i-1$ to the adjacent cells in Layer $i$]{%
\includegraphics[width=0.2\textwidth]{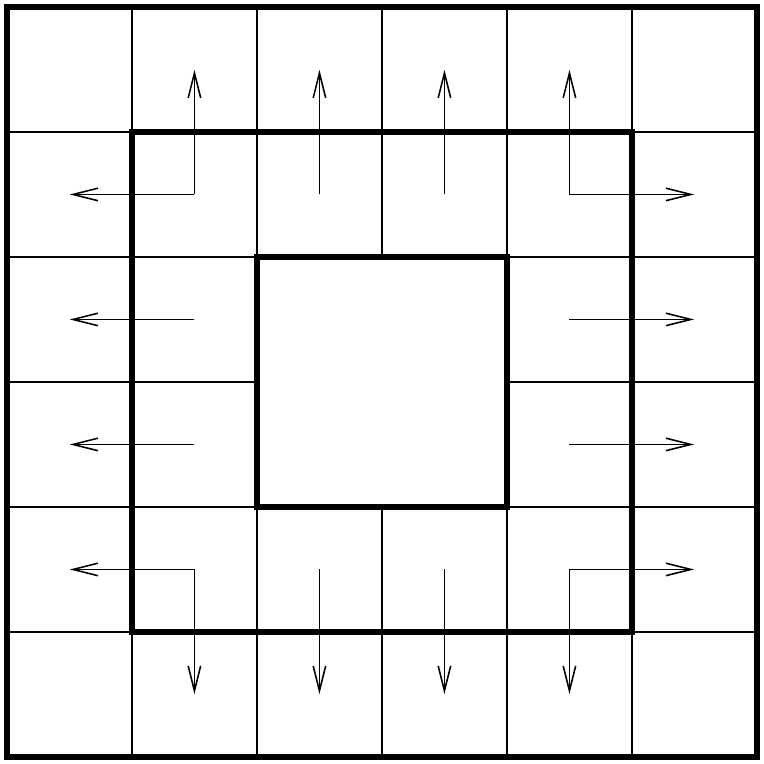}}
\hspace{0.3cm}
\subfloat[After Step 3a: all flow is in Layer $i$, but not yet
  uniformly distributed]{
\includegraphics[width=0.2\textwidth]{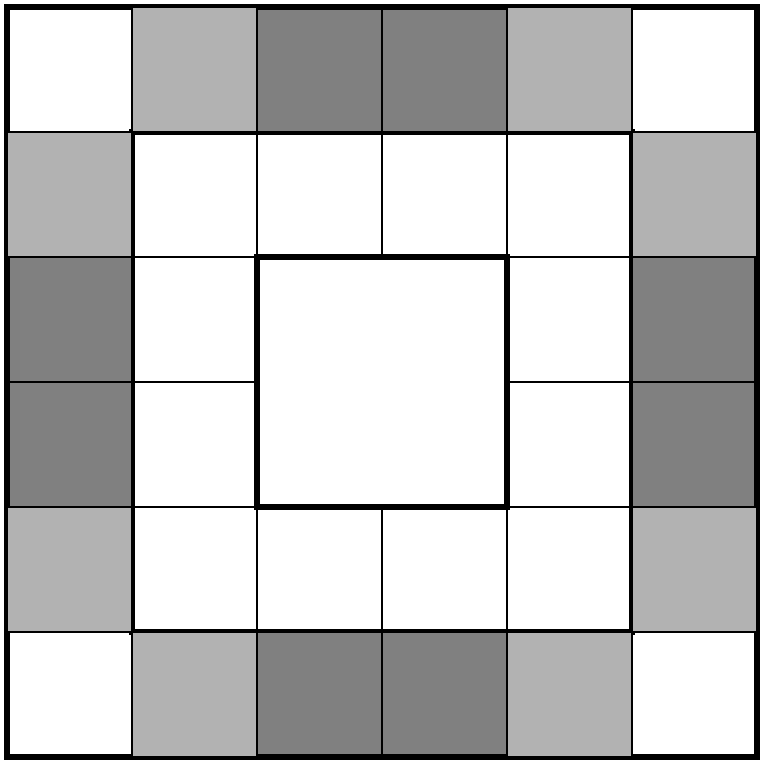}}
\hspace{0.3cm}
\subfloat[Step 3b
  redistributes the flow in Layer $i$.]{%
\includegraphics[width=0.2\textwidth]{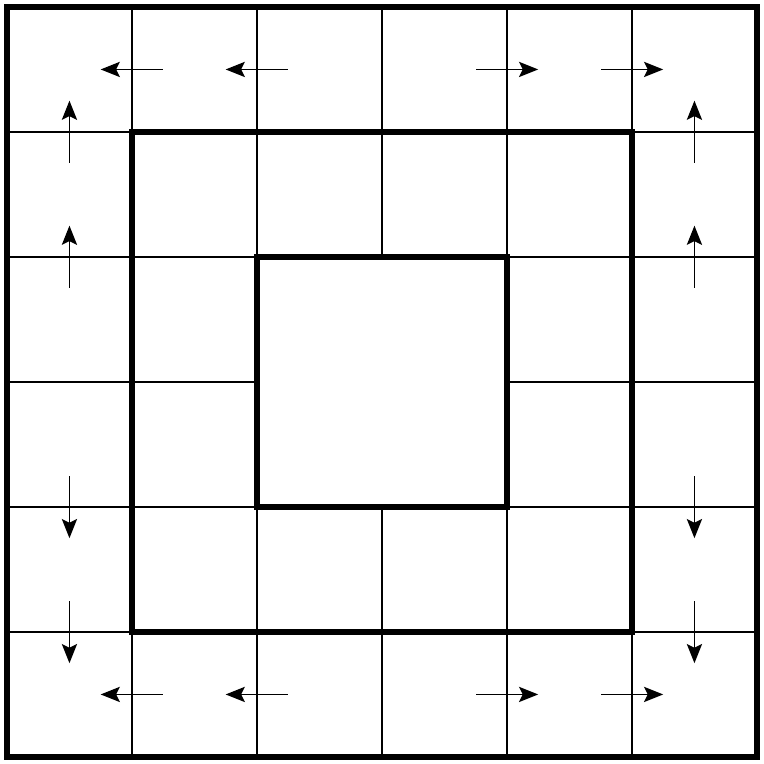}}
\hspace{0.3cm}
\subfloat[After Step 3b, the flow is uniformly distributed in Layer
$i$. ]{%
\includegraphics[width=0.2\textwidth]{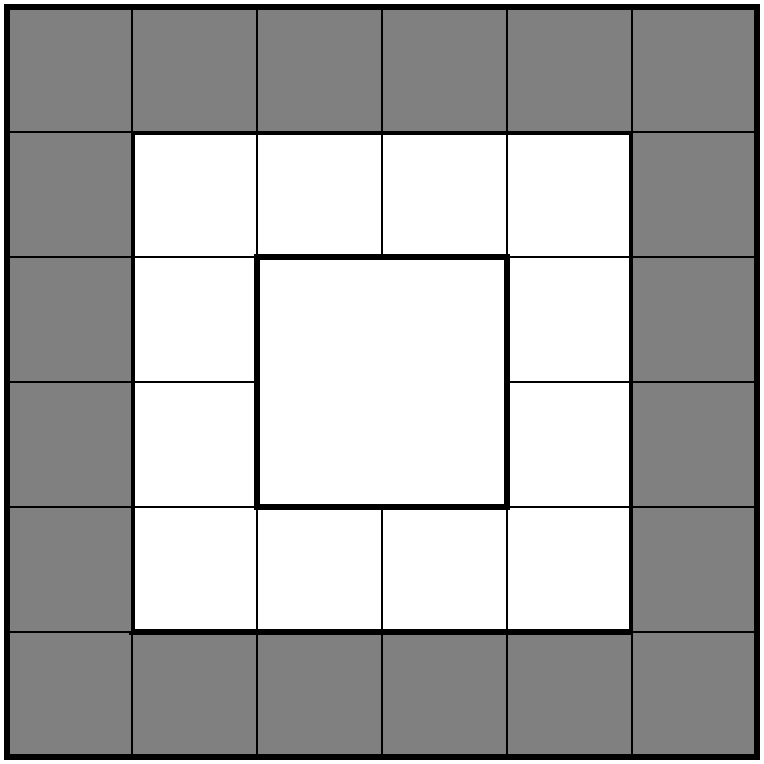}}
\caption{Details of Step 3 between Layers $i-1$ and $i$. The first row
  corresponds to the expansion phase, the second row to the
  redistribution phase. The figure is shown for the case of $d=3$. }
\label{fig-step3}
\end{figure}

{\bf Details of the flow construction and computation of the
  resistance beween $s$ and $t$ in the general case $d>3$. }
We now describe the individual steps and their contribution to the
bound on the resistance. We start with the general case $d > 3$. We will discuss
the special cases $d=2$ and $d=3$ below. \\

In the computations below, by the ``contribution of a step'' we mean the
part of the sum in Theorem \ref{th-bollobas} that goes over the edges
considered in the current step. \\

{\bf Step 1} We start with a unit flow at $s$ that we send over all
$d_s$ adjacent edges. This leads to flow $1/d_s$ over $d_s$ edges. According to the formula
in Theorem \ref{th-bollobas} this contributes 

\ba r_1 = d_s \cdot \frac{1}{d_s^2} = \frac{1}{d_s}
\ea
to the overall resistance $R_{st}$. \\

{\bf Step 2:} After Step 1, the flow sits on all neighbors of $s$, and
these neighbors are not necessarily all contained in $C(s)$. To
proceed we want to re-concentrate all flow in $C(s)$. For each
neighbor $p$ of $s$, we thus carry the flow along a Hamming path of
cells from $p$ back to $C(s)$, see Figure \ref{fig-step2} for an
illustration.\\

To compute an upper bound for Step 2 we exploit that each neighbor $p$
of $s$ has to traverse at most $Q$ cells to reach $C(s)$ (recall the
definition of $Q$ from the proposition). Let us fix $p$. After Step 1,
we have flow of size $1/d_s$ in $p$. We now move this flow from $p$ to all points in the neighboring
cell $C(2)$ (cf. Figure \ref{fig-step2}). For this we can use at least $\nmin$ edges. Thus we send flow
of size $1/d_s$ over $\nmin$ edges, that is each edge receives flow
$1/ (d_s \nmin)$. Summing the flow from $C(p)$ to $C(2)$, for all points $p$, gives 
\ba
d_s \nmin \left( \frac{1}{d_s \nmin} \right)^2 =  \frac{1}{d_s
  \nmin}, 
\ea
Then we transport the flow from $C(2)$ along to $C(s)$. Between each
two cells on the way we can use $\nmin^2$
edges. Note, however, that we need to take into account that some of
these edges might be used several times (for different points $p$). In
the worst case, $C(2)$ is the same for all points $p$, in which 
case we send the whole unit flow over these edges. This amounts to
flow of size $1 / (\nmin^2)$ over
$(Q-1)\nmin^2$ edges, that is a contribution of 
\ba
\frac{Q-1}{\nmin^2}. 
\ea

Altogether we obtain 
\ba
r_2 \leq \frac{1}{d_s \nmin}
+ 
\frac{Q}{\nmin^2}
. \\
\ea

{\bf Step 3:}
At the beginning of this step, the complete unit flow resides in the cube
$C(s)$. We now want to distribute this flow to a ``hypercube'' of three
dimensions (no matter what $d$ is, as long as $d>3$) that is perpendicular to the line that connects $s$ and
$t$ (see Figure \ref{fig-overview}, where the case of $d=3$ and a
2-dimensional ``hypercube'' are shown). To distribute the flow to this cube we divide it into layers  (see Figure \ref{fig-layers}). Layer 0 consists
of the cell $C(s)$ itself, the first layer consists of all cells
adjacent to $C(s)$, and so on. 
Each side of Layer $i$ consists of 
\ba
l_i = (2i+1) 
\ea
cells. For the 3-dimensional cube, the 
number $z_i$ of grid cells in Layer $i$, $i \geq 1$,  is given as 
\ba
z_i = \underbrace{6 \cdot (2i-1)^2}_{\text{interior cells of the faces}} + 
\underbrace{12 \cdot (2i-1)}_{\text{cells along the edges (excluding corners)}}
+ 
\underbrace{8}_{\text{corner cells}} \;\; = \;\; 24 i^2 + 2
\ea

All in all we consider 
\ba
a = \left\lfloor h/ (2
  g\sqrt{d})\right\rfloor \leq \left\lfloor h/ (2
  (g-1)\sqrt{d})\right\rfloor
\ea
layers, so that the final layer has
diameter just a bit smaller than the bottleneck $h$. 
 We now distribute the flow stepwise through all
layers, starting with unit flow in Layer $0$. 
To send the flow from Layer $i-1$ to Layer $i$ we use two phases, see
Figure \ref{fig-step3} for details. In the ``expansion phase'' 3a(i) we transmit the flow from Layer $i-1$ to all adjacent cells in
Layer $i$. In the ``redistribution phase'' 3b(i) we then redistribute the flow in Layer $i$ to
achieve that it is uniformly distributed in Layer $i$. In all phases,
the aim is to use as many edges as possible. \\

{\em Expansion phase 3a(i)}. We
can lower bound the number of edges between Layer $i-1$ and Layer $i$ 
by $z_{i-1} \nmin^2$: each of the $z_{i-1}$ cells in Layer $i-1$ is
adjacent to at least one of the cells in Layer $i$, and each cell
contains at least $\nmin$ points. Consequently, we can upper bound the
contribution of the edges in the expansion phase  3a(i) to the
resistance by

\ba
r_{3a(i)} \leq z_{i-1} \nmin^2 \cdot \left( \frac{1}{z_{i-1} \nmin^2} \right)^2 =
\frac{1}{z_{i-1} \nmin^2}.\\
\ea

{\em Redistribution phase 3b(i)}. %
We make a crude upper bound for the redistribution phase. In this phase
we have to move some part of the flow from each cell to its neighboring
cells. For simplicity we bound this by assuming that for each cell, we had to move
{\em all} its flow to neighboring cells. By a similar argument as for Step 3a(i), the contribution of the
redistribution step can be bounded by

\ba
r_{3b(i)} \leq z_i \nmin^2 \cdot \left( \frac{1}{z_i \nmin^2} \right)^2 =
\frac{1}{z_i \nmin^2}.
\ea

{\em All of Step 3.} 
All in all we have $a$ layers. Thus the overall contribution of
Step 3 to the resistance can be bounded by 

\banum
r_3 & \;= \;\sum_{i=1}^{a} r_{3a(i)} + r_{3b(i)}  
\; \leq \;
\frac{2}{\nmin^2}\sum_{i=1}^{a}  \frac{1}{z_{i-1}} 
\; \leq \;\frac{2}{\nmin^2}
\left(  1 + \frac{1}{24}  \sum_{i=1}^{a-1} \frac{1}{i^2} 
\right) 
\; \leq \; \frac{3}{\nmin^2}
\eanum
To see the last inequality, note that the sum $\sum_{i=1}^{a-1} 1/i^2$ is a
partial sum of the over-harmonic series that converges to a constant
smaller than 2. \\ 

{\bf Step 4:} Now we transfer all flow in ``parallel cell paths'' from
$H(s)$ to $H(t)$. We have $(2a+1)^3$ parallel rows of cells going from
$H(s)$ to $H(t)$, each of them contains $d(s,t) / g$ cells. Thus all
in all we traverse $(2a+1)^3 \nmin^2 d(s,t) /g $ edges, and each edge
carries flow $1 / ((2a+1)^3 \nmin^2 )$. Thus step 4 contributes 

\ba
r_4 \leq 
(2a+1)^3 \nmin^2 \frac{ d(s,t)}{g}
\cdot \left(  \frac{1}{(2a+1)^3 \nmin^2}    \right)^2 
\;=\; 
\frac{d(s,t)}{g(2a+1)^3\nmin^2 } \\
\ea

{\bf Step 5} is completely analogous to steps 2 and 3, with the analogous 
contribution $r_5 = \frac{1}{d_t  \nmin} + r_3$. \\

{\bf Step 6} is completely analogous to step 1 with overall
contribution of $r_6 = 1 / d_t$. \\

{\bf Summing up the general case $d>3$.} All these contributions leads to the following
overall bound on the resistance in case $d>3$: 

\ba
R_{st} 
& \leq 
\frac{1}{d_s} + \frac{1}{d_t} + 
\left(
\frac{1}{d_s} + \frac{1}{d_t}
\right)
\frac{2}{\nmin}
+ 
\frac{1}{\nmin^2} 
\left(3   + \frac{d(s,t)}{g(2a+1)^3 } + 2Q\right)\\
\ea
with $a$ and $Q$ as defined in Proposition
\ref{prop-resistance-fixed-graph}. This is the result stated in the
proposition for case $d>3$. \\

Note that as spelled out above, the proof works whenever the dimension
of the space satisfies $d > 3$. In particular, note that even if $d$
is large, we only use a 3-dimensional ``hypercube'' in Step 3. It is
sufficient to give the rate we need, and carrying out the construction for
higher-dimensional hypercube (in particular Step 3b) is a pain that we
wanted to avoid.\\

{\bf The special case $d=3$.} In this case, everything works very similar to above, except that we we only
use a 2-dimensional ``hypercube'' (this is what we always
show in the figures). The only place in the proof where
this really makes a difference is in Step 3. The number $z_i$ of grid
cells in Layer $i$ is given as $z_i = 8i$. Consequently, instead of obtaining an
over-harmonic sum in $r_3$ we obtain a harmonic sum. Using the
well-known fact that $\sum_{i=1}^a 1/i \leq \log(a) + 1$ we obtain  

\ba
r_3 \leq 
\frac{2}{\nmin^2}
\left(  1 + \frac{1}{8}  \sum_{i=1}^{a-1} \frac{1}{i} 
\right) 
\; \leq \; 
\frac{2}{\nmin^2}
\left(  2 + \log(a)
\right) 
\ea

In Step 4 we just have to replace the terms $(2a+1)^3$ by
$(2a+1)^2$. 
This leads to the result in Proposition
\ref{prop-resistance-fixed-graph}.  \\

{\bf The special case $d=2$.} Here our ``hypercube'' only consists of
a ``pillar'' of $2a+1$ cells. The fundamental difference to higher
dimensions is that in Step 3, the flow does not have so much ``space'' to be
distributed. Essentially, we have to distribute all unit flow through
a ``pillar'', which results in  contributions 

\ba
& r_3 \leq \frac{2a+1}{\nmin^2}\\
& r_4 \leq \frac{d(s,t)}{g} \frac{1}{(2a+1) \nmin^2}\\
\ea

This concludes the proof of Proposition
\ref{prop-resistance-fixed-graph}. 
\hfill\smiley\\

Let us make a couple of technical remarks about this proof. 
For the ease of presentation we simplified the proof in a couple of
respects.  \\

Strictly speaking, we do not need to distribute the whole unit flow to
the outmost Layer $a$. The reason is that in each layer, a fraction of the
flow already ``branches off'' in direction of $t$. We simply ignore this
leaving flow when bounding the flow in Step 3, our construction leads
to an upper bound. It is not difficult to take the outbound flow
into account, but it does not change the order of magnitude of the final result. So for the
ease of presentation we drop this additional complication and stick to
our rough upper bound.\\

When we consider Steps 2 and 3
together, it turns out that we might have introduced some loops in the
flow. To construct a proper flow, we can simply remove these
loops. This would then just reduce the
contribution of Steps 2 and 3, so that our current estimate is an overestimation of the whole resistance. \\

The proof as it is spelled out above considers the case where $s$ and
$t$ are connected by a straight line. It can be generalized to
the case where they are connected by a piecewise linear path. This
does not change the result by more than constants, but adds some technicality at
the corners of the paths. \\

The construction of the flow only works if the bottleneck of $\Xcal$
is not smaller than the diameter of one grid cell, if $s$ and $t$ are at
least a couple of grid cells apart from each other, and if $s$ and $t$
are not too close to the boundary of $\Xcal$. We took care of
these conditions in Part 3 of the definition of a valid grid. \\

\subsection{Proof of the Theorems \ref{th-main-eps} and \ref{th-main-knn}}

First of all, note that by Rayleigh's principle  (cf. Corollary 7 in
  Section IX.2 of \citealp{Bollobas98}) the
effective resistance between vertices cannot decrease if we delete
edges from the graph. Given a sample from the underlying density
$p$, a random geometric graph based on this sample, and some valid
region $\Xcal$, we first delete all points that are not in
$\Xcal$. Then we consider the remaining geometric graph. The effective
resistances on this graph are upper bounds on the resistances of the
original graph. Then we conclude the proofs with the following
arguments: \\

{\bf Proof of Theorem \ref{th-main-eps}.}  The lower bound on the deviation follows
immediately from Proposition \ref{th-lower}. The upper bound is a
consequence of Proposition \ref{prop-resistance-fixed-graph} and well
known properties of random geometric graphs (summarized in the
appendix). In particular,
note that we can choose the grid width $g := {\eps} / ({2\sqrt{d-1}})$
to obtain a valid grid. The quantity $\nmin$ can be bounded as stated
in Proposition \ref{prop-counting} and is of order $n \eps^d$, the degrees behave as described
in Proposition \ref{prop-degrees-eps} and are also of order $n \eps^d$ (we use $\delta = 1/2$ in these
results for simplicity). The quantity $a$ in Proposition
\ref{prop-resistance-fixed-graph} is of the order $1/\eps$, and $Q$ can be bounded by $Q = \eps
/ g$ and by the choice of $g$ is indeed a constant. Plugging all these
results together leads to the final statement of the theorem. 
\hfill\smiley\\

{\bf Proof of Theorem \ref{th-main-knn}.} 
 This proof is analogous to
the $\eps$-graph. As grid width $g$ we choose 
$g=\rkmin / (2 \sqrt{d-1})$ where $\rkmin$ is the minimal
$k$-nearest neighbor distance  (note that this works for
both the symmetric and the mutual $\knn$-graph). 
Exploiting Propositions 
\ref{prop-counting} and
\ref{prop-degrees-knn} we can see that $\rkmin$ and $\rkmax$ are of
order $(k/n)^{1/d}$, the degrees and $\nmin$ are of
order $k$,  $a$ is of the order $(n / k)^{1/d}$ and $Q$ a constant. 
Now the statements of the
theorem follow from Proposition
\ref{prop-resistance-fixed-graph}. 
\hfill\smiley

\section{Proofs for the spectral approach} \label{sec-proofs-spectral}
\subsection{Proof of the key propositions \ref{prop-commute} and
  \ref{prop-approx}}\label{sec-proof-basic}

In this section we prove the general formulas to compute and
approximate the hitting times. \\

{\bf Proof of Proposition \ref{prop-commute}: }
For the hitting time formula, let $u_1,\ldots,u_n$ be an orthonormal
set of eigenvectors of $\lsym$ corresponding to the eigenvalues
$\lambda_1,\ldots,\lambda_n$.  Let $u_{ij}$ denote the $j$-th entry of
$u_i$.  According to \citet{Lovasz93} the hitting time is given by
\[H_{ij}=\vol(G) \sum_{k=2}^n\frac{1}{1-\lambda_k}\left(\frac{u_{kj}^2}{d_j}-\frac{u_{ki}u_{kj}}{\sqrt{d_id_j}}\right).\]
A straightforward calculation using the spectral representation  of $\lsym$ yields 
\begin{align*}
H_{ij}%
&=\vol(G)\ip{\tfrac{1}{\sqrt{d_j}}e_j}{\sum_{k=2}^n\frac{1}{1-\lambda_k}\ip{u_k}{\tfrac{1}{\sqrt{d_j}}e_j-\tfrac{1}{\sqrt{d_i}}e_i}\;u_k}\\
&=\vol(G)\ip{\tfrac{1}{\sqrt{d_j}}e_j}{\lsym\pinv\left(\tfrac{1}{\sqrt{d_j}}e_j-\tfrac{1}{\sqrt{d_i}}e_i\right)}. 
\end{align*}
The result for the commute time follows from the one for the hitting
times. 
\mbox{}\hfill \ulesqed\\[1cm]

In order to prove Proposition \ref{prop-approx} we first state a small lemma. 
For convenience,  we set $A = D^{-1/2}WD^{-1/2}$ and 
$u_i = e_i /
\sqrt{d_i}$.  Furthermore, we are going to denote the projection on
the eigenspace of the $j$-the eigenvalue $\lambda_j$ of 
$A$ 
by $P_j$.

\begin{lemma}[Pseudo-inverse $\lsym\pinv$] \label{prop-lsympinv}
The pseudo-inverse of the symmetric Laplacian satisfies
\ba
\lsym\pinv &= I - P_1 + M\ea
where $I$ denotes the identity matrix and $M$ is given as follows: 
\banum \label{eq-m1}
M \;\; = \;\;\sum_{k=1}^\infty (A-P_1)^k  \;\;= \;\;\sum_{r=2}^n \frac{\lambda_r}{1 - \lambda_r} P_r
\eanum
Furthermore, for all $u, v \in \R^n$ we have 
\banum \label{eq-m2}
| \inner{u , M v } |
\;\leq\;
\frac{1}{1 - \lambda_2} \cdot
\| (A-P_1)u \| \;\cdot
\; \|(A-P_1) v \| + 
|\inner{ u \;,\; (A-P_1) v }|
\eanum

\end{lemma}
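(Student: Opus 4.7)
\textbf{Proof plan for Lemma \ref{prop-lsympinv}.}

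\emph{Step 1 (spectral identity for $\lsym\pinv$).} Since $\lsym = I - A$, every eigenvector of $A$ with eigenvalue $\lambda_r$ is an eigenvector of $\lsym$ with eigenvalue $1-\lambda_r$. Connectedness gives $\lambda_1 = 1$ with a one-dimensional eigenspace, so $\ker \lsym$ is exactly the range of $P_1$. By definition of the Moore--Penrose inverse,
\[
\lsym\pinv \;=\; \sum_{r=2}^n \frac{1}{1-\lambda_r}\,P_r.
\]
Using $I = \sum_{r=1}^n P_r$ and the stated form of $M$, I just need to verify coefficient-by-coefficient that $1 + \lambda_r/(1-\lambda_r) = 1/(1-\lambda_r)$ for $r\ge 2$, together with the vanishing contribution of the $r=1$ term in $I - P_1$. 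This yields $\lsym\pinv = I - P_1 + M$.

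\emph{Step 2 (the two series for $M$).} The operator $A - P_1$ has the spectral decomposition $\sum_{r=2}^n \lambda_r P_r$; its operator norm equals $\max\{\lambda_2, |\lambda_n|\}$, which is strictly less than $1$ because the graph is connected and non-bipartite. Hence the Neumann series $\sum_{k=1}^\infty (A-P_1)^k$ converges absolutely. Expanding each power spectrally and summing the geometric series $\sum_{k=1}^\infty \lambda_r^k = \lambda_r/(1-\lambda_r)$ (valid since $|\lambda_r|<1$ for $r\ge 2$) gives the matrix identity in \eqref{eq-m1}.

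\emph{Step 3 (the inner-product bound).} The algebraic identity
\[
\frac{\lambda_r}{1-\lambda_r} \;=\; \lambda_r + \frac{\lambda_r^2}{1-\lambda_r}
\]
splits $\inner{u,Mv}$ into two pieces. The first piece is exactly
\[
\sum_{r=2}^n \lambda_r \inner{P_r u, P_r v} \;=\; \inner{u,(A-P_1)v},
\]
by orthogonality of the eigenprojections. For the second piece, the coefficients $\lambda_r^2/(1-\lambda_r)$ are nonnegative (since $\lambda_r < 1$ for $r\ge 2$), so
\[
\left|\sum_{r=2}^n \frac{\lambda_r^2}{1-\lambda_r}\inner{P_r u,P_r v}\right|
\;\le\; \frac{1}{1-\lambda_2}\sum_{r=2}^n \lambda_r^2\,\|P_r u\|\,\|P_r v\|,
\]
using $1-\lambda_r \ge 1-\lambda_2$ and $|\inner{P_r u,P_r v}|\le \|P_r u\|\,\|P_r v\|$. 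Cauchy--Schwarz on sequences then gives
\[
\sum_{r\ge 2}\lambda_r^2\|P_r u\|\|P_r v\|
\;\le\;\Bigl(\sum_{r\ge 2}\lambda_r^2\|P_r u\|^2\Bigr)^{1/2}\Bigl(\sum_{r\ge 2}\lambda_r^2\|P_r v\|^2\Bigr)^{1/2}
\;=\;\|(A-P_1)u\|\,\|(A-P_1)v\|,
\]
where the final equality is Parseval applied to the orthogonal decomposition $(A-P_1)w = \sum_{r\ge 2}\lambda_r P_r w$. Combining the two pieces via the triangle inequality yields \eqref{eq-m2}.

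\emph{Expected difficulties.} Nothing really hard: the one point to watch is that the bound involves $1/(1-\lambda_2)$ rather than $1/(1-\max\{\lambda_2,|\lambda_n|\})$. This works only because, after peeling off the zeroth-order term $\lambda_r$, the remaining factor $\lambda_r^2/(1-\lambda_r)$ is nonnegative, so the triangle inequality in the spectral sum is tight enough to replace $1-\lambda_r$ by the single worst value $1-\lambda_2$ on the positive side $r\ge 2$ while the sign of $\lambda_n$ no longer matters. Keeping track of absolute values carefully in Step 3 is the only delicate bit.
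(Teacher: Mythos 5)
Your proof is correct and follows essentially the same route as the paper's: both peel off the linear term (your scalar identity $\tfrac{\lambda_r}{1-\lambda_r}=\lambda_r+\tfrac{\lambda_r^2}{1-\lambda_r}$ is exactly the coordinate form of the paper's operator identity $M=(A-P_1)+\bigl(\sum_{r\ge 2}\tfrac{1}{1-\lambda_r}P_r\bigr)(A-P_1)^2$), and both then bound the quadratic remainder by Cauchy--Schwarz together with the observation that the relevant spectral norm is $1/(1-\lambda_2)$ because the weights $\tfrac{1}{1-\lambda_r}$ are all positive on $r\ge 2$. Your remark about why $1/(1-\lambda_2)$ (and not $1/(1-\max\{\lambda_2,|\lambda_n|\})$) suffices after this split is exactly the point that makes the paper's norm computation $\|\sum_{r\ge2}\tfrac{1}{1-\lambda_r}P_r\|_2=1/(1-\lambda_2)$ work.
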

\begin{proof}
The projection onto the
null space of $\lsym$ is given by $P_1 = \sqrt{d}\sqrt{d}^T / {\sum_{i=1}
  d_i} $ where $\sqrt{d}=(\sqrt{d_1},\ldots,\sqrt{d_n})^T$. 
As the graph is not bipartite, $\lambda_n
> -1$. Thus the 
pseudoinverse of $\Lsym$ can be computed as
\[  \Lsym\pinv = (I-A)\pinv = (I - A + P_1)^{-1} - P_1 = \sum_{k=0}^\infty (A-P_1)^k - P_1.\]

Thus 
\ba
M & : = \sum_{k=1}^\infty (A-P_1)^k 
\;\;=\;\; \sum_{k=0}^\infty (A - P_1)^k (A-P_1) \\
& = ( \sum_{k=0}^\infty 
 \sum_{r=2}^n \lambda_r^k P_r )
( \sum_{r=2}^n \lambda_r P_r )
\;\; = \;\; 
(\sum_{r=2}^n \frac{1}{1 - \lambda_r} P_r ) ( \sum_{r=2}^n \lambda_r P_r )\\
&  = \sum_{r=2}^n \frac{\lambda_r}{1 - \lambda_r} P_r
\ea
which proves Equation \eqref{eq-m1}. By a little detour, we can also see 
\ba
M %
\;\;=\;\; 
\sum_{k=0}^\infty (A-P_1)^k (A- P_1)^2+ (A - P_1) 
\;\;=\;\;
 (\sum_{r=2}^n \frac{1}{1 - \lambda_r} P_r ) (A- P_1)^2+ (A - P_1).\\
\ea
Exploiting that
$(A-P_1)$ commutes with all $P_r$ gives 
\ba 
\inner{u , M v } 
=  \big\langle 
(A-P_1) u
\;, 
\; 
(\sum_{r=2}^n \frac{1}{1 - \lambda_r} P_r ) (A- P_1) v 
\big\rangle
+ 
\inner{u ,  (A - P_1) v }. 
\ea

Applying the Cauchy-Schwarz
inequality and the fact $\|\sum_{r=2}^n \frac{1}{1 -
  \lambda_r} P_r\|_2 = 1 / (1 - \lambda_2)$  leads to the desired
statement.

\end{proof}

{\bf Proof of Proposition \ref{prop-approx}. }
This proposition now follows easily from the Lemma above. Observe that
\ba
& \inner{ u_i, A u_j }
= \frac{w_{ij}}{d_i d_j} \leq \frac{\wmax}{\dmin^2} \\
& \|A u_i \|^2
= \sum_{k=1}^n \frac{w_{ik}^2}{d_i^2 {d_k}}
\leq \frac{\wmax}{\dmin d_i^2} {\sum_{k} w_{ik}}
= \frac{\wmax}{\dmin} \frac{1}{d_i} 
\leq 
\frac{\wmax}{\dmin^2} \\
& \| A (u_i - u_j)\|^2  %
\leq 
\frac{\wmax}{\dmin} \left(\frac{1}{d_i} + \frac{1}{d_j}\right)
\leq 
\frac{2\wmax}{\dmin^2}\\
 & 
\ea
Exploiting
that $P_1(u_i- u_j) = 0$ we get for the hitting  time 
\banum
\left|
\frac{1}{\vol(G)} H_{ij} 
- \frac{1}{d_j}
\right|& = 
| \inner{ u_j , M (u_j - u_i) } | \nonumber\\
& \leq \frac{1}{1 - \lambda_2}
\|A u_j\| \cdot \| A (u_j - u_i)\| + 
| \langle u_j, A (u_j - u_i) \rangle | \nonumber\\
& \leq 
 \frac{1}{1 - \lambda_2}
\frac{\wmax}{\dmin}
\left( 
\frac{1}{\sqrt{d_j}} \sqrt{\frac{1}{d_i} + \frac{1}{d_j}}
\right)
+ \frac{w_{ij}}{d_i d_j} + \frac{w_{jj}}{d_j^2} \label{eq-genau}\\
& \leq \;\;
2 \frac{\wmax}{\dmin^2} \left( \frac{1}{1 - \lambda_2} + 1 \right) .
\eanum
For the commute time, we note that
\begin{align*} 
\left| \frac{1}{\vol(G)} C_{ij} 
- \Big(\frac{1}{d_i} + \frac{1}{d_j}\Big)
\right| & = \big| \inner{u_i - u_j, M (u_i-u_j)}\big| \\
        & \leq \frac{1}{1-\lambda_2}\norm{A(u_i-u_j)}^2 + \big| \inner{u_i - u_j, A (u_i-u_j)}\big| \\
        & \leq \frac{\wmax}{\dmin} \bigg(\frac{1}{1-\lambda_2} + 2\bigg)\bigg(\frac{1}{d_i} + \frac{1}{d_j}\bigg)
\end{align*}
\mbox{}\hfill\ulesqed\\

We would like to point out that the key to achieving this bound
is not to give in to the temptation to manipulate Eq. \eqref{eq-m1}
directly, but to bound Eq. \eqref{eq-m2}. The reason is that we can
compute terms of the form $\inner{u_i , A u_j}$ and related terms
explicitly, whereas we do not have any explicit formulas for the
eigenvalues and eigenvectors in \eqref{eq-m1}. 

\subsection{The spectral gap in random geometric
  graphs} \label{sec-technical}

As we have seen above, a key ingredient in the approximation result
for hitting times and commute distances is the spectral gap. In this
section we show how the spectral gap can be lower bounded for random
geometric graphs. We first consider the case of a fixed geometric
graph. From this general result we then derive
the results for the special cases of the $\eps$-graph and the
$\knn$-graphs. All graphs considered in this section are unweighted and
undirected. 
We follow the strategy in \citet{BoydEtal05} where the spectral gap is
bounded by means of the Poincar{\'e} inequality (see \citet{DiaStr91}
for a general introduction to this technique; see 
\citet{CooFri09} for a related approach in simpler
settings). The outline of this
technique is as follows: for each pair $(X,Y)$ of vertices in the graph we
need to select a path $\gamma_{XY}$ in the graph
that connects these two vertices. In our case, this selection is 
made in a random manner. Then we
need to consider all edges in the graph and investigate how many of the
paths $\gamma_{XY}$, on
average, traverse this edge. We need to control the maximum of this
``load'' over all edges. The higher this load is, the more pronounced
is the bottleneck in the graph, and the smaller the
spectral gap is. Formally, the spectral gap
is related to the maximum average load $b$ as follows.

\begin{proposition}[Spectral gap, 
  \citealp{DiaStr91}] \label{prop-boyd} \sloppy
Consider a finite, connected, undirected, unweighted graph
  that is not bipartite. For each pair of vertices $X \!\!\neq\!\!  Y$ let
  $P_{XY}$ be a probability distribution over all paths that connect
  $X$ and $Y$ and have
  uneven length. Let $(\gamma_{XY})_{X,Y}$ be a
  family of paths independently drawn from the respective
  $P_{XY}$. Define 
$b := \max_{\{e \text{ edge}\}} \mathbb{E} | \{ \gamma_{XY} \condon e \in
\gamma_{XY}\}|$. Denote by $|\gamma_{\max}|$
  the maximum path length (where the length of the path is the number
  of edges in the path).  
Then the spectral gap in the
  graph is bounded as follows: 
\banum
1 - \lambda_2 \geq
\frac{\vol(G)}{d_{\max}^2 |\gamma_{\max}| b}  
&& \text{ and } &&
1 - |\lambda_{n}|
\geq \frac{2}{ d_{\max} |\gamma_{\max}| b}.  \label{eq-gap}
\eanum
\end{proposition}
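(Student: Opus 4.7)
My plan is to apply the randomized canonical-path (Poincar\'e) technique of Diaconis and Stroock. The starting point is the variational characterization of the spectral gap for the reversible chain $P=D^{-1}W$ with stationary distribution $\pi_i = d_i/\vol(G)$:
\[
1-\lambda_2 \;=\; \inf_{f:\,\sum_i \pi_i f(i)=0,\, f\not\equiv 0}\;\frac{\mathcal{E}(f,f)}{\sum_i \pi_i f(i)^2},
\qquad
\mathcal{E}(f,f)\;=\;\frac{1}{2\vol(G)}\sum_{(x,y)\in E} w_{xy}(f(y)-f(x))^2,
\]
and analogously $1+\lambda_n$ is governed by a ``dual'' Dirichlet form $\widetilde{\mathcal{E}}(f,f)$ in which $(f(y)-f(x))^2$ is replaced by $(f(y)+f(x))^2$. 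The strategy is to upper-bound $\sum_i\pi_i f(i)^2$ (or the variance) by the appropriate Dirichlet form, using the random paths $\gamma_{XY}$ as a ``transport plan'' between the point masses at $X$ and $Y$.

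For $1-\lambda_2$, I would rewrite $\sum_i\pi_i f(i)^2 = \Var_\pi(f) = \tfrac12\sum_{X,Y}\pi_X\pi_Y(f(Y)-f(X))^2$ for mean-zero $f$, telescope $f(Y)-f(X) = \sum_{e\in\gamma_{XY}}\nabla_e f$ along a realization of the random path, and apply Cauchy--Schwarz to get $(f(Y)-f(X))^2 \leq |\gamma_{\max}|\sum_{e\in\gamma_{XY}}(\nabla_e f)^2$. Taking expectation over $\gamma_{XY}$ and swapping the order of summation between pairs $(X,Y)$ and edges $e$, the contribution of each edge is bounded by
\[
\sum_{X,Y}\pi_X\pi_Y\,\mathbb{P}(e\in\gamma_{XY}) \;\leq\; \pi_{\max}^2\cdot\mathbb{E}\bigl|\{(X,Y):e\in\gamma_{XY}\}\bigr| \;\leq\; \frac{d_{\max}^2}{\vol(G)^2}\cdot b.
\]
Combining with $\sum_e(\nabla_e f)^2 = \vol(G)\,\mathcal{E}(f,f)$ (in the unweighted case, trivially) yields $\Var_\pi(f) \leq \frac{|\gamma_{\max}|\,b\,d_{\max}^2}{\vol(G)}\,\mathcal{E}(f,f)$, which is the first claim up to a harmless constant.

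For $1-|\lambda_n|$ the same recipe is applied to $\widetilde{\mathcal{E}}$, but the telescoping identity is replaced by the alternating sum $f(Y)+f(X) = \sum_{k=0}^{|\gamma|-1}(-1)^k(f(v_k)+f(v_{k+1}))$, which closes up only when $|\gamma_{XY}|$ is odd---this is exactly the role of the ``uneven length'' hypothesis in the statement. The weaker prefactor $2/(d_{\max}|\gamma_{\max}| b)$ there (versus $\vol(G)/(d_{\max}^2|\gamma_{\max}| b)$ for $\lambda_2$) appears because the variational formula for $1+\lambda_n$ bounds $\sum_i\pi_i f(i)^2$ directly, so only a single $\pi_{\max}$ factor is paid in the swap, the second sum collapsing via $\sum_X\pi_X = 1$. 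The main obstacle I anticipate is entirely bookkeeping: tracking the factors $\pi_X\pi_Y$, the normalizations of $\mathcal{E}$ and $\widetilde{\mathcal{E}}$, and ordered-versus-unordered pair counts in order to land on exactly the constants displayed in \eqref{eq-gap}. The conceptual content is mild---the swap only uses the expectation $\mathbb{E}|\{(X,Y):e\in\gamma_{XY}\}|\leq b$, so independence of the random paths is not strictly required beyond this marginal control.
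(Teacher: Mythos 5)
The paper does not actually prove this proposition; it cites \citet{DiaStr91} (Corollaries 1 and 2) for the deterministic-path version and \citet{BoydEtal05} for the routine extension to randomly chosen paths. Your outline is exactly that canonical-path Poincar\'e argument: variational characterization, telescope or alternating-telescope along a realization of the random path, Cauchy--Schwarz with prefactor $|\gamma_{\max}|$, take expectations, swap the order of summation, and bound the per-edge congestion by $b$. You also correctly identify why the odd-length hypothesis enters only the $1-|\lambda_n|$ bound. The one place where your account is slightly off is your explanation that the $\lambda_n$ case ``pays only a single $\pi_{\max}$ factor, the second sum collapsing via $\sum_X\pi_X=1$.'' With pair paths $\gamma_{XY}$ over $X\neq Y$, as in the hypotheses of the proposition, the natural inequality $\sum_X\pi_X f(X)^2\leq\tfrac12\sum_{X,Y}\pi_X\pi_Y\big(f(X)+f(Y)\big)^2$ still carries \emph{two} $\pi$-weights through the edge swap, so one would land on $d_{\max}^2$ as in the $\lambda_2$ bound; the clean single-$d_{\max}$ prefactor displayed in \eqref{eq-gap} is what you get from the \emph{loop}-based route of Diaconis--Stroock's Corollary~2, where the basic identity is $2f(X)=\sum_{e\in\sigma_X}\pm\big(f(e^+)+f(e^-)\big)$ for an odd-length loop $\sigma_X$ at $X$, and the sum genuinely runs over a single vertex index. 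To match the stated constant you would either need to pass from pair paths to loops (e.g.\ by routing two odd paths sharing an endpoint), or accept a worse constant; as you note, this is bookkeeping and does not affect the asymptotics for which the proposition is used. The conceptual structure is correct and coincides with the references on which the paper relies.
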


For deterministic sets
$\Gamma$, this proposition has been derived as Corollary 1 and 2 in
\citet{DiaStr91}. The adaptation for random selection of paths 
is straightforward, see \citet{BoydEtal05}.\\

The key to tight bounds based on Proposition \ref{prop-boyd} 
is a clever choice of the paths. We need to make sure that we
distribute the paths as ``uniformly'' as possible over the whole graph. 
This is relatively easy to achieve in the special situation where
$\Xcal$ is a torus with uniform distribution (as studied in
\citealp{BoydEtal05,CooFri09}) because of symmetry arguments and the
absence of boundary effects. However, in our setting with general
$\Xcal$ and $p$ we have to
invest quite some work. 

\subsubsection{Fixed geometric graph on the unit cube in $\R^d$}

\begin{figure}
\bc
\includegraphics[width=0.3\textwidth]{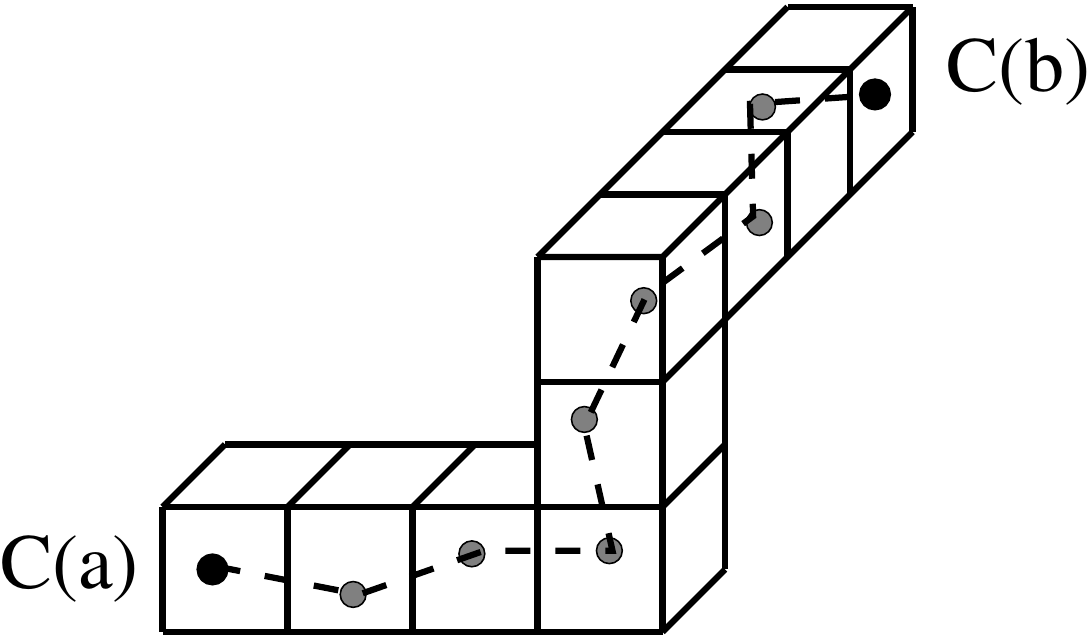}
\ec
\caption{Canonical path between $a$ and $b$. We first consider a
  ``Hamming path of cells'' between $a$ and $b$. In all intermediate
  cells, we randomly pick a point.  }
\label{fig-spectral}
\end{figure}

We first treat the special case of a fixed geometric graph with
vertices in the unit cube $[0,1]^d$ in $\R^d$. 
Consider a grid on the cube with grid width $g$. For now we assume
that the grid cells are so small that points in neighboring cells are
always connected in the geometric graph, and so large that each cell
contains a minimal number of data points. We will specify the exact 
value of $g$ later. In the following, cells of the grid are 
identified with their center points. \\

{\em Construction of the paths. } 
Assume we want to construct a path between two vertices $a$ and $b$
that correspond to the points $a = (a_1, \hd ,
a_d), \;b = (b_1, \hd, b_d) \in [0,1]^d$. Let $C(a)$ and $C(b)$ denote the
grid cells containing $a$ and $b$, denote the centers of these cells
by $c(a) = (c(a)_1, \hd, c(a)_d)$ and 
$c(b) = (c(b)_1, \hd,
c(b)_d)$. 
We first construct a deterministic ``cell path'' between
the cells $C(a)$ and $C(b)$ (see Figure \ref{fig-spectral}. This path simply follows a Hamming
path: starting at cell $C(a)$ we change
the first coordinate until we have reached $c(b)_1$. For example, if $c(a)_1 <
c(b)_1$  we
traverse the cells 
\ba
\big(c(a)_1, c(a)_2, \hd, c(a)_d \big) 
\leadsto \big (c(a)_1 + g, c(a)_2, \hd, c(a)_d \big)  
\leadsto
\hd \leadsto
\big (c(b)_1, c(a)_2, \hd, c(a)_d \big). 
\ea
Then we move along the second
coordinate from $c(a)_2$ until we have reached $c(b)_2$, that is we
traverse the cells $(c(b)_1, *, c(a)_3, \hd, c(a)_d)$. And so on. 
This gives a deterministic way of traversing adjacent cells from
$C(a)$ to $C(b)$. 
Now we transform this deterministic ``cell path'' to a random path on
the graph.  
In the special cases where $a$ and $b$ are in the same
cell or in neighboring cells, we directly connect $a$ and $b$ by an edge. In the general case, 
we select one data point uniformly at
random in each
of the interior cells on the cell path. Then we connect the selected
points to form a path.  Note that we can always force the paths to
have uneven lengths by adding one more point somewhere in between. 
\begin{proposition}[Path construction is valid] \label{prop-valid-path}
Assume that (1) Each cell of the grid contains at least one data point. 
(2) Data points in the same and in neighboring cells are always connected in the
graph. Then the graph is connected, and the paths constructed above
are paths in the graph. 
\end{proposition}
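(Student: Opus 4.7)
The plan is to verify the two conclusions in order: first show that the constructed object is actually a path in the graph, and then deduce connectedness as a corollary.

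First I would observe that any two consecutive cells in the Hamming cell path are grid neighbors in the sense of Definition \ref{def-valid-grid}. By construction, consecutive cells differ by exactly $g$ in a single coordinate and agree on all others, so they share a full $(d-1)$-dimensional face and therefore touch in at least one edge. This is the one geometric fact that drives everything.

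Next I would fix two vertices $a$ and $b$ and argue case by case. If $C(a) = C(b)$ or if $C(a)$ and $C(b)$ are grid neighbors, then $a$ and $b$ lie in the same or in neighboring cells, so assumption (2) gives an edge between them directly, and the single-edge path is valid (one can then append an extra detour through a point in some adjacent cell, guaranteed to exist by (1), to force uneven length if needed). In the general case, write the cell path as $C(a) = C_0, C_1, \ldots, C_k = C(b)$. By assumption (1), each interior cell $C_i$ with $1 \le i \le k-1$ contains at least one data point, so the random selection is well defined and produces vertices $x_1, \ldots, x_{k-1}$ of the graph. Setting $x_0 = a$ and $x_k = b$, each consecutive pair $(x_i, x_{i+1})$ lies in the same or neighboring cells, hence is joined by an edge by assumption (2). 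Thus $a = x_0, x_1, \ldots, x_{k-1}, x_k = b$ is a walk in the graph, and parity can be fixed by the extra-vertex trick mentioned in the construction, yielding the required odd-length path.

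Connectedness of the graph now follows immediately: since every pair of vertices admits such a path, the graph is connected. I do not expect a real obstacle here; the statement is essentially a bookkeeping check that the two assumptions on the grid were designed precisely so that the Hamming-cell construction is realizable as an actual walk. The only mild subtlety is making sure the endpoint cases (same cell, neighboring cells, endpoints of the generic path) are all handled by the ``neighboring cells are connected'' hypothesis rather than by the existence assumption (1), which is only needed for the interior cells of the path.
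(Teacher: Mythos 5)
Your argument is correct and is exactly the content behind the paper's terse ``Obvious, by construction of the paths'': consecutive cells in the Hamming cell path are neighbors, interior cells are nonempty by assumption (1), and assumption (2) supplies the edges, so the selected vertices form a walk, and connectedness follows since every vertex pair is so joined. You have simply spelled out the bookkeeping that the paper leaves implicit.
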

\begin{proof}
Obvious, by construction of the paths.
\end{proof}

In order to apply Proposition \ref{prop-boyd} we now need to compute the maximal average load of all paths. 

\begin{proposition}[Maximum average load for fixed graph on
  cube] \label{prop-load-cube} \sloppy
Consider a geometric graph on $[0,1]^d$ and the 
grid of width $g$ on $[0,1]^d$. Denote by $N_{\min}$ and $N_{\max}$
the minimal and maximal number of points per grid cell. Construct a
random set of paths as described above. 
\begin{enumerate}
\item 
Let $C$ be any fixed cell in the grid. Then there exist at most $d/
g^{d+1}$ pairs of cells $(A, B)$ such that cell paths starting in cell
$A$ and ending in cell $B$ pass through $C$. 

\item If the path construction is valid, then the maximal average load
  is upper bounded by 
\ba
b \leq
1 + \left( \frac{N_{\max}^2}{N_{\min}^2} +
2\frac{N_{\max}}{N_{\min}}
\right)
\frac{d}{g^{d+1}}.
\ea
\end{enumerate}
\end{proposition}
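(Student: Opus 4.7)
The cell path from $A$ to $B$ is a Hamming path that sweeps the coordinates in the fixed order $1, 2, \ldots, d$. I would fix the target cell $C$ with center $(c_1, \ldots, c_d)$ and decompose the event ``the $A$-to-$B$ path passes through $C$'' according to the index $k$ of the coordinate currently being changed when the path visits $C$. For each such $k$, the constraints on $A$ and $B$ read $c(b)_j = c_j$ for $j < k$ (the already-swept coordinates match $B$), $c(a)_j = c_j$ for $j > k$ (the not-yet-swept coordinates match $A$), and $c_k$ lies between $c(a)_k$ and $c(b)_k$. Counting free coordinates produces at most $(1/g)^{k-1} \cdot (1/g)^{d-k} \cdot (1/g)^2 = (1/g)^{d+1}$ compatible cell pairs per $k$; because any cell on a Hamming path is visited during exactly one coordinate sweep, there is no overcount across $k$, and summing over $k = 1, \ldots, d$ gives the announced bound $d/g^{d+1}$.

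\textbf{Part 2: decomposing the load.} Fix a graph edge $e = (u, v)$ with $u \in C_u$ and $v \in C_v$; an edge whose cells are neither equal nor neighbors never appears in any $\gamma_{AB}$ and can be ignored. I would split $\Exp|\{\{A, B\} : e \in \gamma_{AB}\}|$ into three contributions depending on the relation of $\{C_u, C_v\}$ with the endpoint cells $\{C_a, C_b\}$: the \emph{direct} contribution, where $\{a, b\} = \{u, v\}$ and the path is just the edge $e$; the \emph{interior} contribution, where both $C_u$ and $C_v$ lie strictly inside the cell path; and the two symmetric \emph{endpoint} contributions, where exactly one of $\{C_u, C_v\}$ coincides with one of $\{C_a, C_b\}$.

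\textbf{Bounding each piece.} The direct piece contributes at most $1$. For the interior piece, Part 1 bounds by $d/g^{d+1}$ the number of cell pairs $\{C_a, C_b\}$ whose cell path visits $C_u$, and this already controls the count of pairs using the cell-edge $(C_u, C_v)$; each such cell pair offers at most $N_{\max}^2$ choices of $(a, b)$, and conditionally the probability that the independent uniform selections in $C_u$ and $C_v$ pick exactly $u$ and $v$ is at most $1/N_{\min}^2$, which yields $(N_{\max}/N_{\min})^2 \cdot d/g^{d+1}$. For an endpoint piece, say $C_a = C_u$ with $a = u$ forced, only the selection in $C_v$ is still random (probability $\leq 1/N_{\min}$), the number of admissible $C_b$ is again bounded by $d/g^{d+1}$ via Part 1 applied to $C_v$, and each such cell offers at most $N_{\max}$ choices of $b$; this gives $(N_{\max}/N_{\min}) \cdot d/g^{d+1}$. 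The symmetric endpoint case contributes the same amount, and summing all four pieces delivers the claimed $1 + (N_{\max}^2/N_{\min}^2 + 2 N_{\max}/N_{\min}) \cdot d/g^{d+1}$.

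\textbf{Main obstacle.} The delicate part will be the bookkeeping: keeping the three categories disjoint so that the direct configuration is not silently double-counted inside the endpoint cases, and verifying that Part 1 is strong enough to bound the endpoint situation (where one cell is fixed to $C_u$, which is a strictly smaller count than the full enumeration controlled by Part 1 applied to $C_v$, so the bound is loose but valid). Since only an upper bound is required, these mild overcounts are harmless, and the remaining estimates reduce to combining the cell-pair count from Part 1 with the crude per-cell bounds $N_{\min} \leq |C| \leq N_{\max}$.
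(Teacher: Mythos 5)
Your argument is correct and follows essentially the same route as the paper's: Part 1 decomposes the cell-pair count by the sweep index $k$ of the Hamming path and counts free coordinates of $A$ and $B$, and Part 2 splits the average load on an edge by whether its cells are equal, interior on the path, or contain an endpoint, obtaining the same three terms. One small inaccuracy — a corner cell of a Hamming path belongs to two consecutive sweeps, so the sum over $k$ does slightly overcount — is harmless since only an upper bound is needed.
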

\begin{proof} {\em Part 1. }
  We identify cells with their centers. Consider two different grid
  cells $A$ and $B$ with centers $a$ and $b$. By construction, the
  Hamming path between $A$ and $B$ has the corners
\ba
a = &  (a_1, a_2, a_3, \hd, a_d) 
\leadsto 
(b_1, a_2, a_3, \hd, a_d) 
\leadsto 
(b_1, b_2, a_3, \hd, a_d) 
\\ & \leadsto \hd 
\leadsto 
(b_1, b_2, b_3, \hd,  b_{d-1}, a_d) 
\leadsto
(b_1, b_2, b_3, \hd,  b_{d-1}, b_d)  = b. 
\ea
All cells on the path have the form 
$
(b_1, b_2, \hd , b_{l-1}, *, a_{l+1}, \hd, a_d) 
$
where $*$ can take any value between $a_l$ and $b_l$. 
A path can only pass through the fixed cell with center $c$ if
there exists some $l \in \{1, \hd, d\}$ such that 
\ba
(c_1, \hd , c_d) 
= 
(b_1, b_2, \hd , b_{l-1}, *, a_{l+1}, \hd, a_d).  
\ea
That is, there exists some $l \in \{1, \hd, d\}$ such that 
\ba
(I) \;\; 
b_i = c_i 
\text{ for all } i=1, \hd, l-1 &&
\text{ and } && 
(II) \;\; a_i = c_i 
\text{ for all }i=l+1,\hd, d. 
\ea
For the given grid size $g$ there are $1/g$ different cell
centers per dimension. For fixed $l$ there thus exist 
$1/g^{d-l + 1}$ cell centers that satisfy $(I)$ and $1/g^{l}$ cell
centers that satisfy $(II)$. So all in all there are $1/g^{d+1}$ pairs
of cells $A$ and $B$ such that both $(I)$ and $(II)$ are satisfied for
a fixed value of $l$. Adding up the possibilities for all choices of $l
\in \{1, \hd, d\}$  leads to the factor $d$. \\

{\em Part 2. }  
Fix an edge $e$ in the graph and consider its two adjacent vertices
$v_1$ and $v_2$.
If $v_1$ and $v_2$ are in two different cells that are not neighbors to each other,
then by construction none of the paths traverses the edge.
If they are in the same cell, by construction at most one of
the paths can traverse this edge, namely the one
directly connecting the two points.
The interesting case is the one where $v_1$ and $v_2$ lie in two
neighboring grid cells $C$ and $\tilde C$.  \\
If both cells are
``interior'' cells of the path, then by construction each edge
connecting the two cells has equal probability of being selected. As
there are at least $N_{\min}$ points in each cell, there are at least
$N_{\min}^2$ different edges between these cells. Thus each of the
edges between the cells is selected with probability at most
$1/N_{\min}^2$. 
We know by Part 1 that there are at most $d/ {g^{d+1}}$ pairs of
start/end cells. As each cell contains at most $N_{\max}$ points, this
leads to $N_{\max}^2 d / {g^{d+1}}$ different paths passing through
$C$. This is also an upper bound on the number of paths passing
through both $C$ and $\tilde C$. Thus, each edge is selected by at
most 
$ d N_{\max}^2 /( g^{d+1} N_{\min}^2)$ paths. \\
If at least one of the cells is the start cell of the path, then
the corresponding vertex, say $v_1$, is the start point of the
path. If $v_2$ is an intermediate point, then it is selected with
probability at most $1/N_{\min}$ (the case where $v_2$ is an
end point has already been treated at the beginning). Similarly to the
last case, there are at most $N_{\max}d / g^{d+1}$
paths that start in $v_1$ and pass through $\tilde C$. This leads to an average load of
$ d N_{\max}/( g^{d+1} N_{\min})$ on edge $e$. The same holds with the
roles of $v_1$ and $v_2$ exchanged, leading to a factor 2. \\
The overall average load is now the sum of the average loads in the
different cases. 
\end{proof}

\subsubsection{Fixed geometric graph on a domain $\Xcal$ that is homeomorphic to a cube}

Now assume that $\Xcal \subset \R^d$ is a compact subset that is
homeomorphic  to the cube $[0,1]^d$ in the following sense: we assume that
there exists a homeomorphism $ h: \Xcal \to [0,1]^d$ and constants $0 < L_{\min} < L_{\max} <
\infty$ such that for all $x, y \in \Xcal$ we have
\banum \label{eq-homeo}
L_{\min} \| x-y \| \; \leq \; 
\|h(x) - h(y)\| 
\;\leq\;
L_{\max} \| x-y \|. 
\eanum

The general idea is now as follows. Assume we are given a geometric
graph on $X_1, \hd, X_n \in \Xcal$. In order to construct the paths we
first map the points in the cube using $h$. Then we construct the
paths on $h(X_1), \hd, h(X_n) \in [0,1]^d$ as in the last
section. Finally, we map the path back to $\Xcal$.

\begin{proposition}[Maximum average load for fixed graph on general domain] \label{prop-load-general}
\sloppy 
Let $G$ be a geometric graph based on $X_1, \hd, X_n \in
\Xcal$. Assume that there exists some $\tilde g > 0$ such that points
of distance smaller than $\tilde g$ are always connected in the
graph. Consider a mapping $h: \Xcal \to [0,1]^d$ as in
Equation~\eqref{eq-homeo} and a grid of width $g$ on $[0,1]^d$. 
Let $(C_i)_i$ be the cells of the $g$-grid on
  $[0,1]^d$, denote their centers by $c_i$. Let $B_i$ and $B_i'$ be 
  balls in $\Xcal$ with radius $r = g/(2L_{\max})$ and $R = \sqrt{d} \;g
  /L_{\min}$ centered at $h\inv(c_i)$. 
\be
\item These balls satisfy $B_i \subset h\inv(C_i)
  \subset B_i'$. 

\item Denote by $\tilde N_{\min}$ the minimal number of points
  in $B_i$ and $\tilde N_{\max}$ the maximal number of points in
  $B_i'$. Construct paths between the points $h(X_i) \in [0,1]^d$ as
  described in the previous subsection. 
If $\tilde N_{\min} \geq 1$ and $g \leq  L_{\min} \tilde g /
\sqrt{d+3}$, then these paths are valid. 

\item 
In this case, the maximal average load can be upper bounded by 
\banum \label{eq-load-fixed-general}
1 + \left( \frac{\tilde N_{\max}^2}{\tilde N_{\min}^2} +
2\frac{\tilde N_{\max}}{\tilde N_{\min}}
\right)
\frac{d}{(\tilde g L_{\min} / \sqrt{d+3})^{d+1}}. 
\eanum
\ee
\end{proposition}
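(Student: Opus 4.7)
The plan is to handle the three parts in order, transferring geometric information between $\Xcal$ and $[0,1]^d$ via the bi-Lipschitz homeomorphism $h$.

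For Part 1, I would apply the two Lipschitz estimates directly. Given $x \in B_i$, the upper bound gives $\|h(x) - c_i\| \leq L_{\max}\|x - h\inv(c_i)\| \leq g/2$, so $h(x)$ lies in the Euclidean ball inscribed in the cube $C_i$, hence $x \in h\inv(C_i)$. Conversely, if $x \in h\inv(C_i)$ then $\|h(x) - c_i\| \leq \sqrt{d}\,g/2$ (half the diagonal of $C_i$), and the lower Lipschitz bound yields $\|x - h\inv(c_i)\| \leq \sqrt{d}\,g/(2L_{\min}) \leq R$, so $x \in B_i'$.

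For Part 2, I would check the two conditions of Proposition \ref{prop-valid-path} applied to the transported points $h(X_1), \hd, h(X_n)$ on the $g$-grid in $[0,1]^d$. Condition (1) follows from Part 1: because $B_i \subset h\inv(C_i)$ and $B_i$ contains at least $\tilde N_{\min} \geq 1$ sample points, $h\inv(C_i)$ contains at least one $X_j$, so $C_i$ contains the corresponding $h(X_j)$. For condition (2), any two points lying in the same grid cell or in a pair of face-adjacent cells (the relevant notion for the Hamming-path construction of the previous subsection) differ by at most $2g$ in one coordinate and $g$ in each of the remaining $d-1$ coordinates, so their Euclidean distance is at most $\sqrt{4g^2 + (d-1)g^2} = g\sqrt{d+3}$. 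The lower Lipschitz bound then gives $\|X_i - X_j\| \leq g\sqrt{d+3}/L_{\min} \leq \tilde g$ by the hypothesis on $g$, so $X_i$ and $X_j$ are connected in the original graph $G$.

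For Part 3, I would observe that the edges of the transported graph on $\{h(X_j)\}$ are in bijection with the edges of $G$, and that the path construction assigns the same edge set to a given pair of endpoints in either picture; hence the maximum average edge load is the same in both graphs, and Proposition \ref{prop-load-cube} can be applied directly to the transported graph. Part 1 gives $\tilde N_{\min} \leq N_i \leq \tilde N_{\max}$ for the per-cell counts, and in particular $N_{\max}/N_{\min} \leq \tilde N_{\max}/\tilde N_{\min}$. Substituting this ratio into the cube bound and taking $g$ to be the largest admissible width $g = L_{\min}\tilde g/\sqrt{d+3}$ produces the bound \eqref{eq-load-fixed-general}. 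The main obstacle I expect is keeping the geometric constants consistent through the change of coordinates, in particular pinning down the correct factor $\sqrt{d+3}$ rather than a naive $2\sqrt{d}$ or $\sqrt{d}$: this factor is exactly what ties the admissible grid width to the connectivity scale $\tilde g$ in the proposition's hypothesis.
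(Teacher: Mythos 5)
Your proposal is correct and follows essentially the same route as the paper: Part 1 via the bi-Lipschitz estimates, Part 2 by checking the two hypotheses of Proposition~\ref{prop-valid-path} on the transported points with the same $g\sqrt{d+3}$ neighbor-distance bound, and Part 3 by pushing the per-cell counts $\tilde N_{\min} \leq N_i \leq \tilde N_{\max}$ into Proposition~\ref{prop-load-cube}. The only cosmetic difference is that for $B_i \subset h\inv(C_i)$ you argue directly via the inscribed ball of $C_i$ rather than via the paper's contradiction argument on $\partial h\inv(C_i)$; both are valid.
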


\begin{proof} 
{\em Part 1.} Let $c_i$ be the center of cell $C_i$ and consider the ball $B_i$ centered at $h^{-1}(c_i)$ with radius $g/(2L_{\max})$. 
Clearly, $h^{-1}(c_i)$ is an interior point of $h^{-1}(C_i)$. 
Suppose that there exists $x\in B_i\cap \partial h^{-1}(C_i)$. 
Since $h$ maps the boundary of $h^{-1}(C_i)$ onto the boundary of $C_i$, we conclude that $h(x)\in \partial C_i$ and thus $\|h(x)-c_i\|\geq g/2$.  
By our assumption on the homeomorphism we can estimate 
\[\|x-h^{-1}(c_i)\|\geq \tfrac{1}{L_{\max}}\|h(x)-c_i\|\geq \tfrac{g}{2L_{\max}}.\]
Hence, $B_i\subseteq h^{-1}(C_i)$. 
To show the other statement let $x,y\in h^{-1}(C_i)$. 
Then \[\|x-y\| \leq \tfrac{1}{L_{\min}}\|h(x)-h(y)\|\leq \tfrac{1}{L_{\min}}\diam{C_i}=\tfrac{\sqrt{d}g}{L_{\min}}.\] 

{\em Part 2. } By the definition of $\tilde N_{\min}$ it is clear that
each cell of the grid contains at least one point. Consider two points
$X_i, X_j \in \Xcal$ such that $h(X_i)$ and $h(X_j)$ are in neighboring cells
of the $g$-grid. Then $\|h(X_i) - h(X_j)\| \leq g\sqrt{d+3}$. By
the properties of $h$, 
\ba
\|h\inv(X_i) - h\inv(X_j)\|
\leq
\frac{1}{L_{\min}}
\|X_i - X_j\| 
\leq \frac{1}{L_{\min}} \sqrt{d+3}\;g 
\;\;\;{\leq} \;\;\tilde g. 
\ea
Thus, by the definition of $\tilde g$ the points $X_i$ and $X_j$ are
connected in $G$. \\

{\em Part 3. } 
Follows directly from  
Proposition \ref{prop-load-cube}. 
\end{proof}

\subsubsection{Spectral gap for the $\eps$-graph}

Now we are going to apply 
Proposition \ref{prop-load-general} to 
$\eps$-graphs. We will use the general results on $\eps$-graphs
summarized in the appendix. 

\begin{proposition}[Maximal average load for $\eps$-graph] 
\label{prop-load-eps}
Assume that $\Xcal$ is homeomorphic to the cube with a mapping $h$ as
described in Equation $\eqref{eq-homeo}$. 
Then there exist constants $c_1, c_2, c_3 > 0$ such that with probability at least 
$ 1 - c_1 \exp(- c_2 n \eps^d) / \eps^d$, 
the maximum average load is upper bounded by $c_3 / \eps^{d+1}$. 
If $n \eps^d / \log n \to \infty$, then this probability tends to 1 as $n \to \infty$. 
\end{proposition}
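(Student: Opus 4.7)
The idea is to specialize Proposition~\ref{prop-load-general} to the $\eps$-graph and then show that the cell count quantities $\tilde N_{\min}$ and $\tilde N_{\max}$ appearing there are, with high probability, of the same order. First, since in an $\eps$-graph any two vertices at Euclidean distance at most $\eps$ are joined by an edge, the constant $\tilde g$ of Proposition~\ref{prop-load-general} may be taken to be $\tilde g := \eps$. I then fix the grid width on $[0,1]^d$ at
\begin{align*}
g \;:=\; \frac{L_{\min}\,\eps}{\sqrt{d+3}},
\end{align*}
which meets the hypothesis $g \leq L_{\min}\tilde g/\sqrt{d+3}$ of that proposition with equality. With this choice, the inner balls $B_i$ have radius $r = g/(2L_{\max}) = \Theta(\eps)$ and the outer balls $B_i'$ have radius $R = \sqrt{d}\,g/L_{\min} = \Theta(\eps)$, where the hidden constants depend only on $d$, $L_{\min}$, $L_{\max}$.

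Second, I would invoke the standard concentration bounds for random geometric graphs (Proposition~\ref{prop-counting} in the appendix) to control, uniformly over all cells, the number of sample points in $B_i$ and $B_i'$. Because $\Xcal$ is a valid region with $p\geq\pmin > 0$ and the boundary regularity condition guarantees $\vol(B_i \cap \Xcal) \geq \alpha \vol(B_i)$ even near $\partial\Xcal$, each $B_i$ has expected mass of order $n\eps^d$; similarly each $B_i'$ has expected mass of order $n\eps^d$ because $p \leq \pmax$. A Chernoff/Bernstein bound therefore gives constants $c, C > 0$ for which
\begin{align*}
\Pr\Big(c\,n\eps^d \;\leq\; \tilde N_{\min} \;\leq\; \tilde N_{\max} \;\leq\; C\,n\eps^d\Big) \;\geq\; 1 - c_1 \exp(-c_2\, n\eps^d)
\end{align*}
holds for a single cell. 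A union bound over the $O(1/g^d) = O(1/\eps^d)$ cells produces the failure probability $c_1\exp(-c_2 n\eps^d)/\eps^d$ claimed in the proposition.

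On the good event, $\tilde N_{\max}/\tilde N_{\min}$ is bounded by an absolute constant, so plugging into inequality~\eqref{eq-load-fixed-general} of Proposition~\ref{prop-load-general} yields
\begin{align*}
b \;\leq\; 1 + \Big(\tfrac{C^2}{c^2} + 2\tfrac{C}{c}\Big)\frac{d}{(\eps L_{\min}/\sqrt{d+3})^{d+1}} \;\leq\; \frac{c_3}{\eps^{d+1}},
\end{align*}
as required. The convergence of the probability to $1$ under $n\eps^d/\log n \to \infty$ is immediate, since $\exp(-c_2 n\eps^d)/\eps^d \to 0$ under this scaling.

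The main technical obstacle is ensuring that the concentration estimate holds \emph{uniformly} over all $O(1/\eps^d)$ cells, and that it remains valid for cells near the boundary of $\Xcal$; this is precisely why the definition of a valid region includes both the minimal density bound and the boundary regularity condition, so the argument is largely a matter of cleanly combining the existing tools rather than introducing new ones.
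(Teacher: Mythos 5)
Your proposal follows essentially the same route as the paper's proof: fix $\tilde g = \eps$, choose grid width $g = L_{\min}\eps/\sqrt{d+3}$, bound the cell counts $\tilde N_{\min}, \tilde N_{\max}$ via Proposition~\ref{prop-counting} applied to the balls $B_i, B_i'$ of radius $\Theta(\eps)$ (using the valid-region constants $p_{\min}$, $\alpha$, $p_{\max}$), account for the $O(1/\eps^d)$ cells in the union bound, and plug into Proposition~\ref{prop-load-general}. The argument and the resulting order of magnitude $c_3/\eps^{d+1}$ match the paper's.
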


\begin{proof} The proof is based on Proposition~\ref{prop-load-general}. By construction we know that points with
  distance at most $\tilde g = \eps$ are always connected in the $\eps$-graph. By
  Part 2 of Proposition~\ref{prop-load-general}, to 
ensure that points in neighboring grid cells are always connected in
the graph   we thus need to choose the grid width $g= \eps \cdot {L_{\min}}/{\sqrt{d+3}}
$. 
The radius $r$ defined in 
 Proposition~\ref{prop-load-general} is then given as 
\ba
r = \frac{g}{2L_{\max}} = 
\eps \cdot \frac{L_{\min}}{ 2\sqrt{d+3} L_{\max}}. 
\ea
The probability mass of the balls $B_i$ is thus bounded by  
\ba
b_{\min} \; \geq \; r^d \eta_d p_{\min} \alpha \; = \; 
\eps^d \; \cdot \; 
 \left(\frac{L_{\min}}{L_{\max}}\right)^d
 \frac{\eta_d}{2^d(d+3)^{d/2}} \; p_{\min} \alpha
\;\; = : \;\; \eps^d \cdot c_{\min}.  
\ea
We have 
\ba
K = 1/g^d = \sqrt{d+3} / L_{\min} ^d  \;\cdot \;(1/\eps^d) =: \kappa 
\;\cdot \;(1/\eps^d)
\ea
 grid
cells and thus the same number of balls $B_i$. We can now apply 
Proposition~\ref{prop-counting} (with $\delta: = 1/2$) to 
deduce the
bound for the quantity $\tilde N_{\min}$ used in 
Proposition~\ref{prop-load-general}: 
\ba
P\Big(
\tilde N_{\min} \;\leq \; n \eps^d c_{\min} / 2\Big)
\;\;\leq\;\; 
\frac{\kappa}{\eps^d}
\exp(- n \eps^d c_{\min} / 12 ).
\ea
Analogously, for $\tilde N_{\max}$ we have $R = \eps > \eps \sqrt{d} /
\sqrt{d+3}$ and $b_{\max} = R^d \eta_d p_{\max} = \eps^d
\eta_d p_{\max} := \eps^d \cdot c_{\max}$. With $\delta
= 0.5$ we then obtain 
\ba
P\Big(
\tilde N_{\max} \geq n \eps^d c_{\max} 3/2
\Big) 
\leq 
\frac{\kappa}{\eps^d}
\exp( - n \eps^d c_{\max} / 12 ).
\ea
Plugging these values into Proposition
\ref{prop-load-general} leads to the result. 
\end{proof}

We are now ready to {\bf prove Theorem \ref{th-gap-eps}} by applying
Proposition \ref{prop-boyd}. With probability at
least $1 - c_1 n \exp( - c_2 n \eps^d)$, 
both the minimal and maximal degrees in the graph
are of the order $\Theta( n\eps^d)$ (cf. Proposition
\ref{prop-degrees-eps}), and the volume of $G$ is of order
$\Theta(n^2 \eps^d)$.  To compute the maximal number
$|\gamma_{\max}|$ of edges in each of the
paths constructed above, observe that each path can traverse at most
$d \cdot 1/g = (d \sqrt{d+3} / L_{\min}) \cdot (1/\eps)$ cubes,
and a path contains just one edge per cube. Thus $| \gamma_{\max}|$ is
of the order $\Theta(1/\eps)$. 
In Proposition \ref{prop-load-eps} we have seen that with probability
at least $c_4 \exp(- c_5 n \eps^d) / \eps^d $ the maximum average load $b$ is of the order
$\Omega(1/\eps^{d+1})$. Plugging all these quantities in Proposition
\ref{prop-boyd}  leads to the result.  %
\mbox{}\hfill\ulesqed\\

\subsubsection{Spectral gap for the $\knn$-graph}

As in the case of the flow proofs, the techniques in the case of the $\knn$-graphs are
identical to the ones for the $\eps$-graph, we just have to replace
the deterministic radius $\eps$ by the minimal
$\knn$-radius. As before we exploit that  if two sample points have distance less
than $R_{k,\min}$ from each other, then they are always connected both
in
the symmetric and mutual $\knn$-graph.
\begin{proposition}[Maximal average load in the $\knn$-graph]
Under the general assumptions, with probability at least $1 - c_1
\cdot n \cdot \exp(-c_2 k)$  the
maximal average load in both the symmetric and mutual $\knn$-graph is bounded from above by $c_3 (n /
k)^{1 + 1/d}$. If $k / \log n \to \infty$, then this probability
converges to 1. 
\end{proposition}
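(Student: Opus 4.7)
The plan is to mirror the proof of Proposition~\ref{prop-load-eps} for the $\eps$-graph, with the deterministic connection radius $\eps$ replaced by the \emph{random} minimal $k$-nearest neighbor radius $R_{k,\min}$. The starting observation, already noted in the text just before the statement, is that if two sample points have Euclidean distance less than $R_{k,\min}$ they are mutually among each other's $k$ nearest neighbors, hence connected in both the symmetric and mutual $\knn$-graphs. Therefore we may apply Proposition~\ref{prop-load-general} with the choice $\tilde g := R_{k,\min}$.

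First I would choose the grid width as $g := R_{k,\min} L_{\min}/\sqrt{d+3}$ so that the sufficient condition $g \leq L_{\min}\tilde g/\sqrt{d+3}$ in Part~2 of Proposition~\ref{prop-load-general} is met, and so that points in neighboring cells are guaranteed to be connected in the graph. Then I would invoke the standard concentration estimates for $\knn$-radii collected in the appendix (Propositions~\ref{prop-counting} and~\ref{prop-degrees-knn}): under the general assumptions there exist constants such that, with probability at least $1 - c_1 n \exp(-c_2 k)$, both $R_{k,\min}$ and $R_{k,\max}$ are of order $(k/n)^{1/d}$. This immediately yields $g = \Theta((k/n)^{1/d})$ uniformly on the good event, and in particular the total number of grid cells is $\Theta(n/k)$.

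Next I would bound the ball counts $\tilde N_{\min}$ and $\tilde N_{\max}$ required by Proposition~\ref{prop-load-general}. The inner balls $B_i$ have radius $r = g/(2L_{\max}) = \Theta((k/n)^{1/d})$ and the outer balls $B_i'$ have radius $R = \sqrt{d}\,g/L_{\min} = \Theta((k/n)^{1/d})$, so for a density bounded above and below by $\pmin,\pmax$ their probability masses are both $\Theta(k/n)$. A standard Chernoff bound applied uniformly over the $\Theta(n/k)$ balls (exactly as in the proof of Proposition~\ref{prop-load-eps}) then shows that, with probability at least $1 - c_3 n \exp(-c_4 k)$, one has $\tilde N_{\min} = \Theta(k)$ and $\tilde N_{\max} = \Theta(k)$. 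On the intersection of all these good events the ratio $\tilde N_{\max}/\tilde N_{\min}$ is $O(1)$, so the bracket in Equation~\eqref{eq-load-fixed-general} is $O(1)$, and the load bound becomes
\[
b \;\leq\; 1 + O(1)\cdot \frac{d}{g^{d+1}} \;=\; O\!\left(\frac{1}{(k/n)^{(d+1)/d}}\right) \;=\; O\!\left((n/k)^{1+1/d}\right),
\]
as claimed. Collecting the failure probabilities of all the concentration events via a union bound leaves the advertised probability $1 - c_1 n \exp(-c_2 k)$, which tends to $1$ provided $k/\log n \to \infty$.

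The only nonroutine point, which I expect to be the main obstacle, is the uniform handling of the randomness in $R_{k,\min}$: the grid width itself is random, so one must first condition on the event that $R_{k,\min}$ and $R_{k,\max}$ are both of the claimed order, and then perform the uniform Chernoff bound over the (random but controlled) collection of $\Theta(n/k)$ balls. Once this layering of events is set up cleanly, everything reduces to the same bookkeeping as in the $\eps$-graph case.
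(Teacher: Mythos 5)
Your proposal matches the paper's own (very terse) proof, which simply declares the argument "completely parallel to Proposition~\ref{prop-load-eps}, with the role of $\eps$ taken over by $R_{k,\min}$," and then leaves the bookkeeping to the reader exactly as you have carried it out. The one point where you could simplify: instead of treating the grid width $g$ as random and conditioning, just fix a \emph{deterministic} grid width $g := a\,(k/n)^{1/d}L_{\min}/\sqrt{d+3}$, where $a$ is the constant from Proposition~\ref{prop-degrees-knn}; on the good event $\{R_{k,\min} \geq a(k/n)^{1/d}\}$ this deterministic $g$ already satisfies the validity condition $g \leq L_{\min}R_{k,\min}/\sqrt{d+3}$, so the Chernoff counting over the (now fixed, deterministic) collection of $\Theta(n/k)$ balls goes through without any conditioning or layering of events, and the failure probabilities simply add via a union bound. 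This removes the "main obstacle" you flagged rather than merely managing it.
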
 
\begin{proof}
This proof is completely parallel to the one of Proposition
\ref{prop-load-eps}, the role of $\eps$ is now taken over by
$R_{k,\min}$. 
\end{proof}

Finally, the {\bf proof of Theorem \ref{th-gap-knn}} goes as follows. With probabilities
at least $1 - n \exp(- c_1 k)$ the following statements hold: the minimal and
maximal degree are  of order $\Theta(k)$, thus the number of edges in
the graph is of order $\Theta(n k)$. Analogously
to the proof for the $\eps$-graph, the maximal path length
$|\gamma_{\max}|$ is of the order $1 / R_{k,\min} = \Theta((k/n)^{1/d})$. The
maximal average load is of the order $O((n/k)^{d+1/d})$. Plugging all
these quantities in Proposition \ref{prop-boyd} leads to the result.
\mbox{}\hfill\ulesqed

\subsection{Proofs of Corollaries \ref{th-hitting-eps}  and \ref{th-hitting-knn} }
Now we collected all ingredients to finally present the following
proofs. \\

{\bf Proof of Corollary \ref{th-hitting-eps}} 

This  is a direct consequence of the results on the minimal degree
(Proposition \ref{prop-degrees-eps}) and the spectral gap (Theorem
\ref{th-gap-eps}). Plugging these
results into Proposition \ref{prop-approx} leads to the
first result. 
The last statement in the theorem follows by a standard density
estimation argument, as the degree of a
vertex in the $\eps$-graph is a consistent density estimator (see
Proposition \ref{prop-degrees-eps}). 
\mbox{}\hfill\ulesqed\\

{\bf Proof of Corollary \ref{th-hitting-knn} }

Follows similarly as Theorem \ref{th-hitting-eps} by applying Proposition 
\ref{prop-approx}. The results on the minimal degree and the spectral
gap can be found in Proposition \ref{prop-degrees-knn} and Theorem \ref{th-gap-knn}. 
The last statement follows from the convergence of the degrees, see
Proposition \ref{prop-degrees-knn}. 
\mbox{}\hfill\ulesqed\\

\subsection{Weighted graphs}

For weighted graphs, we use the following results 
from the literature. 

\begin{proposition}[Spectral gap in weighted graphs]
\label{prop-spectral-weighted}

\be
\item %
For any row-stochastic matrix $P$, 
\[ \lambda_2 \; \leq \; \frac{1}{2}\max_{i,j}\sum_{k=1}^n \big|
\frac{w_{ik}}{d_i} - \frac{w_{jk}}{d_j}\big| \; \leq \; 1- n
\min_{i,j} \frac{w_{ij}}{d_i} \; \leq \; 1 -
\frac{w_{\min}}{w_{\max}}.\]

\item %
Consider a weighted graph $G$ with edge weights $0 < \wmin \leq w_{ij}
\leq \wmax$ and denote its second eigenvalue by $\lambda_{2,
  weighted}$. Consider the corresponding unweighted graph where all
edge weights are replaced by $1$, and denote its second eigenvalue by
$\lambda_{2,unweighted}$. Then we have
\ba
( 1 - \lambda_{2,unweighted} )\cdot \frac{\wmin}{\wmax}
\;\;\leq\;\;
(1 - \lambda_{2,weighted})  
\;\;\leq\;\; (1 - \lambda_{2,unweighted}) \cdot \frac{\wmax}{\wmin}
\ea
\ee
\end{proposition}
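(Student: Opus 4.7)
The first inequality is the classical statement that every non-unit eigenvalue of a row-stochastic matrix is bounded in modulus by the Dobrushin (ergodicity) coefficient $\delta(P):=\tfrac{1}{2}\max_{i,j}\|P_{i\cdot}-P_{j\cdot}\|_1$. I would prove it directly: given a right eigenvector $v$ with $Pv=\mu v$, pick $i^{*},j^{*}$ maximizing and minimizing the entries of $v$ and write $\mu(v_{i^{*}}-v_{j^{*}}) = \sum_k(P_{i^{*}k}-P_{j^{*}k})v_k$. Split the sum according to the sign of $P_{i^{*}k}-P_{j^{*}k}$; because $\sum_k(P_{i^{*}k}-P_{j^{*}k})=0$, the positive and negative parts have equal total mass, namely $\tfrac{1}{2}\|P_{i^{*}\cdot}-P_{j^{*}\cdot}\|_1$. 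Bounding $v_k$ above by $v_{i^{*}}$ on the positive part and below by $v_{j^{*}}$ on the negative part gives $|\mu|(v_{i^{*}}-v_{j^{*}}) \leq \delta(P)(v_{i^{*}}-v_{j^{*}})$, hence $|\mu|\leq\delta(P)$.

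For the next inequality I would use the identity $\tfrac12\|P_{i\cdot}-P_{j\cdot}\|_1 = 1-\sum_k\min(P_{ik},P_{jk})$ (splitting each $|P_{ik}-P_{jk}|$ into $\max-\min$ and using $\sum_k(P_{ik}+P_{jk})=2$) together with the crude bound $\sum_k\min(P_{ik},P_{jk})\geq n\min_{i,k}P_{ik}$. The final inequality is immediate from $d_i\leq n\wmax$, which yields $\min_{i,k}w_{ik}/d_i \geq \wmin/(n\wmax)$.

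\textbf{Plan for Part 2.} I would argue from the variational characterization of the second eigenvalue of $P=D^{-1}W$. Since the eigenvalues of $P$ coincide with those of $\lsym$, one has
\[ 1-\lambda_2 \;=\; \min_{g\text{ non-constant}}\; \frac{\sum_{i<j} w_{ij}(g_i-g_j)^{2}}{\min_{a\in\R}\sum_i d_i(g_i-a)^{2}}, \]
the inner minimum handling the orthogonality constraint $g\perp D^{1/2}\mathbf 1$ by re-centering. With the same edge set, the numerator obeys $\wmin\sum_{i<j}(g_i-g_j)^{2}\leq \sum_{i<j}w_{ij}(g_i-g_j)^{2}\leq \wmax\sum_{i<j}(g_i-g_j)^{2}$, and the pointwise bound $\wmin d_i^{\mathrm{uw}}\leq d_i^{w}\leq \wmax d_i^{\mathrm{uw}}$ propagates through the $\min_a$ to give $\wmin\min_a\sum_i d_i^{\mathrm{uw}}(g_i-a)^{2}\leq \min_a\sum_i d_i^{w}(g_i-a)^{2}\leq \wmax\min_a\sum_i d_i^{\mathrm{uw}}(g_i-a)^{2}$. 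Taking worst-case ratios across the bounds yields both inequalities of the proposition.

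\textbf{Main obstacle.} The one subtle point is that the orthogonality constraint in the Rayleigh quotient is with respect to the degree distribution, which is not the same in the weighted and unweighted graphs; naively plugging the unweighted optimizer into the weighted Rayleigh quotient would violate the constraint. Re-expressing the eigenvalue via $\min_a$ over the centering constant neatly decouples the two problems and lets the two-sided weight bounds be applied termwise. Part 1 is then a purely combinatorial chain once the eigenvalue/Dobrushin bound is in hand.
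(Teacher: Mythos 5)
Your plan for Part~2 is essentially the paper's proof: the paper also invokes the Rayleigh quotient for the second eigenvalue of $\Lsym$ in the form
$\mu_2 = \inf_{f}\, \sum_{i,j}w_{ij}(f_i-f_j)^2 \,/\, \min_{c}\sum_i d_i(f_i-c)^2$,
together with $1-\lambda_2=\mu_2$, and the termwise bounds $\wmin\le w_{ij}\le\wmax$ then give both directions with factors $\wmin/\wmax$ and $\wmax/\wmin$. Your ``main obstacle'' remark is exactly the right thing to flag: the $\min_c$ over centering constants is precisely what decouples the degree-weighted orthogonality constraints of the two graphs, and this is what the paper's form of the Rayleigh quotient is doing.

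For Part~1, the paper simply cites Zenger (1972) for the whole chain, whereas you supply a self-contained proof of the Dobrushin ergodicity-coefficient bound and of the two remaining elementary inequalities. Your argument is correct: for a real eigenvector $v$ with eigenvalue $\mu\neq 1$, splitting $\sum_k(P_{i^*k}-P_{j^*k})v_k$ by sign and using $\sum_k(P_{i^*k}-P_{j^*k})=0$ yields $\mu(v_{i^*}-v_{j^*})\le \tfrac12\|P_{i^*\cdot}-P_{j^*\cdot}\|_1\,(v_{i^*}-v_{j^*})$; note that, as written, this gives $\mu\le\delta(P)$ rather than $|\mu|\le\delta(P)$ (bounding $|\mu|$ needs a slightly more careful handling of signs), but $\lambda_2\le\delta(P)$ is exactly the inequality the proposition asserts, so the argument suffices. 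The identity $\tfrac12\|P_{i\cdot}-P_{j\cdot}\|_1=1-\sum_k\min(P_{ik},P_{jk})$ and the crude $\sum_k\min(P_{ik},P_{jk})\ge n\min_{i,k}P_{ik}$ give the middle inequality, and $d_i\le n\wmax$ gives the last (this last step implicitly uses that the graph is fully connected, i.e.\ $w_{ij}\ge\wmin$ for every pair, which is the setting in which the paper applies the result). In summary: same route as the paper on Part~2, and a correct elementary proof replacing the literature citation on Part~1.
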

\begin{proof} 
\be \item This bound was obtained by \citet{Zen72}, see also Section 2.5 of
\citet{Sen06} for a discussion. Note that the second inequality is
far from being tight. But in our application, both bounds lead to
similar results. 

\item 
This statement follows directly from the well-known representation
of the second eigenvalue $\mu_2$ of the normalized graph Laplacian $\Lsym$
(see Sec. 1.2 in \citealp{Chung97}),
\ba
\mu_{2} = 
\inf_{f \in \R^n}   \frac{\sum_{i,j=1}^n w_{ij} (f_i-f_j)^2 }{ \min_{c
    \in \R} \sum_{i=1}^n d_i (f_i-c)^2}. 
\ea
Note that the eigenvalue $\mu_2$ of the normalized Laplacian and the
eigenvalue $\lambda_2$ of the random walk matrix $P$ are in relation
$1 -\lambda_{2} = \mu_2$. 
\ee 
\end{proof}

We will now show to examples how this proposition can be used. The
first application of Proposition \ref{prop-spectral-weighted} is the
{\bf 
Proof of Theorem \ref{th:Hitting-fully-connected}}, which follows
directly from plugging in the first part of Proposition
\ref{prop-spectral-weighted} in Theorem \ref{prop-approx}.

The second application of Proposition \ref{prop-spectral-weighted} is
the following proof. \\

{\bf Proof of Theorem \ref{co:Gaussian}. }\\
We split 

\ba
\left|
n R_{ij} - \frac{1}{p(X_i)} - \frac{1}{p(X_j)} \right|
\leq 
\left|
nR_{ij} - \frac{n}{d_i} - \frac{n}{d_j} \right|
+
\left|
 \frac{n}{d_i} + \frac{n}{d_j}
- \frac{1}{p(X_i)} - \frac{1}{p(X_j)} \right|. 
\ea

Under the given assumption, the second term on the right hand side converges to 0 a.s. by
a standard kernel density estimation argument. The main work is the
first term on the right hand side. We treat upper and lower bounds of
$R_{ij} - 1/d_i - 1/d_j$ separately. \\

To get a lower bound, recall that by Proposition \ref{th-lower} we have 
\ba
R_{ij} \geq \frac{ Q_{ij}}{1 + w_{ij} Q_{ij} }
\ea
where $Q_{ij} = 1/(d_i - w_{ij}) + 1/(d_j - w_{ij})$ and $w_{ij}$ is
the weight of the edge between $i$ and $j$. It is
straightforward to see that under the given conditions,

\ba
n \left(
R_{ij} - \frac{1}{d_i} - \frac{1}{d_j} 
\right)
\geq 
n \left( 
\frac{ Q_{ij}}{1 + w_{ij} Q_{ij} } - \frac{1}{d_i} -
  \frac{1}{d_j}
\right) \to 0 \;\;a.s.
\ea

To treat the upper bound, we define the $\eps$-truncated Gauss graph
$G^{\eps}$ as the graph with edge
weights 
\[ w^\eps_{ij} := \left\{\begin{tabular}{ll} $w_{ij}$ & if $\norm{X_i-X_j}\leq \eps$,\\ $0$ & else.\end{tabular}\right.\]
Let $d^\eps_i = \sum_{j=1}^n w^\eps_{ij}$. Because of  $w^\eps_{ij}\leq w_{ij}$ and  Rayleigh's
principle, we have $R_{ij}\leq R^\eps_{ij}$, where $R^\eps$ denotes
the resistance of the $\eps$-truncated Gauss graph. Obviously, 
\begin{align*}
 n R_{ij} - \bigg(\frac{n}{d_i}+\frac{n}{d_j}\bigg) &
\leq \left|  n R^\eps_{ij} - \bigg(\frac{n}{d_i}+\frac{n}{d_j}\bigg)
\right| \\
& \leq 
 \underbrace{\left| nR^\eps_{ij} -
    \bigg(\frac{n}{d^\eps_i}+\frac{n}{d^\eps_j}\bigg)\right| }_{(*)}
 +  \underbrace{\left|\bigg(\frac{n}{d^\eps_i}+\frac{n}{d^\eps_j}\bigg)     -
                                                  \bigg(\frac{n}{d_i}+\frac{n}{d_j}\bigg)\right|}_{(**)}.
                                                   \\
\end{align*}

To bound term $(**)$ we show that the degrees
in the truncated graph converge to the ones in the non-truncated
graph. To see this, note that 
\begin{align*}
\Exp\Big( \frac{d^\eps_i}{n}\; \Big| \;X_i\Big) &= \frac{1}{(2\pi)^\frac{d}{2}}\frac{1}{h^d}\int_{B(X_i,\eps)} e^{\frac{-\norm{X_i-y}^2}{2h^2}} \, p(y)\, dy \\
                              &= \frac{1}{(2\pi)^\frac{d}{2}} \int_{B(0,\frac{\eps}{h})} e^{-\frac{\norm{z}^2}{2}}\, p(X_i + hz) \, dz\\
                              &= \Exp\Big(\frac{d_i}{n}\; \Big| \;X_i\Big) - \frac{1}{(2\pi)^\frac{d}{2}} \int_{\R^d \backslash B(0,\frac{\eps}{h})} e^{-\frac{\norm{z}^2}{2}}\, p(X_i + hz) \, dz.
\end{align*}
Exploiting that 
\begin{align*}
\frac{1}{(2\pi)^\frac{d}{2}} \int_{\R^d \backslash B(0,\frac{\eps}{h})} e^{-\frac{\norm{z}^2}{2}} 
& \leq \frac{1}{(2\pi)^\frac{d}{2}} e^{-\frac{\eps^2}{4h^2}} \int_{\R^d}  e^{-\frac{\norm{z}^2}{4}} \\
& \leq 2^\frac{d}{2} e^{-\frac{\eps^2}{4h^2}} = 2^\frac{d}{2} \frac{1}{\log(n\eps^{d+2})^\frac{1}{4}}
\end{align*}                       
we obtain the convergence of the expectations: under the assumptions on
$n$ and $h$ from the theorem, 
\[ \Big| \; \Exp\Big( \frac{d^\eps_i}{n}\; \Big|
\;X_i\Big) - \Exp\Big(\frac{d_i}{n}\; \Big| \;X_i\Big) \;\Big| \;\to\; 0.\]
Now, a probabilistic bound for term $(**)$
can be obtained by standard concentration arguments. \\

We now bound term $(*)$. In the following
we implicitly define $\eps$  via $h^2 = \eps^2 /
\log(n\eps^{d+2})$. Note that for the given choice of $\eps$, the
truncated Gaussian graph ``converges'' to the non-truncated graph, as
we truncate less and less weight. \\

Denote by $\lambda^{\eps, \text{weighted}}$ the eigenvalues of the
$\eps$-truncated Gauss graph, and by $\wmin^\eps$, $\wmax^\eps$ its
minimal and maximal edge weights. Also consider the graph $G''$ that
is the unweighted version of the $\eps$-truncated Gauss graph
$G^{\eps}$ . Note that $G''$ coincides with the standard $\eps$-graph. We denote its
eigenvalues by $\lambda^{\eps,
  \text{unweighted}}$.  
By applying Proposition
\ref{prop-approx} and Corollary \ref{th-hitting-eps} we get 
\banum
 \left| n R^\eps_{ij} - \bigg(\frac{n}{d^\eps_i}+\frac{n}{d^\eps_j}\bigg)\right| 
& \leq \frac{\wmax^\eps}{\dmin^\eps} \bigg(\frac{1}{1-\lambda^{\eps,\textrm{weighted}}_2} + 2\bigg)\bigg(\frac{n}{d^\eps_i} + 
\frac{n}{d^\eps_j}\bigg)\\ \nonumber
& \leq \frac{\wmax^\eps}{\dmin^\eps} \bigg(\frac{\wmax^\eps}{\wmin^\eps}\frac{1}{1-\lambda^{\eps,\textrm{unweighted}}_2} + 2\bigg)\bigg(\frac{n}{d^\eps_i} + \frac{n}{d^\eps_j}\bigg)\\
\label{eq-term-*}
\eanum
where the first inequality holds with probability 
 at
least $1 - c_1 n \exp(- c_2 n h^d) -c_3 \exp( - c_4 n h^d) /
 h^d$. By $(**)$ we already know that the last factor of Term
 \eqref{eq-term-*} converges to a constant: 
\ba
\bigg(\frac{n}{d^\eps_i} + \frac{n}{d^\eps_j}\bigg) \to 1/p(X_i) + 1/p(X_j)
\ea

For the other factors  of Term \eqref{eq-term-*} we use the following
quantities: 
\ba
& \wmin \geq \frac{1}{h^d} \exp(- \frac{\eps^2}{2h^2})
=  \frac{1}{h^d} \frac{1}{ (n\eps^{d + 2})^{1/2}}\\
& \wmax \leq \frac{1}{h^d}\\
& \dmin \geq n \eps^d \wmin\\
& 1 - \lambda_2 \geq \eps^2\\
\ea

Plugging these quantities  in \eqref{eq-term-*} we obtain the
convergence of $(*)$. 
\mbox{}\hfill\ulesqed\\

{\bf Proof of Corollary \ref{th-er}}

\begin{proof} 
It is well known that under the given assumptions, the following properties hold with high
probability: the graph is connected and the minimal and average degrees are of
the order $n p$, in particular $n p /d_i$ converges to 1 in
probability. The volume of the graph is of the order $n^2 p$. 
To use Theorem \ref{th-chungradcliffe}, observe that the 
matrix $\overline{A} = p J$ where $J$ is the $(n\times n)$-matrix of all ones. 
The expected degree of all vertices is $ np$. Hence, 
$\overline{D^{-1/2}}\overline{A}\overline{D^{-1/2}} = \frac{1}{np}
\cdot \overline{A}$. This matrix has rank 1, its non-zero eigenvalue
is 1 with the constant one vector as corresponding eigenvector. Hence 
the expected spectral gap in this model is 1. 
It is easy
to see that as soon as $p / \log(n) \to \infty$, the deviations in
Theorem \ref{th-chungradcliffe} converge to 0. 
Plugging all this into our
Proposition \ref{prop-approx} shows that with high
probability, 
\ba
np
\left|
\frac{1}{\vol(G)} H_{ij} - \frac{1}{d_i} \right|
 \;\leq\; 
np\cdot 
2 \left(\frac{1}{1 - \lambda_2} + 1\right) \frac{\wmax}{\dmin^2} 
= O\left(\frac{1}{n p}\right)\\
\ea
\end{proof}

{\bf Proof of Corollary \ref{th-er-planted}}

\begin{proof}
The expected degree of each vertex is $n (\pwithin+\pbetween)/2$, the expected volume of the
graph is $n^2 (\pwithin+\pbetween)/2$. 
 The matrix
$\overline{A}$ has the form 
$\left(\begin{smallmatrix} p J
  & q J\\ q J & pJ
\end{smallmatrix}\right)$
 where $J$ is the $(n/2 \times n/2)$-matrix of all
ones. The expected degree of all vertices is $ n(p+q)/2$. Hence, 
$\overline{D^{-1/2}}\overline{A}\overline{D^{-1/2}} = \frac{2}{n(p+q)}
\cdot \overline{A}$. This
matrix has rank 2, its largest eigenvalue is $1$ (with eigenvector
the constant 1 vector), the other eigenvalue is $(p-q) / (p+q)$ with
eigenvector 
$(1, ..., 1, -1, ..., -1)$. Hence, the spectral gap in this model is
$2q / (p+q)$. 

Under the assumption that $p = \omega(\log(n) / n)$, the deviations in
Theorem \ref{th-chungradcliffe} converge to 0. Plugging the expected
spectral gap in our bound in Proposition \ref{prop-approx} shows
that with high probability, 
\ba
& \frac{n (\pwithin+\pbetween)}{2}\cdot 
\left|
\frac{1}{\vol(G)} H_{ij} - \frac{1}{d_i} \right|
&\leq \frac{4}{n\pbetween} + \frac{4}{n(\pwithin+\pbetween)} = O\left(\frac{1}{n\pbetween}\right)
\ea
\end{proof}

{\bf Proof of Corollary \ref{th-expected-degrees}}

\begin{proof}
we use the result from Theorem 4 in
\citet{ChuRad11} which states that under the assumption that the
minimal expected degree $\overline{\dmin}$ satisfies
$\overline{\dmin}/\log(n) \to \infty$, then with probability at least
$1 - 1/n$ the spectral gap is bounded by a term of the order $O(
\log(2n) / \overline{\dmin} )$. 
Plugging this in Proposition \ref{prop-approx} shows
that with high probability, 

\ba
& \frac{ \left|\frac{1}{\vol(G)} C_{ij} - \frac{1}{d_i} -
  \frac{1}{d_j}\right|}{   \frac{1}{d_i} +
  \frac{1}{d_j} } 
& \leq
 \left( \frac{\overline{\dmin}}{log(2n)} + 2 \right) \frac{1}{\dmin}
= O\left(\frac{1}{\log(2n)}\right)
\ea
\end{proof}

\section{Discussion} \label{sec-discussion}

We have presented different strategies to prove that in many large
graphs the commute distance can be approximated by  $1/d_i +
1/d_j$. Both our approaches tell a similar story. Our result
holds as soon as there are ``enough disjoint paths'' between $i$ and
$j$, compared to the size of the graph, and the minimal degree is
``large enough'' compared to $n$. %

We would like to point out that our results on the degeneracy of the
hitting and commute times are not due to pathologies such as a 
``misconstruction'' of the graphs. For example, in the random
geometric graph setting  the graph Laplacian can be proved to
converge to the Laplace-Beltrami operator on the underlying space
under similar assumptions as the ones  above
\citep{HeiAudLux07}. But
even though the Laplacian itself converges to a meaningful limit, the
resistance distance, which is computed based on point evaluations of
the inverse of this Laplacian, does not converge to a useful limit. \\

The limit distance function $dist(i,j) = 1/d_i + 1/d_j$ is completely
meaningless as a distance function. It just considers the local
density (the degree) at the two vertices, but does not take into
account any global property such as the cluster structure of the
graph. As the speed of convergence is very fast (for example, of the
order $1/n$ in the case of Gaussian similarity graphs), the use of the raw
commute distance should be discouraged even on moderate sized
graphs. However, there might be ways how useful information can be
extracted from the commute distance, namely in the form of the
remainder terms $S_{ij} - 1/d_i - 1/d_j$.  Exploring this 
idea in depth is a project for
future research.\\

There are two important classes of graphs that are not covered in our
approach. In power law graphs as well as in grid-like graphs, the
minimal degree is constant, thus our results do not lead to
tight enough bounds. The resistance distances on grid-like graphs has been
studied in some particular cases.  For
example, \citet{Cserti00} and \citet{Wu04} prove explicit formulas for
the resistance on regular one-and two-dimensional grids, and
\citet{BenRos08} characterize the variance of the resistance on random
Bernoulli grids. To the best of our knowledge, general results about
the convergence of the resistance distance on grid-like graphs do not
exist. \\

\section{Appendix: General properties of random geometric graphs}

In this appendix we collect some basic results on random geometric
graphs. These results are well-known, but we did not find any reference where the
material is presented in the way we need it (often the results are
used implicitly or are tailored towards particular
applications). \\

In the following, assume that $\Xcal := \supp(p)$ is a valid region according to
Definition 1. Recall the definition of the boundary constant $\alpha$
in the valid region. 

A convenient tool for dealing with random geometric graphs is the
following well-known concentration inequality for binomial random variables that has first
appeared in \citet{AngVal77}. 

\begin{proposition}[Concentration inequalities]
\label{prop-concentration-appendix}
Let $N$ be a $Bin(n,p)$-distributed
random variable. Then, for all $\delta \in ]0,1]$, 
\ba P\Big(N\;\leq\; (1 - \delta) n p \Big) \;\;\leq\;\;
\exp(- \frac{1}{3} \delta^2 n p )\\
P\Big(N\;\geq\; (1 + \delta) n p \Big)
\;\;\leq\;\;
 \exp(- \frac{1}{3} \delta^2 n p ). 
\ea
\end{proposition}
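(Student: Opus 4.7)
The plan is to follow the classical Chernoff-bound route, which is the standard proof of this Angluin--Valiant inequality. First I would recall that for $N \sim \mathrm{Bin}(n,p)$ we can write $N = \sum_{i=1}^n X_i$ with $X_i$ i.i.d.\ Bernoulli($p$), and that the moment generating function factorizes as $\mathbb{E}[e^{tN}] = (1-p+pe^t)^n$ for every $t \in \mathbb{R}$. Using the elementary inequality $1+x \leq e^x$ applied to $x = p(e^t-1)$, this yields the clean envelope $\mathbb{E}[e^{tN}] \leq \exp\!\bigl(np(e^t-1)\bigr)$, which is the only analytic input beyond Markov's inequality.

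For the upper tail I would then apply Markov to $e^{tN}$ with $t>0$, obtaining
\[
P\bigl(N \geq (1+\delta)np\bigr) \;\leq\; \exp\!\bigl(np(e^t-1) - t(1+\delta)np\bigr).
\]
Optimizing in $t$ gives the choice $t = \log(1+\delta)$ and leads to the sharp Chernoff form $\exp\bigl(-np\,\psi(\delta)\bigr)$ with $\psi(\delta) := (1+\delta)\log(1+\delta) - \delta$. The final step is to reduce this to the stated $\tfrac{1}{3}\delta^2 np$ bound by verifying the calculus inequality $\psi(\delta) \geq \delta^2/3$ for $\delta \in (0,1]$; this is elementary, e.g.\ by checking that $f(\delta) := \psi(\delta) - \delta^2/3$ satisfies $f(0)=f'(0)=0$ and $f''(\delta) = \tfrac{1}{1+\delta} - \tfrac{2}{3} \geq 0$ on $[0,1/2]$, together with a direct endpoint check giving $\psi(1) = 2\log 2 - 1 \approx 0.386 \geq 1/3$.

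For the lower tail the argument is symmetric: apply Markov to $e^{-tN}$ with $t>0$, use the same MGF bound, optimize at $t = -\log(1-\delta)$, and reduce via the analogous inequality $(1-\delta)\log(1-\delta) + \delta \geq \delta^2/3$ on $(0,1]$, which is in fact easier since the function is convex and one can again verify it from $f(0)=f'(0)=0$ and a non-negative second derivative on the relevant range. No step here is really an obstacle — the result is standard — so the main ``work'' is just bookkeeping the two elementary calculus inequalities that convert the exact Chernoff rate function $\psi(\delta)$ into the convenient quadratic $\delta^2/3$. Since the paper only invokes this result as a black-box concentration tool, one could alternatively just cite \citet{AngVal77} (as the authors already do) and omit the explicit derivation.
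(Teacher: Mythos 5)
The paper itself does not prove this proposition: it simply cites \citet{AngVal77} and uses the inequality as a black box. So there is no ``paper's proof'' to compare against, and your derivation replaces a citation with a self-contained argument.

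Your Chernoff-bound route is the standard one and is correct. The upper-tail chain (MGF, $1+x\le e^x$, Markov, optimize at $t=\log(1+\delta)$) yields $\exp(-np\,\psi(\delta))$ with $\psi(\delta)=(1+\delta)\log(1+\delta)-\delta$, and the reduction $\psi(\delta)\ge\delta^2/3$ on $(0,1]$ holds. One small point worth spelling out: on $(1/2,1]$ the second derivative $f''(\delta)=\tfrac{1}{1+\delta}-\tfrac{2}{3}$ is \emph{negative}, so $f$ is concave there; positivity then follows because $f(1/2)>0$ (from the convex part of the argument) and $f(1)>0$ (your endpoint check), and a concave function lies above the chord joining two nonnegative endpoint values. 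Your phrase ``direct endpoint check'' alone doesn't quite close this without invoking concavity or monotonicity of $f'$, but the fix is immediate. For the lower tail, your symmetric argument optimizes at $t=-\log(1-\delta)$ and gives rate $g(\delta)=\delta+(1-\delta)\log(1-\delta)$; here in fact $g(\delta)\ge\delta^2/2$ (since $g(0)=g'(0)=0$ and $g''(\delta)=1/(1-\delta)\ge1$), which is stronger than the $\delta^2/3$ you need, so that side goes through with room to spare. The degenerate case $\delta=1$ for the lower tail is also fine since $P(N\le 0)=(1-p)^n\le e^{-np}\le e^{-np/3}$. Overall: correct proof of a result the paper only cites.
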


We will see below that computing expected, minimum and maximum degrees
in random geometric graphs always boils down to counting the number of
data points in certain balls in the space. The following proposition
is a straightforward application of the concentration inequality above
and serves as ``template'' for all later proofs. 

\begin{proposition}[Counting sample points] \label{prop-counting}
Consider a sample $X_1, \hd, X_n$ drawn
i.i.d.\ according to density $p$ on $\Xcal$. Let $B_1, \hd, B_K$ be a fixed collection of subsets of $\Xcal$
  (the $B_i$ do not need to be disjoint). Denote by $b_{\min} :=
  \min_{i=1, \hd, K} \int_{B_i} p(x) dx$ the minimal probability mass
  of the sets $B_i$ (similarly by $b_{\max}$ the maximal probability
  mass), and by $N_{\min}$ and $N_{\max}$ the minimal (resp.\ maximal)
  number of sample points in the sets $B_i$. Then for all $\delta \in ]0,1]$
\ba
& P\Big(
N_{\max} \geq (1 + \delta) n b_{\max} \Big) \;
\leq \; 
K \cdot \exp( - \delta^2 n b_{\max} / 3
)\\
& P\Big(
N_{\min} \leq (1 - \delta) n b_{\min} \Big) \;
\leq \; 
K \cdot \exp( - \delta^2 n b_{\min} / 3
). 
\ea
\end{proposition}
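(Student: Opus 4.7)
The plan is to combine Proposition \ref{prop-concentration-appendix} with a single union bound over the $K$ sets. For each $i\in\{1,\hd,K\}$, let $N_i := |\{j : X_j \in B_i\}|$ denote the number of sample points falling in $B_i$. Because $X_1,\hd,X_n$ are i.i.d.\ with density $p$, each $N_i$ follows a $\text{Bin}(n, p_i)$ distribution, where $p_i := \int_{B_i} p(x)\,dx \in [b_{\min}, b_{\max}]$. This reduces the whole proposition to a deviation statement about $K$ binomial random variables whose success parameters are not all identical, but are sandwiched between $b_{\min}$ and $b_{\max}$.

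To handle this non-uniformity I would use a standard stochastic domination argument. For the upper tail, since $p_i \leq b_{\max}$, there is a coupling in which $N_i \leq \tilde N_i$ for some $\tilde N_i \sim \text{Bin}(n, b_{\max})$, so $P(N_i \geq (1+\delta) n b_{\max}) \leq P(\tilde N_i \geq (1+\delta) n b_{\max}) \leq \exp(-\delta^2 n b_{\max}/3)$ by Proposition \ref{prop-concentration-appendix}. Since $\{N_{\max} \geq (1+\delta) n b_{\max}\} = \bigcup_{i=1}^K \{N_i \geq (1+\delta) n b_{\max}\}$, a union bound over the $K$ sets yields the first claim.

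For the lower tail the argument is symmetric: since $p_i \geq b_{\min}$, the variable $N_i$ stochastically dominates some $\tilde N_i' \sim \text{Bin}(n, b_{\min})$, hence $P(N_i \leq (1-\delta) n b_{\min}) \leq P(\tilde N_i' \leq (1-\delta) n b_{\min}) \leq \exp(-\delta^2 n b_{\min}/3)$ by the lower-tail half of Proposition \ref{prop-concentration-appendix}, and a second union bound gives the second claim. There is no real obstacle here --- the statement is a textbook exercise on binomial concentration once the stochastic domination is in place; the only thing to keep straight is the direction of the inequality between $p_i$ and $b_{\max}$ (resp.\ $b_{\min}$), which fixes whether one dominates $N_i$ from above or from below.
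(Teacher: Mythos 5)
Your proof is correct and is essentially the approach the paper intends (the paper's own proof is just the one-line remark that it follows from Proposition~\ref{prop-concentration-appendix} and a union bound). Your explicit stochastic-domination step to pass from $\mathrm{Bin}(n,p_i)$ to $\mathrm{Bin}(n,b_{\max})$ or $\mathrm{Bin}(n,b_{\min})$ is the right way to make that one-liner rigorous when the $p_i$ are not all equal.
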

\begin{proof} 
This is a straightforward application of Proposition
\ref{prop-concentration-appendix} using the union bound. 
\end{proof}

When working with $\eps$-graphs or $\knn$-graphs, we often need to
know the degrees of the vertices. As a rule of thumb, the
expected degree of a vertex in the $\eps$-graph is of the order
$\Theta(n \eps^d)$, the expected degree of a vertex in both the
symmetric and mutual $\knn$-graph is of the order $\Theta(k)$. The expected $\knn$-distance
is of the order $\Theta( (k/n)^{1/d})$. Provided the graph is
``sufficiently connected'' , all these rules of
thumb also apply to the minimal and maximal values of these quantities. The following
propositions make these rules of thumb explicit. 

\begin{proposition}[Degrees in the $\eps$-graph] \label{prop-degrees-eps}
Consider an $\eps$-graph on a valid region $\Xcal \subset \R^d$. 
\begin{enumerate} \item 
Then, for all $\delta \in ]0,1]$, the  minimal and maximal degrees in the $\eps$-graph satisfy 
\ba
& P\Big(
d_{\max} \geq (1 + \delta) n \eps^d p_{\max} \eta_d \Big) \;
\leq \; 
n \cdot \exp( - \delta^2 n \eps^d p_{\max} \eta_d/ 3
)\\
& P\Big(
d_{\min} \leq (1 - \delta) n \eps^d p_{\min} \eta_d \alpha \Big) \;
\leq \; 
n \cdot \exp( - \delta^2 n \eps^d p_{\min} \eta_d \alpha/ 3
). 
\ea
In particular, if $n \eps^d / \log n \to \infty$, 
then these probabilities converge to 0 as $n \to \infty$. 
\item If $n \to \infty, \eps \to 0$ and $n \eps^d / \log n \to
  \infty$, and the density $p$ is continuous, then for each interior point $X_i \in \Xcal$ the degree is a
  consistent density estimate: $d_i/ (n \eps^d \eta_d)  \longrightarrow p(X_i)$ a.s. 
\end{enumerate}
\end{proposition}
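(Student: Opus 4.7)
The degree of vertex $v_i$ admits the clean representation $d_i = \sum_{j \neq i} \mathbb{1}\{\|X_i - X_j\| \leq \eps\}$, so conditional on $X_i$ it is $\mathrm{Bin}(n-1, q_i)$-distributed where $q_i := \int_{B_\eps(X_i) \cap \Xcal} p(y)\,dy$. The plan is therefore to (a) uniformly bound $q_i$ from above and below in terms of $\eps^d$, (b) apply the Chernoff bound of Proposition \ref{prop-concentration-appendix} conditional on $X_i$, and (c) take a union bound over $i = 1, \ldots, n$.

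For step (a), the upper bound $q_i \leq p_{\max}\,\eta_d \eps^d$ is immediate from $\vol(B_\eps(X_i)) = \eta_d \eps^d$ and the boundedness of $p$. The lower bound needs more care: if $X_i$ is an interior point with $\mathrm{dist}(X_i,\partial\Xcal)\geq\eps$ then $B_\eps(X_i)\subset\Xcal$ and $q_i \geq p_{\min}\,\eta_d\eps^d$; for a boundary-close $X_i$, the boundary regularity condition in Definition \ref{def-valid-region} gives $\vol(B_\eps(X_i)\cap\Xcal) \geq \alpha \vol(B_\eps(X_i))$, so $q_i \geq \alpha\, p_{\min}\,\eta_d\eps^d$ uniformly in $X_i \in \Xcal$ (note $\alpha \leq 1$, so this also covers interior points). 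Once these uniform bounds are in place, Proposition \ref{prop-concentration-appendix} applied to $d_i \mid X_i$ yields, for any $\delta \in (0,1]$, a conditional deviation probability of order $\exp(-\delta^2 n\eps^d \cdot \mathrm{const}/3)$, which is independent of the realization of $X_i$; integrating out $X_i$ preserves the bound, and the union bound over $n$ vertices produces the stated factor of $n$. The tail estimates tend to zero when $n\eps^d/\log n \to \infty$ because then $n\exp(-c\,n\eps^d) \to 0$ for any fixed $c > 0$.

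For part 2, fix an interior point $X_i$ (meaning $\mathrm{dist}(X_i,\partial\Xcal) > 0$), so that $B_\eps(X_i) \subset \Xcal$ for all sufficiently small $\eps$. Continuity of $p$ at $X_i$ gives the deterministic limit
\begin{equation*}
\frac{q_i}{\eta_d\eps^d} \;=\; \frac{1}{\eta_d\eps^d}\int_{B_\eps(X_i)} p(y)\,dy \;\longrightarrow\; p(X_i)
\qquad \text{as } \eps \to 0.
\end{equation*}
It therefore suffices to show $d_i/(n\eta_d\eps^d) - q_i/(\eta_d\eps^d) \to 0$ almost surely, and for this I invoke part 1 with a slowly shrinking $\delta = \delta_n \to 0$. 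Choosing $\delta_n$ so that $\delta_n^2 \, n\eps^d \geq 2\log n$ (possible because $n\eps^d/\log n \to \infty$) makes the tail probabilities summable in $n$, so Borel--Cantelli delivers the almost sure convergence $d_i/(n\eta_d\eps^d) \to p(X_i)$.

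The only mildly subtle step is the union bound in part 1: the bound on $q_i$ must hold \emph{uniformly} in $X_i$, which is why the boundary regularity constant $\alpha$ appears in the minimum-degree statement. Everything else is a routine conditioning-plus-Chernoff argument combined with the standard kernel-density-estimator convergence in part 2.
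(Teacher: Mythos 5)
Your proof is correct and follows essentially the same strategy as the paper: Part~1 is a conditional Chernoff bound plus a union bound (the paper routes this through Proposition~\ref{prop-counting}), and Part~2 combines the deterministic limit $q_i/(\eta_d\eps^d)\to p(X_i)$ with Borel--Cantelli. You are somewhat more explicit than the paper about conditioning on $X_i$ so that $d_i\mid X_i\sim\mathrm{Bin}(n-1,q_i)$, which is the right way to handle the fact that the $\eps$-balls have random centers.
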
 
\begin{proof} {\em Part 1} follows by applying 
Proposition~\ref{prop-counting} to the balls of radius
$\eps$ centered at the data points. Note that for the bound on
$d_{\min}$, we need to take into account boundary effects as only a
part of the $\eps$-ball around a boundary point is contained in
$\Xcal$. This is where the constant $\alpha$ comes in (recall the
definition of $\alpha$ from the definition of a valid region).  
{\em Part 2} is a standard density estimation argument: the expected degree
of $X_i$ is the expected number
of points in the $\eps$-ball around $X_i$. For $\eps$ small enough, the $\eps$-ball around $X_i$ is
completely contained in $\Xcal$ and the density is approximately constant on this ball because we assumed
the density to be continuous.  The expected
number of points is approximately $n \eps^d \eta_d p(X_i)$ where
$\eta_d$ denotes the volume of a $d$-dimensional unit ball. The result
now follows from Part 1. 
\end{proof}

Recall the definitions of the $k$-nearest neighbor radii: $R_k(x)$
denotes the distance of $x$ to its $k$-nearest neighbor among the $X_i$, and the
maximum and minimum values are denoted 
$R_{k,\max}:= \max_{i=1,...,n} R_k(X_i)$ and $R_{k,\min} :=
\max_{i=1,...,n} R_k(X_i)$. Also recall the definition of the boundary
constant $\alpha$ from the definition of a valid region. 

\begin{proposition}[Degrees in the   $\knn$-graph] 
\label{prop-degrees-knn}
\sloppy
Consider a valid region $\Xcal \subset \R^d$. 
\be
\item Define the constants $a = 1 / (2 p_{\max} \eta_d)^{1/d}$ and 
$\tilde a := 2 / ( \pmin \eta_d \alpha)^{1/d}$. Then

\ba
& P\Big(
R_{k, \min} \leq a \left(\frac{k}{n}\right)^{1/d}
\Big) 
\;\;\leq 
\;\;
n
 \exp(-k / 3)\\
& P\Big(
R_{k, \max} \geq \tilde a \left(\frac{k}{n}\right)^{1/d}
\Big) 
\;\;\leq 
\;\; n
 \exp(-k / 12). 
\ea
If $n \to \infty$ and $k / \log n \to \infty$, then these probabilities converge to 0.
\item 
Moreover, with probability  at least 
$ 1 - n \exp(- c_4 k)$ 
the minimal
  and maximal degree in both the symmetric and mutual $\knn$-graph are
  of the order $\Theta(k)$ (the constants differ). 
\item If the density is continuous, $n \to \infty$, $k / \log n \to
  \infty$ and additionally 
$k/n\rightarrow 0$,  then 
in both the symmetric and the mutual $\knn$-graph, the degree of any
fixed vertex $v_i$ in the interior of $\Xcal$ satisfies $k / d_i \to 1$ a.s. 
\ee
\end{proposition}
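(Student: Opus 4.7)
Both statements reduce, via the equivalence $R_k(X_i) \leq r \iff |\{j \neq i : \|X_i - X_j\| \leq r\}| \geq k$, to counting points in Euclidean balls, which is exactly the setup of Proposition~\ref{prop-counting}. For $R_{k,\min}$ I would fix $r = a(k/n)^{1/d}$ and note that any ball $B(X_i, r)$ has probability mass at most $p_{\max}\eta_d r^d = k/(2n)$ by the choice of $a$. Hence the expected number of other points in any fixed such ball is at most $k/2$, and Proposition~\ref{prop-concentration-appendix} with $\delta = 1$ yields an exponential bound $\exp(-k/3)$, while the union bound over $i = 1,\dots,n$ gives the prefactor $n$. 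For $R_{k,\max}$, I would fix $r = \tilde a (k/n)^{1/d}$; the boundary constant $\alpha$ in the definition of a valid region ensures that even balls centered at points near $\partial\Xcal$ have probability mass at least $p_{\min}\eta_d\alpha r^d = 2k/n$. Applying Proposition~\ref{prop-concentration-appendix} with $\delta = 1/2$ to the binomial lower tail gives the desired $\exp(-k/12)$ bound (after a crude constant adjustment), and the union bound again contributes the factor $n$.

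\textbf{Plan for Part 2.} The upper bound is easy: the mutual $\knn$-degree of $X_i$ is at most $k$ by definition, and the symmetric $\knn$-degree is at most $|B(X_i, R_{k,\max}) \cap \{X_1,\dots,X_n\}|$, which by Proposition~\ref{prop-counting} applied to a ball of radius $\tilde a(k/n)^{1/d}$ is $O(k)$ with probability $1 - n\exp(-c\, k)$. The lower bound is the content of the proposition, and here I use a pleasant monotonicity: if $\|X_i - X_j\| \leq R_{k,\min}$, then simultaneously $X_j \in B(X_i, R_k(X_i))$ and $X_i \in B(X_j, R_k(X_j))$ because $R_{k,\min} \leq R_k(X_i), R_k(X_j)$. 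Hence the mutual $\knn$-degree of $X_i$ is bounded below by the number of sample points in $B(X_i, R_{k,\min})$, and Part 1 combined with Proposition~\ref{prop-counting} (applied to a ball of radius $a(k/n)^{1/d}$) shows this is $\Omega(k)$. Since symmetric $\knn$-degree dominates mutual $\knn$-degree, the same lower bound transfers.

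\textbf{Plan for Part 3.} Here the idea is that, as $k/n \to 0$, the $k$-nearest neighbor radii become a consistent density estimator. Concretely, a standard argument (use concentration of the number of points in $B(X_i, r)$ for a deterministic $r$ and invert) shows $n\eta_d R_k(X_i)^d\, p(X_i) / k \to 1$ almost surely for fixed interior $X_i$, and the same holds for $R_k(X_j)$ uniformly for all $X_j$ in a small neighborhood of $X_i$ by continuity of $p$. For the mutual $\knn$-graph, the degree is at most $k$ and is at least $|\{j : \|X_i - X_j\| \leq \min(R_k(X_i), R_k(X_j))\}|$; since for every one of the $k$ neighbors $X_j$ of $X_i$ the ratio $R_k(X_j)/R_k(X_i) \to 1$, almost all $k$ neighbors satisfy the mutual condition in the limit, giving $k/d_i \to 1$. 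For the symmetric $\knn$-graph the degree is sandwiched between the mutual $\knn$-degree and $|B(X_i, R_k(X_i)) \cup \bigcup_{j : X_i \in k\mbox{-NN}(X_j)} \{X_j\}|$; the same asymptotic symmetry shows that the ``non-mutual'' contribution is $o(k)$, so again $k/d_i \to 1$.

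\textbf{Main obstacle.} Parts 1 and 2 are essentially bookkeeping on top of Proposition~\ref{prop-counting}; the only mild subtlety is the boundary factor $\alpha$, which is built into the definition of a valid region. The one technical step requiring real care is the asymptotic symmetry claim in Part 3 --- namely, that for a fixed interior point $X_i$ the radii $R_k(X_j)$ for the $k$ nearest neighbors $X_j$ of $X_i$ are jointly concentrated around $R_k(X_i)$. This requires a uniform (over $j$) density estimation argument, using only that $p$ is continuous at $X_i$ and $k/n \to 0$ (so the relevant neighborhood shrinks); once this is in hand, both the mutual and symmetric degree statements follow immediately.
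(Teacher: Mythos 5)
Your proposal matches the paper's proof essentially step for step: the same reduction of Part~1 (via the equivalence $R_k(X_i)\le r \Leftrightarrow |\{j:\|X_i-X_j\|\le r\}|\ge k$) to the counting bound of Proposition~\ref{prop-counting} with the binomial concentration of Proposition~\ref{prop-concentration-appendix}, the same observation in Part~2 that any pair within distance $R_{k,\min}$ is automatically a mutual-$k$NN pair so the mutual min-degree is lower-bounded by the point count in a ball of radius $a(k/n)^{1/d}$ while the symmetric max-degree is bounded by the count in a ball of radius $R_{k,\max}$, and the same radius-concentration/asymptotic-symmetry sketch in Part~3. There is no substantive difference in route (the lone arithmetic slip, $p_{\min}\eta_d\alpha\,\tilde a^d k/n = 2^d k/n$ rather than $2k/n$, appears identically in the paper and is harmless).
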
 

\begin{proof}
{\em Part 1. } 
Define the constant $a = 1 / (2 p_{\max} \eta_d)^{1/d}$ and the radius 
$r := a \left({k}/{n}\right)^{1/d}$,  fix a sample
point $x$, and  denote by $\mu(x)$ the
probability mass of the ball around $x$ with radius $r$. Set 
$\mu_{\max} := r^d \eta_d p_{\max} \geq
\max_{x \in \Xcal} \mu(x) 
$. 
Note that  $\mu_{\max} < 1$. Observe that 
$ R_k(x) \leq r$ if and only if there are at
least $k$ data points in the ball of radius $r$ around $x$. 
Let $M \sim Bin(n, \mu)$ and $V \sim Bin(n, \mu_{\max})$. 
Note that by
the choices of $a$ and $r$ we have $E(V) = k/2$. All this leads to 
\ba
P
\Big(
R_k(x) \leq r
\Big) 
\;\;\leq\;\; 
 P
\Big(M \geq k
\Big)
\;\;\leq\;\; 
P\Big( V \geq k 
\Big)
\;\;= \;\; P\Big(V \geq 2 E(V) \Big). 
\ea 
Applying the concentration inequality of Proposition
\ref{prop-concentration-appendix} (with $\delta :=1)$) 
and using a union bound leads to the following result for the minimal $\knn$-radius: 
\ba
P\Big(
R_{k, \min} \leq a \left(\frac{k}{n}\right)^{1/d}
\Big) 
& \leq 
\;\;
P\Big(\exists i \; : \; 
R_{k}(X_i) \leq a \left(\frac{k}{n}\right)^{1/d}
\Big) \\
&\;\;\leq 
\;\;
n 
\max_{i =1,\hd,n} 
\Big(
R_k(X_i) \leq r
\Big) \\
&\leq
n
 \exp(-k / 3). 
\ea
By a similar approach we can prove the analogous statement for the 
maximal $\knn$-radius. Note that for the bound on $R_{k,\max}$ we additionally need to take
into account boundary effects:  at the boundary of $\Xcal$, only a part of the ball
around a point is contained in $\Xcal$, which affects the value of
$\mu_{\min}$. We thus define 
$\tilde a := 2 / ( \pmin \eta_d \alpha)^{1/d}$, $r := \tilde a
(k/n)^{1/d}$, 
$\mu_{\min} := r^d \eta_d p_{\min} \alpha$
where $\alpha \in ]0,1]$ is the constant defined in the valid
region. With $V = Bin(n,\mu_{\min})$ with $EV = 2k$ we continue
similarly to above and get (using $\delta = 1/2$)
\ba
P\Big(
R_{k, \max} \geq \tilde a \left(\frac{k}{n}\right)^{1/d}
\Big) 
\;\;\leq 
\;\; n
 \exp(-k / 12). 
\ea

{\em Part 2. } In the directed $\knn$-graph, the degree of each vertex
is exactly $k$. Thus, in the mutual $\knn$-graph, the maximum degree
over all vertices is upper bounded by $k$, in the symmetric $\knn$-graph the minimum
degree over all vertices is lower bounded by $k$.  

For the symmetric graph, observe that the maximal degree in the graph
is bounded by the maximal number of points in the balls of radius
$R_{k,\max}$ centered at the data points. We know that with high
probability, a ball of radius $R_{k,\max}$ contains of the order
$\Theta(n R_{k,\max}^d)$ points. Using Part 1 we know that with high
probability, $R_{k,\max}$ is of the order $(k/n)^{1/d}$. Thus the
maximal degree in the symmetric $\knn$-graph is of the order
$\Theta(k)$, with high probability.  

In the mutual graph, observe that the minimal degree in the graph is
bounded by the minimal number of points in the balls of radius
$R_{k,\min}$ centered at the data points. Then the statement follows
analogously to the last one.

{\em Part 3, proof sketch. } Consider a fixed point $x$ in the
interior of $\Xcal$.  
We know that both in the symmetric and mutual $\knn$-graph, two points
cannot be connected if their distance is larger than $R_{k,\max}$.  As we
know that $R_{k,\max}$ is of the order $(k/n)^{1/d}$, under the growth
conditions on $n$ and $k$ this radius becomes arbitrarily small. Thus,
because of the continuity of the density, if $n$ is large
enough we can assume that the density in the ball $B(x, R_{k,\max})$ of
radius $R_{k,\max}$ around $x$ is approximately constant. 
Thus, all points $y \in B(x, R_{k,\max})$ have approximately the same
expected $k$-nearest neighbor radius $R := (k / (n \cdot p(x)
\eta_d))^{1/d}$. Moreover, by concentration arguments it is easy to
see that the actual $\knn$-radii only
deviate by a factor $1 \pm \delta$ from their expected values.

Then, with high probability, all points inside of $B(x,R(1-\delta))$
are among the $k$ nearest neighbors of $x$, and all $k$ nearest
neighbors of $x$ are inside $B(x, R(1+\delta))$. On the other hand,
with high probability $x$ is among the $k$ nearest neighbors of all
points $y \in B(x,R(1-\delta))$, and not among the $k$ nearest
neighbors of any point outside of $B(x,R(1+\delta))$. 
Hence, in the mutual $\knn$-graph, with high probability $x$ is
connected exactly to all points $y \in B(x, R(1-\delta))$. In the
symmetric $\knn$-graph, $x$ might additionally be connected to the 
points in $B(x, R(1 + \delta)) \setminus B(x, R(1 - \delta))$. 
By construction, with high probability the number of sample points in these
balls is $(1+\delta)k$ and $(1 - \delta)k$. Driving $\delta$
to 0 leads to the result. 
\end{proof}

\end{document}